\pgfplotsset{compat=1.18}
\theoremstyle{plain}% default
\newtheorem{prototheorem}{Theorem}
\newtheorem{theorem}[prototheorem]{Theorem}
\theoremstyle{plain}% default
\newtheorem{prototheorem2}{theorem}
\newtheorem{corollary}[prototheorem2]{Corollary}
\theoremstyle{plain}% default
\newtheorem{prototheorem3}{theorem}
\newtheorem{lemma}[prototheorem3]{Lemma}
\theoremstyle{definition}
\newtheorem{definition}{Definition}
\theoremstyle{remark}
\newtheorem{remark}{Remark}
\newlength{\dhatheight}
\theoremstyle{remark}
\newcommand{\pos}[2]{#1^{(#2)}}
\newcommand{\R}{\mathbb{R}}
\newcommand{\leftmost}{\mathrm{left}}
\newcommand{\rightmost}{\mathrm{right}}
\newcommand{\calO}{\mathcal{O}}
\newcommand{\calW}{\mathcal{W}}
\newcommand{\shift}{\mathcal{S}}
\newcommand{\Indc}[1]{{\mathbf{1}\left\{{#1}\right\}}}
\newcommand{\N}{\mathcal{N}}
\newcommand{\tran}{^\intercal}
\newcommand{\old}{{\mathrm{old}}}
\newcommand{\Unif}{\mathrm{Unif}}
\newcommand{\bbN}{\mathbb{N}}
\newcommand{\cat}{{\mathrm{categorical}}}
\newcommand{\ext}{\mathrm{ext}}
\newcommand{\orbit}{\mathrm{orbit}}
\newcommand{\micro}{\mathrm{micro}}
\newcommand{\ind}{\mathrm{index}}
\newcommand{\mn}{\mathrm{MN}}
\newcommand{\bp}{\mathrm{BP}}
\newcommand{\joint}{{\mathrm{joint}}}
\newcommand{\bbZ}{{\mathbb{Z}}}
\newcommand{\calZ}{{\mathbb{A}}}
\newcommand{\bfell}{{\boldsymbol{\ell}}}
\newcommand{\numberthis}{\addtocounter{equation}{1}\tag{\theequation}}
\title{The Within-Orbit Adaptive Leapfrog No-U-Turn Sampler}
\author{Nawaf Bou-Rabee\thanks{Department of Mathematical Sciences, Rutgers University, \href{mailto:nawaf.bourabee@rutgers.edu}{\texttt{nawaf.bourabee@rutgers.edu}}}
\and
Bob Carpenter\thanks{Center for Computational Mathematics, Flatiron Institute, \href{mailto:bcarpenter@flatironinstitute.org}{\texttt{bcarpenter@flatironinstitute.org}}}
\and
Tore Selland Kleppe\thanks{Department of Mathematics and Physics, University of Stavanger, 
\href{mailto:mtore.kleppe@uis.no}{\texttt{tore.kleppe@uis.no}}}
\and
Sifan Liu\thanks{Center for Computational Mathematics, Flatiron Institute, 
\href{mailto:sliu@flatironinstitute.org}{\texttt{sliu@flatironinstitute.org}}}
}
\begin{document}

\maketitle

\begin{abstract}
Locally adapting parameters within Markov chain Monte Carlo methods while preserving reversibility is notoriously difficult.  The success of the No-U-Turn Sampler (NUTS) largely stems from its clever local adaptation of the integration time in Hamiltonian Monte Carlo via a geometric U-turn condition.  However, posterior distributions frequently exhibit multi-scale geometries with extreme variations in scale, making it necessary to also adapt the leapfrog integrator’s step size locally and dynamically.  Despite its practical importance, this problem has remained  largely open since the introduction of NUTS by Hoffman and Gelman (2014).
To address this issue, we introduce the Within-orbit Adaptive Leapfrog No-U-Turn Sampler (WALNUTS), a generalization of NUTS that adapts the leapfrog step size at fixed intervals of simulated time as the orbit evolves. At each interval, the algorithm selects the largest step size from a dyadic schedule that keeps the energy error below a user-specified threshold. Like NUTS, WALNUTS employs biased progressive state selection to favor states with positions that are further from the initial point along the orbit.  Empirical evaluations on multiscale target distributions, including  Neal’s funnel and the Stock-Watson stochastic volatility time-series model, demonstrate that WALNUTS achieves substantial improvements in sampling efficiency and robustness compared to standard NUTS.  
\end{abstract}

\section{Introduction \& Motivation}

\paragraph{Hamiltonian Monte Carlo and the Role of Discretization}

Hamiltonian Monte Carlo (HMC) is a Markov chain Monte Carlo (MCMC) method for sampling from probability distributions with  continuously differentiable densities $\mu: \mathbb{R}^d \to \mathbb{R}$ that are known up to normalization \cite{DuKePeRo1987,Ne2011}. HMC generates proposals by simulating a measure-preserving  flow on an extended phase space. This is done by introducing an auxiliary momentum variable  $\rho \in \mathbb{R}^d$ and a positive-definite mass matrix $M \in \mathbb{R}^{d \times d}$, and evolving the system according to the Hamiltonian dynamics corresponding to the Hamiltonian function $H(\theta, \rho) = -\log \mu(\theta) + \frac12 \rho^{\top} M^{-1} \rho$.  At all times, both forward and backward, the resulting Hamiltonian flow is time-reversible, exactly conserves phase-space volume and, in the absence of discretization error, the total energy. Hence, this flow preserves the probability measure on $\mathbb{R}^{2d}$ with density proportional to $e^{-H(\theta,\, \rho)}$. 

This measure-preserving property makes Hamiltonian flows a natural tool for constructing MCMC methods.  However, in most cases, the Hamiltonian flow cannot be computed analytically and must be approximated numerically, most commonly using the leapfrog integrator \cite{HaLuWa2010,BoSaActaN2018}.  Leapfrog is a time-reversible method that preserves volume exactly but only approximately conserves energy. These geometric properties makes it a natural choice for constructing Metropolis proposals, as we briefly review in Section~\ref{sec:short_overview_hmc}. Crucially, however, the leapfrog algorithm is only conditionally stable.  The numerical trajectory may diverge if the step size is too large relative to the local curvature of the log target density, $\log \mu$, as quantified  by the largest eigenvalue in magnitude of its  Hessian  \cite{BoVa2012}.  This imposes constraints on step size selection, and highlights the importance of carefully discretizing the Hamiltonian dynamics to ensure efficient sampling.

\paragraph{The No-U-Turn Sampler}

Two tuning decisions are especially critical to the performance of HMC,
\begin{itemize}
\item how far to integrate Hamilton's equations (the integration time), and
\item how finely to discretize the dynamics (the step size).
\end{itemize}
The No-U-Turn Sampler (NUTS) \cite{HoGe2014} addresses the first question by eliminating the need to tune the integration time a priori.  Instead, NUTS adaptively constructs an orbit by simulating Hamiltonian dynamics in both forward and backward time directions.  The orbit is built recursively using a doubling procedure: at each iteration, a direction (forward or backward) is chosen uniformly at random, and the orbit is extended in that direction by doubling its length.   The result is a balanced binary tree of candidate states, where internal nodes  represent recursive doubling stages and leaves correspond to individual states.

The iteration continues until a U-turn condition is met (see Figure~\ref{fig:nuts_intro}).  This condition checks whether the orbit has begun to reverse direction by evaluating inner products between momentum and position vectors across pairs of states, as we briefly review in Section~\ref{sec:short_overview_nuts}. To ensure reversibility, this condition is applied  recursively across all pairs of left and right subtrees in the binary tree built during orbit expansion.  This construction ensures that the probability of  any given orbit is independent of the starting point within the orbit.

Once the orbit is constructed, a proposal is sampled from the candidate states using a biased progressive sampling scheme that preserves detailed balance \cite{HoGe2014,betancourt2017conceptual}. By adapting the integration time to the local geometry of the target distribution, NUTS achieves robust performance without requiring manual tuning.  This local adaptivity of integration time has made NUTS one of the most widely used MCMC methods for sampling from distributions with continuously differentiable densities \cite{HoGe2014,betancourt2017conceptual,carpenter2016stan,BoOb2024}.

\begin{figure}[t]
    \centering
        \includegraphics[width=0.8\linewidth]{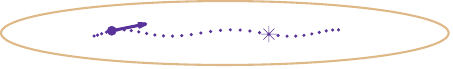} 
    \caption{The line indicates a one-sigma contour of a bivariate Gaussian target. The dots represent an orbit of consecutive leapfrog iterates in position space.  The large purple dot and arrow mark the initial position and momentum, while the asterisk marks the next state selected by the sampler. In this illustration, the U-turn condition in NUTS successfully detects the local scale of the target and terminates the trajectory accordingly.   }
    \label{fig:nuts_intro}
\end{figure}

\paragraph{The Limitations of a Fixed Step Size}

While NUTS effectively eliminates the need to tune the integration time by adapting the orbit length to the local geometry, it still uses a single, fixed step size throughout the entire sampling process. This global step size can be limiting in multiscale settings, where the local curvature of the target distribution may vary dramatically across regions. Because the leapfrog integrator is only conditionally stable, a step size that performs well in flat regions may lead to numerical divergence or large energy errors in regions of high curvature. Conversely, calibrating the step size to accommodate the worst-case curvature results in unnecessarily fine discretization and inefficient exploration in flatter regions.

Recent work has introduced orbit-level  adaptive step-size variants of NUTS that adjust the leapfrog step size globally for each orbit while preserving detailed balance \cite{BouRabeeCarpenterKleppeMarsden2024}. However, these methods do not address the need for finer-grained step size adaptation within each orbit.

\paragraph{Introducing WALNUTS}

To address this limitation, we introduce WALNUTS (Within-orbit Adaptive Leapfrog No-U-Turn Sampler), a generalization of NUTS that incorporates local step size adaptation within each orbit.
The key idea is to allow the leapfrog step size to adapt dynamically to local geometric features of the target distribution, rather than relying on a single, globally calibrated step size. WALNUTS selects the step size independently at each integration step of the orbit, ensuring numerical stability and controlling energy error as the orbit moves through regions of varying curvature. This enables the sampler to integrate stably through high-curvature regions while progressing efficiently through flatter ones without sacrificing detailed balance or requiring manual step size tuning.

To formalize this adaptivity, we distinguish between macro steps and micro steps. A macro step is the unit of simulated time over which candidate states are generated and added to the orbit. Each macro step is implemented by applying the leapfrog integrator over a sequence of micro steps. The micro step size is chosen from a dyadic schedule so that the total energy error over the macro step remains below a user-specified threshold.   At each macro step, WALNUTS searches for the coarsest micro step size that satisfies this energy-based criterion. This allows different segments of the orbit to be integrated at different resolutions, depending on local curvature and stability requirements.

The resulting orbit consists of a sequence of states lying on a fixed macro grid, with each macro step integrated using a locally adapted micro step size.  Importantly, although the micro step size may vary from one macro step to the next, the map from one candidate state to the next remains volume-preserving: an essential property for maintaining reversibility and correctness, as detailed in Section~\ref{sec:reversibility}. Once the full orbit is constructed, WALNUTS samples a single state from among the macro steps using the same biased progressive scheme as NUTS, which preserves detailed balance and ensures correctness. This local adaptivity allows the algorithm to remain both stable in regions of high curvature and efficient in flatter regions without requiring a globally tuned step size.  

\paragraph{Visualization of a WALNUTS Transition Step}

Figure~\ref{fig:walnuts_intro} illustrates the  mechanics of a WALNUTS transition. Each panel shows a different realization, all starting from the same initial condition, in a 2D slice of Neal’s funnel distribution. The orbit is color-coded according to the locally selected micro step size, with red indicating larger values and blue smaller ones.  These gradients visually demonstrate how WALNUTS adjusts its numerical resolution in response to local curvature. Insets display the corresponding weight profiles used during biased progressive sampling. These weights incorporate not only the target density at each state, as in standard NUTS, but also account for the possibility that the reverse trajectory may  follow a different step-size schedule --- ensuring that detailed balance is preserved under local step-size adaptivity.

\begin{figure}[t]
    \centering
        \includegraphics[width=0.31\linewidth]{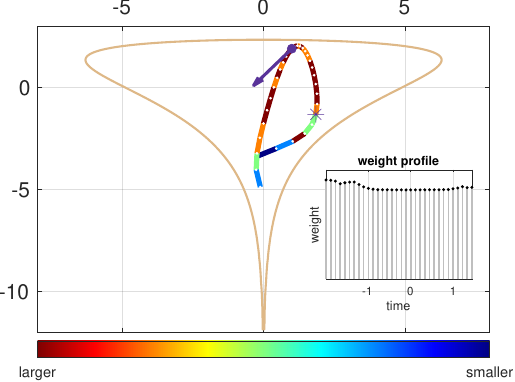} 
    \includegraphics[width=0.31\linewidth]{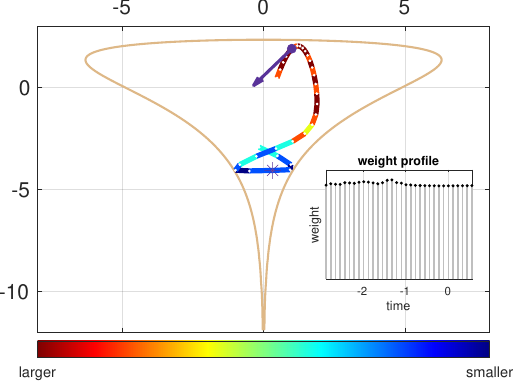} 
    \includegraphics[width=0.31\linewidth]{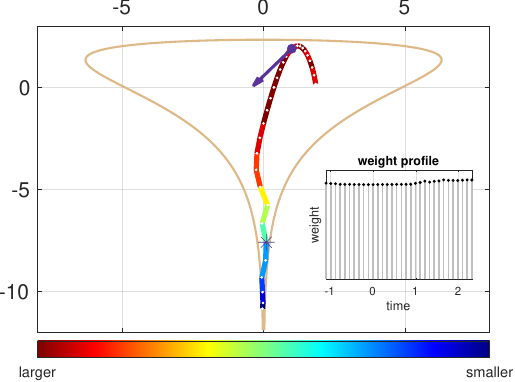}
    \caption{Three stochastic realizations of a WALNUTS transition from the same initial point in a 2D slice of Neal’s funnel, each generated using a different random seed. The purple dot and arrow indicate the initial position and velocity, and the asterisk marks the selected next state. Each panel shows a full orbit represented by a sequence of white dots constructed.  Each macro step is color-coded according to the micro step size required to keep the energy error below a user-specified threshold, with a red-to-blue gradient indicating larger to smaller micro step sizes. Insets show the corresponding weight profiles used in biased progressive sampling. Differences between panels reflect the stochastic nature of state selection and orbit construction forward and backward in time.}
    \label{fig:walnuts_intro}
\end{figure}

\paragraph{General Applicability and Theoretical Considerations}

Although we present adaptive leapfrog integration in the context of NUTS, the underlying idea is more broadly applicable. The same energy-controlled leapfrog step size adaptation can be incorporated, with minimal modification, into other Hamiltonian-based samplers such as standard HMC, multinomial HMC, and other reversible variants, and also non-reversible variants such as the Metropolis-adjusted Kinetic Langevin Algorithm \cite{BouRabeeOberdoerster2024} and generalized HMC \cite{Ho1991,turok2024sampling}. The idea may also be relevant for related Hamiltonian descent optimization methods \cite{fu2025hamiltonian}. 

More generally, while it is well known in geometric integration theory that symplectic integrators with variable step sizes may fail to preserve energy over very long trajectories \cite{stofer1988some, calvo1993development}, this limitation is less restrictive in the context of MCMC. The goals of MCMC differ fundamentally: rather than requiring long-time energy conservation, what matters is the preservation of the correct invariant distribution.  For this purpose, it is sufficient that, conditional on the locally chosen micro step size, the sequence of leapfrog micro steps used to perform a single macro integration step remains volume-preserving. This property ensures that the overall transition kernel is reversible with respect to the target distribution, as established in Section~\ref{sec:reversibility}.

\paragraph{Main Contributions}

This paper introduces WALNUTS, a step-size-adaptive generalization of the No-U-Turn Sampler (NUTS) that dynamically adjusts the leapfrog step size within each macro step of an orbit. The main contributions of the paper are:

\begin{itemize}
\item \textbf{Within-orbit step size adaptation.} We propose a principled framework for dynamic step size selection within Hamiltonian trajectories. At each integration interval (macro step), WALNUTS chooses the coarsest step size from a dyadic schedule that keeps the accumulated energy error below a user-specified threshold. This enables stable integration through high-curvature regions while accelerating through flatter regions. In contrast to global or orbit-level adaptation, this within-orbit mechanism provides the fine-grained control needed to handle multiscale targets effectively.  To our knowledge, this level of local adaptivity has not previously been achieved in reversible HMC-based samplers.

\item \textbf{Reversibility under local adaptivity.} We establish that WALNUTS satisfies detailed balance with respect to the target distribution (Theorem~\ref{thm:walnuts}). To do so, we lift the algorithm to an extended state space $\calZ$ and introduce a carefully designed involution $\Psi : \calZ \to \calZ$, i.e., $\Psi\circ\Psi$ is the identity map (see~\eqref{eq:walnuts-inv}). 
As shown in Lemma~\ref{lem:measure-preserving-involution-Psi}, $\Psi$ preserves the reference measure $\zeta$ on $\calZ$. Lemma~\ref{lem: p joint bp} further shows that the extended target density is invariant under $\Psi$.  Together, these properties imply that WALNUTS satisfies detailed balance without requiring a Metropolis correction---just as in standard NUTS.  

This is a subtle and nontrivial result: while individual leapfrog steps are time-reversible and volume-preserving, their composition with locally varying step sizes typically breaks both properties.  A key contribution of our analysis is the careful design of the measure-preserving involution $\Psi$, which situates WALNUTS firmly within the auxiliary-variable-and-involution framework developed in recent work \cite{AndrieuLeeLivingstone,Glatt-Holtz_Krometis_Mondaini_2023,BouRabeeCarpenterMarsden2024,Glatt-Holtzetal2024}, and  provides a rigorous foundation for incorporating within-orbit local adaptivity into Hamiltonian-based MCMC samplers. 

\item  \textbf{Empirical improvements across multiscale targets.}  We benchmark WALNUTS on challenging posterior geometries, including Neal’s funnel and the Stock-Watson time-series model, and demonstrate substantial gains in sampling efficiency over standard NUTS. The results highlight that WALNUTS adapts effectively to local curvature without interfering with the integration-time adaptivity central to NUTS.  Moreover, WALNUTS  successfully fits complex hierarchical Bayesian models that standard NUTS fails to handle reliably.

\item \textbf{Accessible and efficient implementations.}  To promote adoption and reproducibility, we provide both a pedagogically transparent implementation in python that mirrors the pseudocode in the manuscript, and a memory-efficient C++ version optimized for large-scale applications. These complementary implementations are designed to support both practical deployment and future research. The results and plots can be reproduced from these implementations, which are available from GitHub under a permissive open-source license.\footnote{The code is distributed under the MIT License at \url{https://github.com/bob-carpenter/walnuts}.}  %The Stan models provide log densities and gradients through BridgeStan \cite{roualdes2023bridgestan}. 

\end{itemize}

\paragraph{Tuning Considerations.} WALNUTS requires three user-specified parameters: a mass matrix, the energy error threshold that controls the micro step size selection at each macro step, and the macro step size itself, which determines the temporal spacing between states along the final orbit.   Since the theoretical foundations of warmup adaptation remain underdeveloped,  we explore practical strategies for selecting the energy error threshold and macro step size during an initial warmup phase in Appendix~\ref{sec:tuning}. This phase is used solely to calibrate algorithmic parameters prior to collecting posterior samples. Tuning the mass matrix  presents additional challenges, particularly when it is restricted to be diagonal for computational reasons \cite{kleppe2016adaptive,hird2023quantifying,tran2024tuning}.   One approach to diagonal mass matrix preconditioning is by approximating the target density with a multivariate normal distribution having a diagonal covariance matrix \cite{margossian2023shrinkage,margossian2024variational}.

%These recommendations, grounded in empirical performance across canonical examples, help balance computational efficiency and orbit resolution. As a result, WALNUTS can be used without manual tuning, avoiding much of the trial-and-error typically associated with HMC-based workflows.

\paragraph{Related Work} Here we situate this work in the broader context of locally adaptive HMC methods.   The AutoMALA sampler has been proposed as a self-tuning version of the Metropolis-adjusted Langevin algorithm (MALA) \cite{kleppe2016adaptive,biron2024automala}, which can be viewed as a one-step variant of HMC.  These methods use a forward–reverse stochastic adaptation scheme that preserves detailed balance.   The more recent AutoStep sampler \cite{liu2024autostep} builds on similar principles and is closely related to earlier work on the GIST sampler \cite{BouRabeeCarpenterMarsden2024}, which provides a general framework for locally adaptive HMC (reviewed in Section~\ref{sec:gist}).

A different but related class of methods involves the delayed rejection algorithm \cite{mira2001metropolis,green2001delayed}, which is a generalization of Metropolis-Hastings to a sequence of proposal moves that can start ``bold'' and become  increasingly ``timid''  \cite{haario2006dram}; i.e., if a first  proposal with a bold hyper-parameter is rejected, then a second proposal with a timider hyper-parameter is attempted, and so on.  The hope is that the delayed rejection probability can be made arbitrarily small with increasingly timid proposals.  This strategy has been applied to adaptively select orbit-level leapfrog step sizes in HMC \cite{modi2023delayed}, and more recently, to adapt the leapfrog step size within each generalized HMC step \cite{turok2024sampling}. In both approaches, however, key tuning parameters such as path length and friction must be specified in advance and are not adapted locally.

\paragraph{Organization of Paper}

The remainder of the paper is organized as follows. Section~\ref{sec:preliminaries} reviews background on HMC and NUTS, introducing key concepts such as leapfrog integration, auxiliary variables, involutions, and orbits, and states a general reversibility theorem that provides the theoretical foundation for our method. Section~\ref{sec:walnuts} presents the WALNUTS algorithm, beginning with the construction of a variable step size leapfrog integrator in Section~\ref{sec:variable_step_int}, and then describing the corresponding modifications to orbit construction in Section~\ref{sec:walnuts-orbit} and integration time selection in Section~\ref{sec:walnuts-int-time}. Section~\ref{sec:reversibility} establishes the reversibility of WALNUTS by verifying the conditions of the general theorem introduced in Section~\ref{sec:preliminaries}. Sections~\ref{sec:gaussian} and~\ref{sec:funnel} evaluate the performance of WALNUTS on multivariate Gaussian targets and Neal’s funnel distribution, respectively. Section~\ref{sec:stock-watson} applies WALNUTS to a high-dimensional Bayesian inference problem in macroeconomics. Appendix~\ref{sec:pseudocode} provides pseudocode to support conceptual understanding of the algorithm, while Appendix~\ref{sec:practical} outlines practical strategies for optimizing the implementation and Appendix~\ref{sec:tuning} concludes with some approaches to parameter tuning to be applied during an initial warmup phase.

%\SL{Suggestion on the organization: add a section 2 to introduce the vanilla HMC algorithm and NUTS, including the concepts such as leapfrog, orbit, doubling. This background is necessary for people who are not familiar with these samplers; for people who are familiar, this would define the notations.  For section 3, we first introduce "variable stepsize integrator"; then describe the WALNUTS algorithm, and provide a clean pseudo code.}

%\SL{Suggestion on the numerical studies: The Gaussian target is probably not very interesting: no one uses MCMC to sample Gaussian, and we cannot expect WALNUTS to outperform NUTS. The histogram in Figure 6 shows a clear improvement for Neal's funnel. To play to our strength, I suggest including another one or two funnel-type posteriors. }

\section{Preliminaries}

\label{sec:preliminaries}

We first provide some background on Hamiltonian Monte Carlo and its locally adaptive variants.

\subsection{Hamiltonian Flow} Let $\mu$ be a given target probability measure on $\mathbb{R}^d$ with unnormalized density also denoted by $\mu$.  To sample from $\mu$, HMC-type methods extend the state space $\mathbb{R}^d$ to phase space $\mathbb{R}^{2d}$ by introducing an auxiliary momentum variable $\rho\in\R^d$ and a symmetric positive-definite mass matrix $M\in\R^{d\times d}$. The algorithm approximates a Hamiltonian flow in this extended phase space that preserves the joint distribution $\hat{\mu} = \mu \otimes \N(0, M)$ with density 
\begin{equation} \label{eq:hatmu}
    \hat{\mu}(\theta,\rho) \  \propto \  \exp(-H(\theta, \rho) ) \;, ~~ \text{where} ~~
H(\theta,\rho)  = -\log \mu(\theta) + \frac12 \rho^{\top} M^{-1} \rho  \;.
\end{equation}
The corresponding exact Hamiltonian flow $\varphi_t : \mathbb{R}^{2d} \to \mathbb{R}^{2d}$ maps an initial state $(\theta, \rho) \in \mathbb{R}^{2d}$ to $(\theta_t, \rho_t)$ at time $t \in \mathbb{R}$, where $(\theta_t, \rho_t)$ evolves according to the differential equations
\begin{equation}
\label{eq:hamiltonian_dynamics}
\frac{d}{dt} \theta_t = M^{-1} \rho_t, \quad \frac{d}{dt} \rho_t = \nabla \log \mu(\theta_t), \quad \text{with } (\theta_0, \rho_0) = (\theta, \rho) \;.
\end{equation}
This flow preserves the Hamiltonian function $H$ and Lebesgue measure on $\mathbb{R}^{2d}$.  Consequently, it preserves the extended target distribution $\hat{\mu}$.
\begin{comment}
or equivalently, \begin{equation}  
\frac{d}{dt} z_t = J \nabla H(z_t)   \end{equation}
where $z_t = \begin{pmatrix} \theta_t \\  \rho_t \end{pmatrix}$ is the state, $J = \begin{pmatrix} 0 & I_d \\ 
-I_d & 0 \end{pmatrix}$ is the symplectic matrix and $H$ is the Hamiltonian in \eqref{eq:hatmu}.
\end{comment}

\subsection{Fixed Step Size Leapfrog Integrator} 

Because the exact Hamiltonian flow in \eqref{eq:hamiltonian_dynamics} cannot be computed in closed form for a general target $\mu$, it is approximated numerically. HMC-type methods almost always use the leapfrog integrator for this purpose, as it is computationally efficient and preserves key geometric properties of the Hamiltonian flow.

Fix a step size $h>0$. The leapfrog integrator $\Phi_h: \mathbb{R}^{2d} \to \mathbb{R}^{2d}$  updates a state $(\theta, \rho)$ to $(\theta', \rho') = \Phi_h(\theta, \rho)$ according to 
\begin{equation} \label{eq:leapfrog} \theta' \ = \ \theta+ M^{-1} \left( h \rho + \frac{h^2}2 \nabla \log \mu(\theta) \right)  \;, \quad \rho' \ = \  \rho+\frac h2 \left( \nabla  \log \mu(\theta)+ \nabla  \log \mu(\theta') \right). \end{equation}
Besides being explicit, the leapfrog integrator has several important properties.
\begin{itemize}
\item \textbf{Volume preservation.} The map preserves phase space volume, i.e., $\left|\det(D\Phi_h)\right| \equiv 1$, where $D\Phi_h$ denotes the Jacobian of $\Phi_h$.

\item \textbf{Symmetry.} The inverse map satisfies $\Phi_h^{-1} = \Phi_{-h}$. 

\item \textbf{Time reversibility.} The inverse map satisfies \[
\Phi_h^{-1} = \mathcal{F} \circ \Phi_h \circ \mathcal{F}
\] where $\mathcal{F}: \mathbb{R}^{2d} \to \mathbb{R}^{2d}$ is the momentum flip involution defined by $\mathcal{F}(\theta,\rho) = (\theta,-\rho)$ for $(\theta,\rho) \in \mathbb{R}^{2d}$.   In other words, applying a momentum flip before and after the integrator yields the inverse map.
\item  \textbf{Energy error.} the Hamiltonian is not exactly conserved; in general, $(H\circ\Phi_h-H) \not\equiv 0$.
 \end{itemize}

 For any $L \in \mathbb{Z}$, the $L$-step leapfrog map $\Phi_{h}^L:  \mathbb{R}^{2d} \to \mathbb{R}^{2d}$ allows for both forward ($L > 0$) and backward ($L < 0$) integration steps, and is defined recursively as 
 \begin{align*}
    &\Phi_h^{L+1} = \Phi_h \circ \Phi_h^{L},\quad \Phi_h^{L-1} = \Phi_h^{-1} \circ \Phi_h^{L},
 \end{align*}
 where $\Phi^0_h$ is the identity map.   The time-reversibility of the leapfrog integrator extends to  $\Phi_h^L$: applying a momentum flip before and after $\Phi_h^L$ yields its inverse. This property is captured by the following commutative diagram: 
 \begin{center}
 \begin{tikzcd} (\theta, \rho) \arrow[r, "\Phi_h^L"] \arrow[d, "\mathcal{F}"'] & (\theta', \rho') \arrow[d, "\mathcal{F}"] \\ (\theta, -\rho) & (\theta', -\rho') \arrow[l, "\Phi_h^L"'] \end{tikzcd} \end{center} 
It is important to note, however, that the composition of time-reversible maps is not necessarily time-reversible; see \cite[Theorem 4.2]{BoSaActaN2018}. This observation becomes relevant later when we consider compositions of variable step size leapfrog integrators: although each integrator in the composition is individually time-reversible, their composition generally is not.

%\textcolor{red}{Tore: suggestion: Mention here that adaptive leapfrog is not restricted to NUTS- could be used with minimal changes to other methods based on Hamiltonian dynamics}

\subsection{Gibbs Self-Tuning for Locally Adaptive HMC}

\label{sec:gist}

A general and flexible strategy for constructing  Markov chains that are reversible with respect to a given target distribution $\mu$ is to augment the state space with an auxiliary variable $ v \in \mathbb{V}$, apply a measure-preserving involution on the augmented space $\mathbb{A} = \mathbb{R}^d \times \mathbb{V}$, and incorporate a Metropolis–Hastings correction.  This auxiliary-variable-and-involution strategy underlies many advanced MCMC methods  and has been formalized in recent literature \cite{AndrieuLeeLivingstone,Glatt-Holtz_Krometis_Mondaini_2023,Glatt-Holtzetal2024}. 

More recently, this strategy was extended to a broad class of locally adaptive HMC samplers through the GIST (Gibbs Self-Tuning) framework \cite{BouRabeeCarpenterMarsden2024}. GIST constructs adaptive samplers by  Gibbs sampling the HMC algorithm’s tuning parameters---such as path length, step size, and mass matrix---conditionally on the current state. This unifying framework includes randomized HMC \cite{BoSa2017,BoEb2022,kleppe2022connecting}, multinomial HMC \cite{betancourt2017conceptual,xu2021couplings}, the No-U-Turn Sampler \cite{HoGe2014,betancourt2017conceptual}, and the Apogee-to-Apogee Path Sampler \cite{SherlockUrbasLudkin2023Apogee} as special cases.

We now describe a single transition step of a GIST sampler.
Let $ \mathbb{V}$ denote the auxiliary variable space, and let $p_a( v \mid \theta )$ be a conditional probability density on $ \mathbb{V}$ given $\theta \in \mathbb{R}^d$.   In the locally adaptive HMC samplers we will study, $v$ typically consists of the momentum variable $\rho$ along with additional tuning parameters. The extended target density on the augmented space $\mathbb{A} = \mathbb{R}^d \times \mathbb{V}$ is defined by  \[
\hat{\mu}(\theta, v) \ = \ \mu(\theta) p_a(v \mid \theta) \;,
\]
interpreted as a density with respect to a reference measure $\zeta$ on $\mathbb{A}$. This reference measure specifies how we assign ``volume'' or ``weight'' to regions of the augmented state space. For the original state variable $\theta \in \mathbb{R}^d$, we use the standard Lebesgue measure. For the auxiliary variable $v \in \mathbb{V}$, we use a measure which may be counting measure (if $v$ is discrete) or Lebesgue measure (if $v$ is continuous).

Given the conditional density $p_a( v \mid \theta )$, the extended density $\hat{\mu}(\theta, v)$, an involution $\Psi: \mathbb{A} \to \mathbb{A}$ and the current state $\theta_n \in \mathbb{R}^d$, a transition step of a GIST sampler computes an updated state $\theta_{n+1} \in \mathbb{R}^d$ as follows: \begin{enumerate}
    \item \textbf{Auxiliary variable refreshment.} Sample $v_n \sim p_a( \cdot \mid \theta_n)$.
    \item  \textbf{Involution-based proposal.} Compute a proposal $(\tilde{\theta}_{n+1},\tilde{v}_{n+1}) = \Psi(\theta_n, v_n)$.
    \item \textbf{Metropolis correction.}  Accept the proposal with probability \[
    \alpha(\theta_n, v_n) = \min\left( 1, \frac{\hat{\mu}(\tilde{\theta}_n, \tilde{v}_n)}{\hat{\mu}(\theta_n, v_n)} \right)
    \]
    and set \[
\theta_{n+1} = \begin{cases} \tilde{\theta}_n & \text{with probability $ \alpha(\theta_n, v_n)$, } \\
\theta_n & \text{otherwise.} 
\end{cases}
\]
\end{enumerate}
  Consider the Markov chain on $\mathbb{R}^d$ obtained by iterating this GIST transition step from a given initial distribution. Since the auxiliary variables are fully resampled in each step and discarded after they are used, the marginal chain on $\theta$ inherits reversibility with respect to $\mu$, as formalized below.

\medskip

\begin{theorem}[e.g., {\cite{BouRabeeCarpenterMarsden2024}}] \label{thm:AVM_reversibility}
Suppose $\Psi: \mathbb{A} \to \mathbb{A}$ is an involution (i.e., $\Psi^2 = \mathrm{id}$) and preserves the reference measure $\zeta$ on $\mathbb{A}$. Then, the Markov chain defined by the GIST sampler is reversible with respect to $\mu$.
\end{theorem}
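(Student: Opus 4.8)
The plan is to verify detailed balance for the marginal $\theta$-chain directly, exploiting that the refreshment step $v_n \sim p_a(\cdot\mid\theta_n)$ is exactly what turns a $\mu$-distributed $\theta_n$ into a pair $(\theta_n,v_n)$ whose law has density $\hat\mu(\theta,v)=\mu(\theta)p_a(v\mid\theta)$ with respect to $\zeta$ (this is just the disintegration of a joint law into marginal-times-conditional, together with the definition of $\zeta$ as Lebesgue on $\mathbb{R}^d$ times Lebesgue-or-counting on $\mathbb{V}$). Fix bounded measurable $f,g:\mathbb{R}^d\to\mathbb{R}$, assume $\theta_n\sim\mu$, and expand $\mathbb{E}[f(\theta_n)g(\theta_{n+1})]$ by conditioning on $(\theta_n,v_n)$. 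Writing $z=(\theta,v)$ and letting $\pi_\theta$ denote projection onto the $\mathbb{R}^d$-coordinate, the GIST step decomposes into an acceptance branch contributing $\int_{\mathbb{A}}\zeta(dz)\,\hat\mu(z)\,\alpha(z)\,f(\pi_\theta z)\,g(\pi_\theta\Psi(z))$ and a rejection branch contributing $\int_{\mathbb{A}}\zeta(dz)\,\hat\mu(z)\,(1-\alpha(z))\,f(\pi_\theta z)\,g(\pi_\theta z)$. The rejection branch is manifestly invariant under swapping $f$ and $g$, so it suffices to prove the same for the acceptance branch.

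For the acceptance branch I would use two facts. First, the elementary identity $\hat\mu(z)\,\alpha(z)=\min\{\hat\mu(z),\hat\mu(\Psi z)\}$, whose right-hand side is symmetric under $z\mapsto\Psi z$ because $\Psi$ is an involution. Second, since $\Psi$ preserves $\zeta$ by hypothesis, $\int_{\mathbb{A}}\zeta(dz)\,F(\Psi z)=\int_{\mathbb{A}}\zeta(dz)\,F(z)$ for every integrable $F$. Applying this change of variables to $F(z)=\hat\mu(z)\alpha(z)\,f(\pi_\theta z)\,g(\pi_\theta\Psi z)$, and using $\Psi\circ\Psi=\mathrm{id}$ together with the $\Psi$-symmetry of $\hat\mu(z)\alpha(z)$, converts the acceptance contribution into $\int_{\mathbb{A}}\zeta(dz)\,\hat\mu(z)\alpha(z)\,f(\pi_\theta\Psi z)\,g(\pi_\theta z)$, which is precisely the acceptance contribution with $f$ and $g$ interchanged. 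Combining with the (already symmetric) rejection branch yields $\mathbb{E}[f(\theta_n)g(\theta_{n+1})]=\mathbb{E}[g(\theta_n)f(\theta_{n+1})]$, i.e.\ reversibility of the marginal chain with respect to $\mu$; invariance of $\mu$ is the special case $g\equiv 1$. Normalization of $\mu$ plays no role, since it cancels inside $\alpha$ and appears as a common overall constant.

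I do not expect a genuine obstacle here; the only things requiring care are measure-theoretic bookkeeping. The crucial step to get right is the pushforward/change-of-variables identity: it is exactly the $\zeta$-preservation of $\Psi$ that licenses $\int F\circ\Psi\,d\zeta=\int F\,d\zeta$, and one should note that the set $\{\hat\mu(z)=0\}$ — where $\alpha$ and the branch weights might be ill-defined — is $\hat\mu\zeta$-null and therefore irrelevant to every integral above. A secondary, purely notational point is to confirm that $\pi_\theta\circ\Psi$ is indeed the map sending the post-involution state to the reported next position, which is immediate from the definition of the transition step. All of these are covered by the hypotheses we are allowed to assume, so the proof collapses to the two-line symmetrization of the acceptance branch described above.
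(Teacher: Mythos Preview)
The paper does not supply its own proof of this theorem: it is stated with a citation and then used as a black box throughout. Your argument is correct and is the standard route for this result---splitting into acceptance and rejection branches, using the identity $\hat\mu(z)\alpha(z)=\min\{\hat\mu(z),\hat\mu(\Psi z)\}$ to obtain $\Psi$-symmetry, and then applying the change of variables licensed by $\zeta$-preservation of $\Psi$. There is nothing to compare against here, but your proposal is a clean and complete proof.
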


In the following subsections, we illustrate how this GIST framework gives rise to increasingly sophisticated samplers, beginning with HMC with a fixed integration time, followed by biased progressive HMC with a randomized integration time, and culminating in the No-U-Turn Sampler.

\subsection{Fixed Integration Time Hamiltonian Monte Carlo}

\label{sec:short_overview_hmc}

Hamiltonian Monte Carlo (HMC) with a fixed integration time not only fits naturally into the auxiliary-variable framework based on measure-preserving involutions, it may well have inspired it.  In this setting, the auxiliary variable is a momentum vector $\rho \in \mathbb{R}^d$ with conditional density  \[
p_a(\rho \mid \theta) = \mathcal{N}(\rho \mid 0, M) =  \frac{1}{(2\pi)^{d/2} \det(M)^{1/2}} \exp\left(-\tfrac{1}{2} \rho^{\top} M^{-1} \rho \right),
\]
corresponding to a Gaussian distribution with mean zero and covariance matrix $M$, independent of $\theta$. The augmented space is $\mathbb{A} = \mathbb{R}^{2d}$, the reference measure $\zeta$ is the standard Lebesgue measure on $\mathbb{R}^{2d}$, and the augmented target distribution is given by \eqref{eq:hatmu}. 

The involution $\Psi: \mathbb{A} \to \mathbb{A}$ is defined by simulating Hamiltonian dynamics using $i\in\bbN$ leapfrog steps with a fixed step size $h>0$, followed by a momentum flip:
\[
\Psi = \mathcal{F} \circ \Phi_h^i ,
\] 
where $\Phi_h^i$ is the $i$-step leapfrog integrator, and $\mathcal{F}(\theta, \rho) = (\theta, -\rho)$ denotes the momentum-flip map. This composition defines a volume-preserving (i.e., $\zeta$-preserving) involution on the augmented space $\mathbb{A}$.

Thus, HMC with fixed integration time corresponds to the following instance of a GIST sampler. The steps of this sampler are summarized below and illustrated schematically in the diagram that follows. \begin{enumerate}
    \item \textbf{Auxiliary variable refreshment.} Sample $\rho_n \sim \mathcal{N}(0, M)$.
    \item \textbf{Involution-based proposal.} Compute $(\tilde{\theta}_{n+1}, \tilde{\rho}_{n+1}) = \Psi(\theta, \rho) = \mathcal{F} \circ \Phi_h^i(\theta_n, \rho_n)$.
    \item \textbf{Metropolis correction.} Accept the proposal with probability
    \[
    \alpha((\theta_n, \rho_n),(\tilde{\theta}_{n+1},\tilde{\rho}_{n+1})) = \min\left(1, \frac{\hat{\mu}(\tilde{\theta}_{n+1}, \tilde{\rho}_{n+1})}{\hat{\mu}(\theta_n, \rho_n)}\right)
    = \min\left(1, \exp\left([H(\theta_n, \rho_n) - H(\tilde{\theta}_{n+1}, \tilde{\rho}_{n+1})]\right)\right).
    \]
\end{enumerate}
After the accept/reject step, the auxiliary variable is discarded. The proposal map $\Psi = \mathcal{F} \circ \Phi_h^L$ is a volume-preserving involution: it satisfies $\Psi^2 = \mathrm{id}$ and $\left|\det(D\Psi)\right| = 1$. By Theorem~\ref{thm:AVM_reversibility}, it follows that the resulting Markov chain on $\theta$ is reversible with respect to $\mu$.

\begin{center}
\begin{tikzpicture}[>=stealth, scale=1.]

% Nodes
\node[draw, rounded corners, align=center] (start) at (-1.2,0) {$\theta_n$ \\[1pt] current state};
\node[draw, rounded corners, align=center] (rho) at (2.1,0) {sample $\rho_n \sim \mathcal{N}(0,M)$ \\[1pt] auxiliary variable};
\node[draw, rounded corners, align=center] (involution) at (6.3,0) {propose \\ $(\tilde{\theta}_{n+1},\tilde{\rho}_{n+1}) = \Psi(\theta_n,\rho_n)$};
\node[draw, rounded corners, align=center] (accept) at (10.25,0) {accept/reject \\ using $\hat{\mu}$};
\node[draw, rounded corners, align=center] (end) at (13,0) {$\theta_{n+1}$ \\[1pt] next state};

% Arrows
\draw[->] (start) -- (rho);
\draw[->] (rho) -- (involution);
\draw[->] (involution) -- (accept);
\draw[->] (accept) -- (end);

% Decorations
\node[below=5pt of rho] {\small Gibbs refreshment};
\node[below=5pt of involution] {\small involution: $\Psi=\mathcal{F} \circ \Phi_h^i$};
\node[below=5pt of accept] {\small corrects for energy error};
\end{tikzpicture}
\end{center}

To better understand why the Metropolis correction is necessary, note that applying $\Psi$ simply swaps the roles of the current and proposed states without changing phase space volume. If $\Psi$ exactly preserved the joint density  (i.e., if $\hat{\mu} \circ \Psi = \hat{\mu}$), no Metropolis correction would be needed. However, because the leapfrog integrator introduces energy error, $\Psi$ does not exactly preserve $\hat{\mu}$, and the acceptance probability serves to correct for this discrepancy.

By definition of the acceptance probability, we have
\[
\hat{\mu}(\theta_n, \rho_n) \; \alpha((\theta_n, \rho_n),(\tilde{\theta}_{n+1},\tilde{\rho}_{n+1})) = \hat{\mu}(\tilde{\theta}_{n+1}, \tilde{\rho}_{n+1}) \; \alpha((\tilde{\theta}_{n+1},\tilde{\rho}_{n+1}), (\theta_n, \rho_n)),
\] 
so the product $\hat{\mu}(\theta_n, \rho_n) \; \alpha((\theta_n, \rho_n,(\tilde{\theta}_{n+1},\tilde{\rho}_{n+1}))$ is symmetric in the current state $(\theta_n, \rho_n)$ and the proposed state $(\tilde{\theta}_{n+1}, \tilde{\rho}_{n+1})$.  This symmetry, combined with the volume-preserving and involutive nature of $\Psi$, ensures that HMC is reversible with respect to $\hat{\mu}$, and thus that the marginal chain on $\theta$ is reversible with respect to $\mu$.

As we will see in the next subsection, this logic extends naturally to biased progressive HMC, where the auxiliary variables include a random integration time.

\subsection{Randomized Integration Time with  Biased Progressive HMC}

\label{sec:short_overview_biased_progressive_hmc}

Biased progressive HMC introduces randomness into the integration time while preserving reversibility.  This is achieved through an extended set of auxiliary variables that consist of a leapfrog trajectory, called an \emph{orbit}, and a randomly selected index that labels a point along this orbit.  This index determines the next state of the chain and can be interpreted as a randomized integration time.

We introduce precise notation and structure here not only to support the proofs that follow, but also because this perspective helps clarify key ideas underlying NUTS. In particular, it provides a clear view of how NUTS operates on an extended space, where randomized integration times emerge naturally and reversibility is maintained without Metropolis-style rejections.

Let $h>0$ be a fixed step size.  Throughout this paper, an orbit refers to a sequence of phase space points computed using the leapfrog integrator and evaluated on a uniformly spaced time grid: \[
 t_k \ := \ k h~~\text{where}~~k \in \mathbb{Z} \;.
 \] 

\begin{definition} \label{defn:orbit}
An \emph{orbit} $\calO \subset \mathbb{R}^{2d}$ is a consecutive sequence of states generated by applying the leapfrog integrator with fixed step size $h>0$ to a given initial condition $(\theta, \rho) \in \mathbb{R}^{2d}$.  It takes the form:  \[
\calO \ = \  ((\theta_a, \rho_a), (\theta_{a+1}, \rho_{a+1}), \dots, (\theta_b, \rho_b)) \;,
\] where $a, b \in \mathbb{Z}$ with $a \le b$.  Each pair $(\theta_k, \rho_k) = \Phi_k^k(\theta, \rho)$ represents the state at time $t_k= k h$ obtained by applying leapfrog steps forward (if $k>0$) or backward (if $k<0$) in time from the initial point, for \[
k \in a{:}b = \{ a, a+1, \dots, b \}.
\] The \emph{length} of the orbit, denoted $|\calO|$, is $b - a + 1$.

\end{definition}

We  define a simple  operation to concatenate orbits, which will be used in orbit construction.

\medskip

\begin{definition} \label{defn:concatenation}
Let $a, b, c \in \mathbb{Z}$ with $a \le b < c$.
Given two orbits  \[
\calO=((\theta_a, \rho_a), \dots, (\theta_b, \rho_b)) \quad \text{and} \quad \widetilde{\calO}=((\theta_{b+1}, \rho_{b+1}),  \dots, (\theta_c, \rho_c)) \;,
\] their \emph{concatenation}, denoted $\calO \odot  \widetilde{\calO}$, is defined as
\[ \calO \odot  \widetilde{\calO} \ =\ ((\theta_a, \rho_a), \dots, (\theta_c,\rho_c))\;, \]
with length $|\calO \odot  \widetilde{\calO}| = c-a+1$.
\end{definition}

Biased progressive HMC augments the state space with the following auxiliary variables:
\begin{itemize}
  \item a momentum vector $\rho \in \mathbb{R}^d$,
\item an orbit $\calO \subset \mathbb{R}^{2d}$ generated from the initial condition $(\theta, \rho)$ via forward and/or backward integration,
\item an index $i \in \mathbb{Z}$ that selects a state within $\calO$ to serve as the next state of the Markov chain.
\end{itemize}  
We consider orbits of fixed length $2^m$ for some fixed $m \in \mathbb{N}$. Each orbit is determined by an initial condition $(\theta, \rho)$ and the index $b \in 0 {:} (2^m - 1)$ labelling its rightmost element. 
 
The augmented state space is 
\[
\mathbb{A} \ := \  \mathbb{R}^{d} \times \mathbb{R}^d \times \mathbb{N} \times \mathbb{Z} ,
\] 
with each augmented state denoted by $z = (\theta, \rho, b, i) \in \mathbb{A}$.   Let $\zeta$ denote the reference measure on $\mathbb{A}$,  given by the product of Lebesgue measure on the continuous components and counting measure on the discrete components.  The joint density on $\mathbb{A}$ takes the form
\[
p_{\joint}(z) \ \propto  \ e^{-H(\theta,\rho)} \cdot p_{\orbit}( b \mid \theta,\rho) \cdot p_{\ind}(i \mid \theta,\rho, b ) ,
\] where the proportionality symbol reflects that $e^{-H(\theta,\rho)}$ may not be normalized.
Here, 
$p_{\orbit}( b \mid \theta,\rho) = \Unif(b \mid 0 {:} (2^m - 1) )$, i.e. $b$ is uniformly distributed over $0{:}(2^m-1)$, and $p_{\ind}(i \mid \theta,\rho, b )$ is defined implicitly by the orbit construction procedure.
 % -based orbit construction procedure that uses a binary representation of $b$. 
To construct the orbit, we draw $m$ independent Bernoulli$(1/2)$ random variables $B_1, \dots, B_m$, which determine the rightmost index of the orbit as
\[
b = \sum_{j=1}^m B_j 2^{j-1} \;,
\]  
and the leftmost index as $a = b - 2^m + 1$. Given $(\theta, \rho)$, the orbit $\calO$ and integration time index $i$ are sampled through the following recursive procedure.
\begin{enumerate}
  \item \textbf{Initialization.}
  Start with the singleton orbit $\mathcal{O}_0 = ((\theta, \rho))$, and set $a_0 = b_0 = i_0 = 0$.
  Sample $m$ i.i.d. Bernoulli$(1/2)$ random variables $B = (B_1, \dots, B_m)$.

  \item \textbf{Recursive construction.} For each $k = 0, 1, \dots, m - 1$:
  \begin{enumerate}
    \item \textbf{Current orbit.} Let $\mathcal{O}_k$ be the current orbit with index range $a_k{:}b_k$.

    \item \textbf{Extension.} Build an extension orbit $\mathcal{O}_k^{\ext}$ using leapfrog steps over the index range $a_k^{\ext}{:}b_k^{\ext}$ where
    \[
    a_k^{\ext} = \begin{cases} b_k + 1 & \text{if } B_{k+1} = 1, \\ a_k - 2^k & \text{if } B_{k+1} = 0, \end{cases}
    \quad
    b_k^{\ext} = \begin{cases} b_k + 2^k & \text{if } B_{k+1} = 1, \\ a_k - 1 & \text{if } B_{k+1} = 0. \end{cases}
    \]

    \item \textbf{Proposal index.} Sample $i_k^{\ext}$ from a categorical distribution over $a_k^{\ext} {:} b_k^{\ext}$ with probabilities proportional to the weights  $w_j \propto e^{-H(\theta_j, \rho_j)} $ where $j \in a_k^{\ext} {:} b_k^{\ext}$.

    \item \textbf{Metropolis acceptance.} Accept $i_k^{\ext}$ with probability:
    \begin{align}\label{equ: acc prob}
    \alpha((a_k, b_k), (a_k^{\ext}, b_k^{\ext})) = \min\left(1, \frac{\sum_{j \in a_k^{\ext}{:} b_k^{\ext}} e^{-H(\theta_j, \rho_j)}}{\sum_{j \in a_k{:}b_k} e^{-H(\theta_j, \rho_j)}} \right) \;. 
    \end{align}
    If accepted, $i_{k+1} = i_k^{\ext}$; otherwise, $i_{k+1} = i_k$.

    \item \textbf{Update orbit.} Concatenate to form the next orbit:
    \[
    \mathcal{O}_{k+1} = \begin{cases}
      \mathcal{O}_k \odot \mathcal{O}_k^{\ext} & \text{if } B_{k+1} = 1, \\
      \mathcal{O}_k^{\ext} \odot \mathcal{O}_k & \text{if } B_{k+1} = 0.
    \end{cases}
    \]
  \end{enumerate}

  \item \textbf{Output.} Return the final orbit $\mathcal{O}$ of length $2^m$ and index $i = i_m$.
\end{enumerate}

This recursive procedure  samples $i_{k+1}$ from the following mixture
\begin{align*}
&p_{\ind}(i_{k+1} \mid \theta, \rho, b_{k+1})
= \alpha((a_k, b_k), (a_k^{\ext}, b_k^{\ext}))  \cdot \frac{e^{-H(\theta_{i_{k+1}}, \rho_{i_{k+1}})}}{\sum_{j \in [a_k^{\ext} : b_k^{\ext}]} e^{-H(\theta_j, \rho_j)}} \cdot \Indc{i_{k+1} \in a_k^{\ext}{:}b_k^{\ext} } \\
& \qquad  + \left[ 1 -  \alpha((a_k, b_k), (a_k^{\ext}, b_k^{\ext})) \right] \cdot p_{\ind}(i_k \mid \theta, \rho, b_k) \cdot \Indc{i_{k+1} \in a_k{:}b_k},
\end{align*}
where $\alpha((a_k, b_k), (a_k^{\ext}, b_k^{\ext}))$ is the Metropolis acceptance probability defined in \eqref{equ: acc prob}.
The key to reversibility is that $\alpha$ is symmetric with respect to swapping the current and extension orbits, provided each orbit is weighted by the total stationary probability it contributes; that is, the sum of $e^{-H(\theta_j, \rho_j)}$ over its indices.

More precisely, let $i$ be the final index.  Then  either $i=0$ or there exists $k \in \{1, \dots, m\}$ such that, \[
i \in  a_k^{\ext}{:}b_k^{\ext} \;.
\]   In either case, we obtain the identity \begin{equation} \label{bphmc:index}
\begin{aligned}
& e^{-H(\theta_0, \rho_0)}  e^{-H(\theta_{i}, \rho_{i})}   \alpha((a_k,b_k), (a_k^{\ext}, b_k^{\ext}))  \sum_{j \in a_k {:} b_k} e^{-H(\theta_j, \rho_j)} \\
& \quad  = e^{-H(\theta_i, \rho_i)} e^{-H(\theta_0, \rho_0)}   \alpha( (a_k^{\ext}, b_k^{\ext}),(a_k,b_k)) \sum_{j \in a_k^{\ext} {:} b_k^{\ext}} e^{-H(\theta_j, \rho_j)} \;. 
\end{aligned}
\end{equation} Additionally, the leapfrog identity $\Phi_h^{b - i}(\theta_i, \rho_i) = \Phi_h^b(\theta, \rho)$ implies that the conditional density over orbits satisfies
\begin{equation} \label{bphmc:orbit}
p_{\orbit}(b \mid \theta, \rho) = p_{\orbit}(b - i \mid \theta_i, \rho_i).
\end{equation}

Now define the map $\Psi : \mathbb{A} \to \mathbb{A}$ by
\begin{equation} \label{bphmc:involution}
\Psi(\theta, \rho, b, i) = (\Phi_h^i(\theta, \rho), b - i, -i),
\end{equation} 
which is a $\zeta$-preserving involution.  Combining the identities \eqref{bphmc:index} and \eqref{bphmc:orbit}, we find that the joint density is invariant under $\Psi$
\[
p_{\text{joint}}(\theta, \rho, b, i) = p_{\text{joint}} \circ \Psi(\theta, \rho, b, i).
\]
In other words, the map $\Psi$ leaves the joint distribution unchanged. As a result, the acceptance probability in the corresponding GIST sampler is always equal to one, and no Metropolis correction is required.

The diagram below summarizes the update steps of biased progressive HMC. It highlights the absence of a Metropolis step, since the proposal involution $\Psi$ preserves both the reference measure $\zeta$ and the joint density.
\begin{center}
\begin{tikzpicture}[>=stealth, node distance=3.2cm, scale=1, every node/.style={scale=1}]
  % Nodes
  \node[draw, rounded corners, align=center] (start) at (-2,0) 
    {$\theta_n$ \\[1pt] current state};
    
  \node[draw, rounded corners, align=center, right of=start] (orbit)
    {sample orbit $\calO$ \\[1pt] via doubling};
    
  \node[draw, rounded corners, align=center, right of=orbit] (index)
    {sample index $i$ \\[1pt] from $\calO$};
    
  \node[draw, rounded corners, align=center, right of=index] (involution)
    {apply involution \\ $\Psi(\theta_n, \rho_n, b, i)$};
    
  \node[draw, rounded corners, align=center, right of=involution] (end)
    {$\theta_{n+1}$ \\[1pt] next state};

  % Arrows
  \draw[->] (start) -- (orbit);
  \draw[->] (orbit) -- (index);
  \draw[->] (index) -- (involution);
  \draw[->] (involution) -- (end);

  % Comment
  \node[below=.2cm of index, align=center] 
  {\small No Metropolis correction step is required, because \\ 
   \small the involution  $\Psi$ preserves  both $\zeta$ and the joint density};
\end{tikzpicture}
\end{center}

By Theorem~\ref{thm:AVM_reversibility}, the resulting Markov chain is reversible with respect to $\mu$.
We will return to this point in Section~\ref{sec:reversibility}, where we extend this reversibility result to the full WALNUTS algorithm, which jointly adapts both the integration time and the step size. Before that, we consider a conceptually important special case of biased progressive HMC in which the leapfrog integrator is replaced by the exact Hamiltonian flow.

\subsection{Idealized Case: Exact Biased Progressive HMC}

To better understand the behavior of biased progressive HMC, it is instructive to consider an idealized version in which the leapfrog integrator $\Phi_h$ with fixed step size $h>0$ is replaced with the exact Hamiltonian flow $\varphi_h$. In this setting, no numerical integration error is introduced, and the Hamiltonian is conserved exactly at each point along the orbit. As a result, all the Metropolis-style acceptance probabilities are equal to one, and every proposal index is accepted.

This idealization  provides valuable insight into the behavior of biased progressive HMC when leapfrog integration errors are negligible. In particular, the distribution of the final integration time index $i$ becomes independent of the target distribution, depending only on the randomized orbit construction procedure. This yields a clean mathematical description of the sampling law for $i$, and serves as a useful benchmark for understanding how the actual algorithm deviates from ideal behavior in practice.

Under exact integration, the orbit is constructed by repeatedly applying the exact Hamiltonian flow map $\varphi_h$ rather than the leapfrog integrator. The orbit construction follows the same recursive doubling procedure as before, but without any accept/reject steps. At each doubling step, $2^k$ new states are generated by applying $\varphi_h$, either forward or backward in time with equal probability. After $m$ such steps, the final orbit consists of $2^m$ consecutive states, each lying exactly on the Hamiltonian trajectory through the initial point.

The integration time index $i$ is then drawn uniformly from the extension added during the final doubling step.  Although this rule is simple (just uniform sampling from the last added segment) the resulting distribution of $i$ is nonuniform, since the location of the final extension depends on the random sequence of doubling directions. We now briefly review this construction and describe its implications.

Let $m \ge 1$ be the number of doublings. Begin with the singleton orbit $\calO_0 = ((\theta, \rho))$ and initial indices $a_0 = b_0 = 0$. For each doubling step $k \in \{0, \dots, m - 1\}$, extend the current orbit by appending $2^k$ consecutive states generated by repeatedly applying the exact Hamiltonian flow $\varphi_h$ to either the left or right end, each with probability $1/2$. After $m$ doublings, the resulting orbit contains $2^m$ phase space states indexed by consecutive integers that include $0$.

Let $\calO_{m-1}^{\ext}$ denote the extension added during the final doubling step ($k = m - 1$), with index range $a_k^{\ext}{:}b_k^{\ext}$ and length $2^{m-1}$. The final integration time index is then sampled uniformly from this index range:
\[
i \;\sim\; \operatorname{Unif}\bigl(a_k^{\ext}{:} b_k^{\ext}\bigr) \;,
\] giving each index in the last-added extension equal weight. The next results describe the distribution of $i$, as illustrated in Figure~\ref{fig:tri}, and provide a closed-form expression for $\mathbb{E}[\,|i|\,]$.

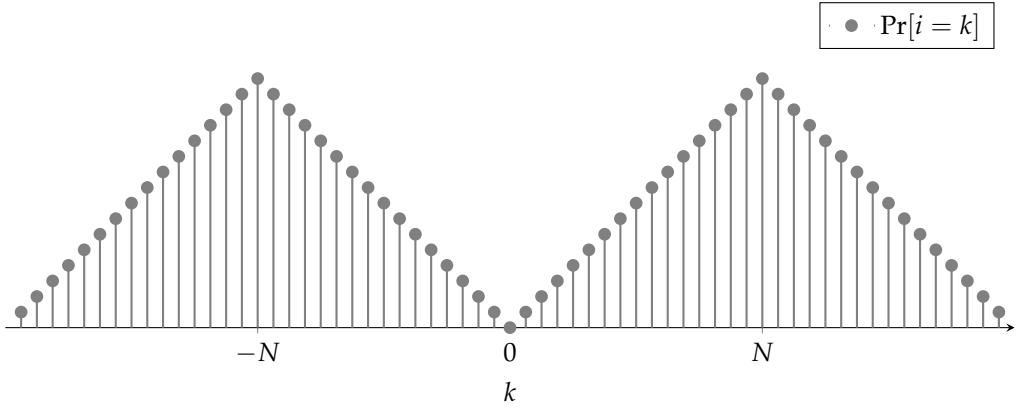
\begin{figure}[t]
\centering
\begin{tikzpicture}
        % Parameters for the triangular distribution
        \def\k{5} % This corresponds to 2^k
     
    \begin{axis}[
        width=15cm, height=6cm, 
        xlabel={$k$},
        ylabel={$\Pr[ i' = k ]$ (PMF)},
%        domain=-\piii:\piii, 
%        samples=5000,
        grid=none,
        xtick = {-2^(\k-1), 0, 2^(\k-1)},
        xticklabels={$-N$, $0$, $N$},
        xtick distance=1,
        ytick distance=3,
        ymin=0,ymax=1/(1.5*2^(\k-1)),
        xmin=-(2^(\k)-1)-1, xmax=2^(\k)-1+1,
        yticklabels={},
        axis x line=bottom,
        axis y line=none,
        %legend pos=north east,
        cycle list name=color list]
        
        % Define the PMF using the given formula
        \addplot [
            thick, 
            gray, 
            domain=-(2*2^(\k-1)-1):(2*2^(\k-1)-1), 
            samples=2*(2^\k-1)+1, 
            ycomb, 
            mark=*] 
        {min(abs(x)/(2*(2^(\k-1))^2), (2*2^(\k-1) - abs(x))/(2*(2^(\k-1))^2))};
        
        \addlegendentry{$\Pr[ i = k ]$}
    \end{axis}
\end{tikzpicture}
\caption{\textit{Probability mass function of the integration time index $i$ in \emph{exact} biased progressive HMC with $m=5$. The triangular shape reflects the random location of the final extension orbit from which $i$ is uniformly sampled.}} 
\label{fig:tri}
\end{figure}

\medskip

\begin{theorem} \label{thm:law_of_iprime}
Fix an integer $m \ge 1$ and set $N=2^{m-1}$.
Let $i$ denote the integration time index selected by the exact biased progressive sampling procedure with uniform weights. Then $i$ follows the symmetric discrete triangular law:
\begin{equation} \label{eq:pmf_of_ip}
\Pr\bigl[i=k\bigr] \ = \ 
\frac{1}{2N^{2}} \, \min( |k|, 2N-|k|) \; \mathbf{1}\{1 \le |k| \le 2 N-1 \}
\end{equation}
for $k \in (1-2N){:}(2N-1)$.
\end{theorem}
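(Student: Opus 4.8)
The plan is to reduce everything to a one-line combinatorial count by isolating the role of the final doubling step. First I would observe that, since the Hamiltonian is exactly conserved along the orbit in the idealized setting, every categorical weight $w_j \propto e^{-H(\theta_j,\rho_j)}$ is constant and every Metropolis acceptance probability equals one; hence at each stage $k$ the updated index satisfies $i_{k+1} = i_k^{\ext} \sim \Unif(a_k^{\ext}{:}b_k^{\ext})$, and in particular the output $i = i_m$ is uniform over the index range of the final extension $\calO_{m-1}^{\ext}$.

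Second I would pin down where that final extension sits. Set $N = 2^{m-1}$ and $S := b_{m-1} = \sum_{j=1}^{m-1} B_j 2^{j-1}$, so that $a_{m-1} = S - N + 1$. The binary-expansion identity together with independence and symmetry of $B_1,\dots,B_{m-1}$ shows $S \sim \Unif(0{:}(N-1))$, independently of $B_m$. Plugging $k = m-1$ into the extension rule and conditioning on $B_m$: on $\{B_m = 1\}$ the final extension has index range $(S+1){:}(S+N)$, and on $\{B_m = 0\}$ it has index range $(S-2N+1){:}(S-N)$. Applying the reflection $S \mapsto N-1-S$, which preserves the law of $S$, maps the second range onto the negation of the first, so $i$ and $-i$ have the same law; it thus suffices to compute $\Pr[i = k]$ for $k \ge 1$. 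Since moreover $i \ge 1$ on $\{B_m=1\}$ and $i \le -1$ on $\{B_m=0\}$, we get $\Pr[i=0]=0$, consistent with the stated indicator.

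Third, for $k \ge 1$ I would compute $\Pr[i = k] = \tfrac{1}{2}\Pr[i = k \mid B_m = 1] = \tfrac{1}{2N}\sum_{s=0}^{N-1}\Pr[i = k \mid S = s,\, B_m = 1]$, and since $i \mid (S=s,B_m=1) \sim \Unif(\{s+1,\dots,s+N\})$ this equals $\tfrac{1}{2N^2}\,\#\{\,s : 0 \le s \le N-1,\ s+1 \le k \le s+N\,\}$. The constraint is $\max(0,k-N) \le s \le \min(N-1,k-1)$, whose number of integer solutions is $\min(k, 2N-k)$ for $1 \le k \le 2N-1$ and zero otherwise. Combined with the symmetry $i \overset{d}{=} -i$ this gives \eqref{eq:pmf_of_ip}; the identity $\sum_{k=1}^{2N-1}\min(k,2N-k) = N^2$ confirms the masses sum to one.

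I do not expect a genuine obstacle: the argument is elementary once the ``$i$ is uniform on the last-added extension'' viewpoint is adopted. The only points needing care are verifying that $S = b_{m-1}$ is exactly uniform on $0{:}(N-1)$ and independent of $B_m$, and correctly reading off the two index ranges of $\calO_{m-1}^{\ext}$ from the extension rule --- especially the $B_m = 0$ case, where the left endpoint is $a_{m-1} - 2^{m-1} = S - 2N + 1$; the rest is the displayed count.
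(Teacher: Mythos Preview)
Your proposal is correct and follows essentially the same approach as the paper: both identify that $i$ is uniform on the last-added extension, compute $b_{m-1}=\sum_{j=1}^{m-1}B_j2^{j-1}\sim\Unif(0{:}(N-1))$ independently of $B_m$, read off the two possible index ranges of the final extension, and then reduce to a discrete triangular law. The paper phrases the last step as a convolution of two discrete uniforms via an auxiliary $c\sim\Unif(1{:}N)$, whereas you use the reflection $S\mapsto N-1-S$ to establish $i\overset{d}{=}-i$ and then count directly; these are equivalent routes to the same formula, and your explicit count is if anything slightly more self-contained than the paper's compressed convolution remark.
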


\begin{proof}
At each doubling step ($k=0,\dots,m-1$), the Bernoulli variable $B_{k+1}$ indicates whether the  extension orbit $\calO_{k}^{\ext}$ of length $2^{k}$ is appended to the right ($B_{k+1}=1$) or to the left ($B_{k+1}=0$) of the current orbit $\calO_{k}$. In the final doubling step, the orbit $\calO_{m-1}$ has length  $N=2^{m-1}$ with right endpoint index: \[
 b_{m-1} = \sum_{j=1}^{m-1} B_{j} \; 2^{j-1}  \;.
\] which implies $b_{m-1} \sim \Unif(0{:}(N-1))$.  The corresponding left endpoint index is $a_{m-1} = b_{m-1} - N + 1$.

The final bit $B_m$ determines whether the last extension of length $N$ is appended to the right or left:
  \begin{itemize}
    \item If $B_m=1$, then $a_{m-1}^{\ext} = b_{m-1}+1$ and $b_{m-1}^{\ext} = b_{m-1}+N$. 
    \item If $B_m=0$, then $a_{m-1}^{\ext} = a_{m-1}-N$ and $b_{m-1}^{\ext} = a_{m-1} -1$.
\end{itemize}
The  index $i$ is then sampled uniformly from this extension. Let $c \sim \text{Uniform}(1{:}N)$, independent of all previous random variables. Then \begin{align*}
i &= (b_{m-1}+c) B_m + (a_{m-1}-c) (1-B_m) ,  \\
&= (b_{m-1} + c) \cdot B_m + (b_{m-1} - N + 1 - c) \cdot (1 - B_m),
\end{align*} where the second line uses the identity $a_{m-1} = b_{m-1} - N + 1$. This expression shows that $i$ is distributed as the convolution of two independent discrete uniform variables, yielding the symmetric triangular PMF given in the theorem.
\end{proof}

The following corollary computes the expected magnitude of the integration time index under this distribution, providing a simple benchmark for the typical integration time in this idealized setting.

\medskip

\begin{corollary} \label{cor:stats_of_iprime}
It holds that $\mathbb{E}[\,|i|\,]=2^{m-1}$.
\end{corollary}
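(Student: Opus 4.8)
The plan is to sidestep summing directly against the triangular PMF \eqref{eq:pmf_of_ip}, and instead reuse the convolution representation already established in the proof of Theorem~\ref{thm:law_of_iprime}. Recall that there $i$ was written as
\[
i \ = \ (b_{m-1}+c)\,B_m + (a_{m-1}-c)(1-B_m), \qquad a_{m-1} = b_{m-1} - N + 1,
\]
where $N = 2^{m-1}$ and $b_{m-1} \sim \Unif(0{:}(N-1))$, $c \sim \Unif(1{:}N)$, and $B_m \sim \mathrm{Bernoulli}(1/2)$ are mutually independent.

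First I would identify the law of $|i|$. On the event $\{B_m = 1\}$ one has $i = b_{m-1} + c \ge 1$, hence $|i| = b_{m-1} + c$; on the event $\{B_m = 0\}$ one has $i = b_{m-1} - N + 1 - c \le -1$, hence $|i| = (N - 1 - b_{m-1}) + c$. Since $b_{m-1}$ and $N - 1 - b_{m-1}$ are equal in distribution (both uniform on $0{:}(N-1)$) and $c$ is independent of $B_m$, the conditional law of $|i|$ is the same on the two events, so unconditionally $|i| \stackrel{d}{=} b_{m-1} + c$. Taking expectations then finishes the proof by linearity: $\mathbb{E}[|i|] = \mathbb{E}[b_{m-1}] + \mathbb{E}[c] = \tfrac{N-1}{2} + \tfrac{N+1}{2} = N = 2^{m-1}$.

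I do not anticipate any real obstacle; the only point needing a little care is checking that reversing the direction of the final extension merely reflects the selected index and leaves the distribution of its magnitude unchanged. If one prefers a fully self-contained argument that does not lean on the proof of Theorem~\ref{thm:law_of_iprime}, one can instead compute from \eqref{eq:pmf_of_ip} directly: by symmetry $\mathbb{E}[|i|] = \frac{1}{N^2}\sum_{k=1}^{2N-1} k\,\min(k, 2N-k)$, and splitting this sum at $k = N$ and applying the standard formulas for $\sum k$ and $\sum k^2$ gives $\sum_{k=1}^{2N-1} k\,\min(k,2N-k) = N^3$, hence $\mathbb{E}[|i|] = N$ once more.
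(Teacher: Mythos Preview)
Your proof is correct. The primary route you take---identifying $|i|\stackrel{d}{=} b_{m-1}+c$ from the convolution representation and then invoking linearity of expectation---is genuinely different from the paper's argument. The paper works directly from the triangular PMF of $|i|$, expanding $\sum_{k=1}^{2N-1} k\min(k,2N-k)$ by splitting at $k=N$ and re-indexing the second piece to obtain $N$; this is exactly the alternative you sketch at the end. Your probabilistic derivation is cleaner and avoids any algebraic manipulation of sums, at the cost of leaning on the internal construction from the proof of Theorem~\ref{thm:law_of_iprime} rather than only on its stated PMF. The paper's computation, by contrast, is fully self-contained given the theorem statement. Both are short and the difference is largely stylistic.
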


\begin{proof}
    From Theorem~\ref{thm:law_of_iprime}, the magnitude $|i|$ has the triangular PMF: \[
    \Pr[|i| = k] \ = \  \frac{1}{N^2} \, \min(k, 2 N - k)
    \] for $k \in 1{:}(2N-1)$.  By expanding the expectation using this PMF, and then re-indexing, \begin{align*}
    E[|i|] &= \sum_{k=1}^{2N-1}  k \Pr[|i|=k] = \frac{1}{N^2} \sum_{k=1}^{2N-1}  k \min(k,2 N-k) \\
    &= \frac{1}{N^2} \left( \sum_{k=1}^N k^2 + \sum_{k=N+1}^{2N-1} k (2 N-k) \right)  \\
    &= \frac{1}{N^2} \left( \sum_{k=1}^N k^2 + \sum_{k=1}^{N-1} (k+N) (N-k) \right)  = N
    \end{align*} as required.
\end{proof}

Next, we describe how the orbit length can be locally adapted without breaking reversibility using a geometric stopping rule based on a U-turn condition.

\subsection{Locally Adaptive Path Length with The No-U-Turn Sampler}

\label{sec:short_overview_nuts}

The key innovation of the No-U-Turn Sampler (NUTS) is to locally adapt the orbit length, and thus integration time,  by terminating the doubling procedure in biased progressive HMC when a U-turn condition is met (see Figure~\ref{fig:uturn-condition}).  

\medskip

\begin{definition} \label{defn:u-turn}
An orbit $\calO\ =\ ((\theta_a, \rho_a), \dots, (\theta_b,\rho_b))$ satisfies the \emph{U-turn condition} if: 
\begin{equation}\label{eq:u-turn}
    \min\bigr(\rho_a \cdot M^{-1} (\theta_b-\theta_a),\,\rho_b \cdot M^{-1} (\theta_b-\theta_a)\bigr)\ <\ 0\;.
\end{equation} 
\end{definition}

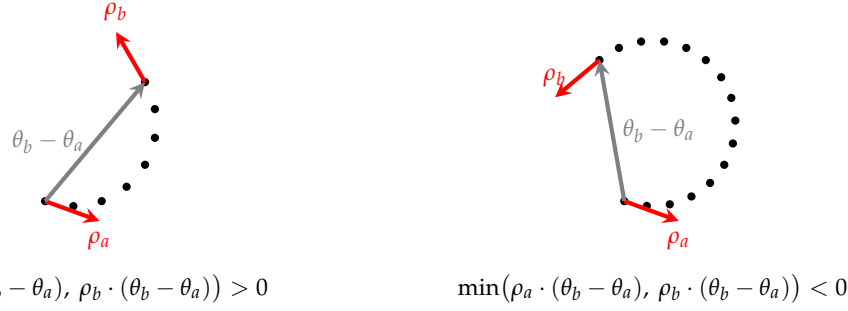
\begin{figure}[t]
\centering
\begin{tikzpicture}[scale=0.55, >=stealth]
    \def\r{2.}
    \def\vxscale{1.4}
    \def\sep{14} % further spacing between orbits
    \def\startangle{-110}
    \def\angleA{140}  % non-U-turn orbit arc
    \def\angleB{240}  % U-turn orbit arc
    \def\nptsA{7}
    \def\nptsB{15}

    % ===== ORBIT A: No U-turn =====
    \foreach \i in {0,...,7} {
        \pgfmathsetmacro\theta{\startangle + \i*(\angleA/\nptsA)}
        \node[circle, fill=black, inner sep=1.1pt] at ({\r*cos(\theta)}, {\r*sin(\theta)}) {};
    }

    \pgfmathsetmacro\ta{\startangle}
    \pgfmathsetmacro\tb{\startangle + \angleA}
    \coordinate (Ta) at ({\r*cos(\ta)}, {\r*sin(\ta)});
    \coordinate (Tb) at ({\r*cos(\tb)}, {\r*sin(\tb)});
    \coordinate (Ra) at ($(Ta)+({-\vxscale*sin(\ta)}, {\vxscale*cos(\ta)})$);
    \coordinate (Rb) at ($(Tb)+({-\vxscale*sin(\tb)}, {\vxscale*cos(\tb)})$);

    % Points and vectors
    \draw[->, ultra thick, red] (Ta) -- (Ra) node[below] {\small$\rho_a$};
    \draw[->, ultra thick, red] (Tb) -- (Rb) node[above] {\small$\rho_b$};
    \draw[->, ultra thick, gray] (Ta) -- (Tb) node[midway, left] {\small$\theta_b - \theta_a$};

    % Inequality annotation
    \node at (0, -4) {\small$\min\bigl(\rho_a \cdot (\theta_b - \theta_a),\ \rho_b \cdot (\theta_b - \theta_a)\bigr) > 0$};

    % ===== ORBIT B: With U-turn =====
    \foreach \i in {0,...,15} {
        \pgfmathsetmacro\theta{\startangle + \i*(\angleB/\nptsB)}
        \node[circle, fill=black, inner sep=1.1pt] at ({\sep + \r*cos(\theta)}, {\r*sin(\theta)}) {};
    }

    \pgfmathsetmacro\taB{\startangle}
    \pgfmathsetmacro\tbB{\startangle + \angleB}
    \coordinate (TaB) at ({\sep + \r*cos(\taB)}, {\r*sin(\taB)});
    \coordinate (TbB) at ({\sep + \r*cos(\tbB)}, {\r*sin(\tbB)});
    \coordinate (RaB) at ($(TaB)+({-\vxscale*sin(\taB)}, {\vxscale*cos(\taB)})$);
    \coordinate (RbB) at ($(TbB)+({-\vxscale*sin(\tbB)}, {\vxscale*cos(\tbB)})$);

    % Points and vectors
    \draw[->, ultra thick, red] (TaB) -- (RaB) node[below] {\small$\rho_a$};
    \draw[->, ultra thick, red] (TbB) -- (RbB) node[above] {\small$\rho_b$};
    \draw[->, ultra thick, gray] (TaB) -- (TbB) node[midway, right] {\small$\theta_b - \theta_a$};

    % Inequality annotation
    \node at (\sep, -4) {\small$\min\bigl(\rho_a \cdot (\theta_b - \theta_a),\ \rho_b \cdot (\theta_b - \theta_a)\bigr) < 0$};
\end{tikzpicture}

\caption{Position-space projections of two orbits: one that does not satisfy the U-turn condition (left), and one that does (right). For each case, the velocity vectors $\rho_a$ and $\rho_b$, as well as the displacement vector $\theta_b - \theta_a$, are shown, assuming the mass matrix $M = I$.}
\label{fig:uturn-condition}
\end{figure}

To ensure reversibility, NUTS also checks a \emph{sub-U-turn condition}.  Because NUTS constructs orbits through iterative doubling, the orbit length is always a power of two: $|\calO| = 2^m$ for some $m \in \mathbb{N}$.   For such orbits, we define a hierarchy of sub-orbits generated by recursive halving:
\begin{equation}\label{eq:sub-orbits}
 \Bigr\{ \calO_{i,j} \ :\ i \in \bigr\{ 0, 1, \dots, \log_2 |\mathcal O| \bigr\}, \, j \in \bigr\{ 1, 2, \dots, 2^i \bigr\}  \Bigr\}
\end{equation}
where, for each $i \in \{ 0, 1, \dots, \log_2 |\mathcal O| \}$, $\mathcal O_{i,j}$ are defined to be the unique orbits of size $|\mathcal O| 2^{-i}$ such that
\[ \mathcal O\ =\ \mathcal O_{i,1}\odot\mathcal O_{i,2}\odot\cdots\odot\mathcal O_{i,2^i} \;. \]

\begin{definition}
An orbit $\mathcal{O}$ is said to satisfy the \emph{sub-U-turn condition} if at least one of its sub-orbits, as defined in~\eqref{eq:sub-orbits}, satisfies the U-turn condition given in Definition~\ref{defn:u-turn}.
\end{definition}

Sub-U-turn checks play a key role in ensuring reversibility. When the orbit construction procedure is rerun starting from the $i$-th point along the final orbit, the sequence of doubling steps required to recover the same final orbit is uniquely determined by the binary representation of $b - i$.  However, if sub-U-turn checks are not enforced, the algorithm may terminate too early during this rerun, producing a different final orbit. This breaks the symmetry of the orbit construction with respect to the starting point along the orbit.

Given an initial state $(\theta, \rho) \in \mathbb{R}^{2d}$ and a maximum number of doublings $m_{\max} \in \mathbb{N}$, NUTS constructs an adaptive-length orbit as follows.

\begin{enumerate}
    \item \textbf{Initialization.}  
    Start with the singleton orbit $\mathcal{O}_0 = ((\theta, \rho))$, and set the index range $a_0 = b_0 = 0$.  
    Sample $m_{\max}$ i.i.d. Bernoulli$(1/2)$ random variables $B = (B_1, \dots, B_{m_{\max}})$.

    \item \textbf{Doubling steps.}   For each step $k \in 0{:}(m_{\max}-1)$,  perform the following:

\noindent
    \begin{enumerate}

\item \textbf{Current orbit.} Let the current orbit be \[
\calO_k  = \left((\theta_{a_k},\rho_{a_k}),\dots,(\theta_{b_k},\rho_{b_k})\right)
\] with length $|\calO_k| = 2^k$ and index range $a_k{:}b_k$.
\item \textbf{Extension.}  Construct an orbit $\calO_k^{\ext}$ of length $2^k$ over the index range $a_k^{\ext} {:} b_k^{\ext}$, defined by the value of $B_{k+1}$:
\[
a_k^{\ext} = 
\begin{cases}
b_k + 1 & \text{if } B_{k+1} = 1, \\
a_k - 2^k & \text{if } B_{k+1} = 0,
\end{cases}
\quad
b_k^{\ext} = 
\begin{cases}
b_k + 2^k & \text{if } B_{k+1} = 1, \\
a_k - 1 & \text{if } B_{k+1} = 0.
\end{cases}
\]
        \item \textbf{Sub-U-turn check.}  
 If $\calO_k^{\ext}$ satisfies the sub-U-turn condition, terminate and set $\calO = \calO_k$.
        \item \textbf{Orbit update.}  
 If not, define the updated orbit: $\calO_{k+1}= \begin{cases}  \calO_m\odot \calO_k^{\ext} & \text{if $B_{k+1}=1$,} \\ 
\calO_k^{\ext} \odot \calO_k  & \text{if $B_{k+1}=0$.} 
\end{cases}$
        \item \textbf{U-turn check.}  
 If $\calO_{k+1}$ satisfies the U-turn condition or if $k+1 = m_{\max}$, terminate and set $\calO = \calO_{k+1}$.
\end{enumerate}
   \item \textbf{Output.}  
    The final orbit $\mathcal{O}$ has length $2^m$ for some $m \le m_{\max}$ and is fully determined by its length and rightmost index.
\end{enumerate}

The integration time index is then selected exactly as in biased progressive HMC.  For clarity, the integration time index selection has been omitted in the above description, as it follows identically from the biased progressive HMC case described in Section~\ref{sec:short_overview_biased_progressive_hmc}.

\section{WALNUTS} 

\label{sec:walnuts}

WALNUTS addresses a key limitation in traditional NUTS: fixed step size integrators can fail to capture local variations in geometry, especially in stiff or funnel-shaped targets. By introducing a locally adaptive variable step size leapfrog integrator within a fixed macro time grid, WALNUTS improves robustness while maintaining reversibility.  It does so using the auxiliary variable GIST framework introduced in Section~\ref{sec:gist}, which provides a principled foundation for both path length and step size adaptation.

At a high level, each transition step of WALNUTS uses a locally adaptive leapfrog integrator with variable micro step sizes to construct an orbit $\calO$ along a macro time grid.  Once the orbit is built, WALNUTS selects the next state by sampling a point in $\calO$, thereby randomizing the integration time.  The general framework for orbit construction and state selection was introduced in Section~\ref{sec:preliminaries}; what follows highlights the modifications needed to incorporate a variable step size leapfrog integrator.

\subsection{Variable Step Size Leapfrog Integrator}
\label{sec:variable_step_int}

The leapfrog (or Verlet) integrator is the standard method used to numerically approximate Hamiltonian dynamics in HMC \cite{BoSaActaN2018, HaLuWa2010, LeRe2004}. In this section, we develop a locally adaptive extension that varies the step size within each macro time interval based on local energy error.  

\paragraph{Macro/Micro grids and integration notation}

In our framework, we assume a uniform \emph{macro} time grid to anchor the overall trajectory, while allowing for variable \emph{micro} step sizes within each macro interval to locally control integration errors. Each macro interval is subdivided into a number of micro steps, with finer subdivisions used in regions of high curvature or energy variation. This two-level structure enables local adaptivity while preserving a globally consistent time grid, as illustrated in Figure~\ref{fig:stencil}. We restrict attention to uniform macro grids throughout; the advantages of this choice for reversibility are discussed in Remark~\ref{rmk:fixed_macro}.

To formalize this, we introduce a fixed macro step size $h >0 $ and define the evenly-spaced time grid \[
 t_k \ := \ k h~~\text{where}~~k \in \mathbb{Z} \;.
 \]  Each macro interval $[t_k, t_{k+1}]$ is subdivided into $\ell_{k,k+1}$ micro steps, each of size $h \ell_{k,k+1}^{-1}$. We denote by $\{\ell_{k,k+1} \}_{k \in \mathbb{Z}}$ the sequence of micro step counts associated with each macro interval.  For any $m, n \in \mathbb{Z}$, the evolution of the numerical solution from time $t_n$ to $t_m$ using these variable step sizes is defined by the following composition of leapfrog maps: \begin{equation} \label{eq:Phimn}
\Phi_{m,n} = \begin{cases} \Phi_{ h \ell_{m-1,m}^{-1}}^{\ell_{m-1,m}} \circ  \dots \circ \Phi_{ h \ell_{n,n+1}^{-1} }^{\ell_{n,n+1}}  & \text{if $m > n$} \\
\mathcal{F} \circ \Phi_{ h \ell_{m,m+1}^{-1}}^{\ell_{m,m+1}} \circ  \dots \circ \Phi_{h \ell_{n-1,n}^{-1} }^{\ell_{n-1,n}} \circ \mathcal{F} & \text{if $m<n$}
\end{cases}
\end{equation} 
with $\Phi_{m,m}$ defined as the identity map and $\mathcal{F}$ denoting the momentum flip.   The forward evolution ($m > n$) advances the numerical solution from $t_n$ to $t_m$ using a sequence of forward leapfrog steps, while the backward evolution ($m<n$) requires a momentum flip before and after in order to advance the numerical solution using backward leapfrog steps. 

If the micro step counts $\{ \ell_{k,k+1} \}_{k \in \mathbb{Z}}$ are fixed in advance or treated as auxiliary variables in an enlarged state space, then $\Phi_{m,n}$ is volume-preserving, as it is a composition of volume-preserving leapfrog maps \cite[Proposition 2.3]{BoSaActaN2018}. However, if $\{ \ell_{k,k+1} \}_{k \in \mathbb{Z}}$ depends on the state  during integration, then volume preservation is no longer guaranteed. In either case, $\Phi_{m,n}$ is not, in general, time-reversible: although each leapfrog step is time-reversible, the composition of time-reversible maps with varying step sizes does not, in general, yield a time-reversible map \cite[Theorem 4.2]{BoSaActaN2018}.

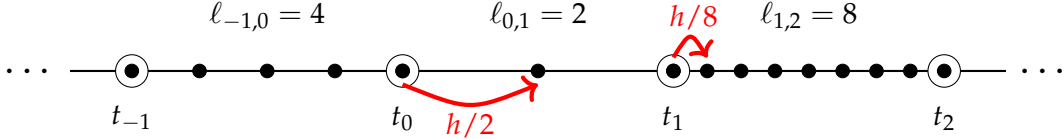
\begin{figure}[t]
    \begin{center}
    \begin{tikzpicture}[scale=1.8]
   \draw[-, thick](1,0.0) -- (9,0.0);
   \node[black,scale=1.1] at (2,-0.35) {$t_{-1}$};
   \node[black,scale=1.1] at (4,-0.35) {$t_{0}$};
   \node[black,scale=1.1] at (6,-0.35) {$t_{1}$};
   \node[black,scale=1.1] at (8,-0.35) {$t_{2}$};
   \node[black,scale=1.1] at (3,0.4) {$\ell_{-1,0}=4$};
   \node[black,scale=1.1] at (5,0.4) {$\ell_{0,1}=2$};
   \node[black,scale=1.1] at (7,0.4) {$\ell_{1,2}=8$};
   \node[black, scale=1.5,fill=white] at (8.75,0.0) {$\dotsc$};
   \node[black, scale=1.5,fill=white] at (1.25,0.0) {$\dotsc$};
   \foreach \x in {2.,4.,6.,8.} 
   {
   \filldraw[color=black,fill=white] (\x,0) circle (0.12); 
   };
   \foreach \n in {2,2.5,3,3.5,4}
   {
   \filldraw[color=black,fill=black] (\n,0) circle (0.05); 
   }
   \foreach \n in {4,5,6}
   {
   \filldraw[color=black,fill=black] (\n,0) circle (0.05); 
   }
   \foreach \n in {6,6.25,6.5,6.75,7,7.25,7.5,7.75,8}
   {
   \filldraw[color=black,fill=black] (\n,0) circle (0.05); 
   }
   \draw[->, red, ultra thick] (4,-0.1) .. controls (4.5, -0.3) ..  (5, -0.1);
   \node[red,scale=1.1] at (4.5, -0.4) {$h/2$};
   \draw[->, red, ultra thick] (6,0.1) .. controls (6.125, 0.3) ..  (6.25, 0.1);
   \node[red,scale=1.1] at (6.15, 0.4) {$h/8$};
   \end{tikzpicture}
   \end{center}
   \caption{The time grid used to advance the numerical solution from  $t_0$ to $t_1$ consists of two microsteps, each of size $h/2$. Similarly, the transition from $t_1$ to $t_2$ is achieved using eight microsteps, each of size $h/8$.}
   \label{fig:stencil}
\end{figure}

\paragraph{Micro step selection using local energy error}

In locally adaptive HMC, the primary goal of local step size adaptivity is to select the largest leapfrog step size that ensures the energy error remains below a specified threshold. Here, we describe a general procedure for determining such step sizes.  Given an initial condition $(\theta, \rho) \in \mathbb{R}^{2d}$, a macro step size $h$, and a prescribed maximum allowable energy error $\delta > 0$, we define 
\begin{equation} \label{eq:micro}
    \micro(\theta,\rho, \mu, M, h, \delta) \ := \ \min\{ \ell \in2^{\bbN}:\,  H^+_\ell - H^-_\ell \leq \delta \}
\end{equation}
where $2^{\mathbb{N}} := \{ 2^n : n \in \mathbb{N} \}$. Here, $H^+_\ell$ and $H^-_\ell$ denote the maximum and minimum Hamiltonian values encountered along the leapfrog trajectory, starting from $(\theta, \rho)$ and integrated forward over $[0, h]$ with  micro step size $h \ell^{-1}$. Specifically,
\begin{align*}
    H^+_{\ell} = \max \{ H(\theta^{(j)}, \rho^{(j)}),\, 0\leq j\leq \ell \},\quad
    H^-_{\ell} = \min \{ H(\theta^{(j)}, \rho^{(j)}),\, 0\leq j\leq \ell \}.
\end{align*}
The leapfrog iterates $(\theta_j,\rho_j)$ are obtained by applying $j$ forward leapfrog steps with step size $h \ell^{-1}$, i.e., 
\[ (\theta^{(j)}, \rho^{(j)}) = \Phi_{h \ell^{-1}}^j(\theta, \rho) \quad 0\leq j\leq \ell  \;.
\]
In other words, $\micro(\theta,\rho, \mu, M, h, \delta)$ returns the minimal power of two such that the leapfrog trajectory stays within energy error $\delta$.

\smallskip

\begin{remark}
In the optimized implementation of WALNUTS, the energy error used to assess the validity of a micro trajectory is computed as the absolute difference of the Hamiltonian evaluated at the endpoints, i.e., $|H(\theta^{(0)}, \rho^{(0)}) - H(\theta^{(\ell)}, \rho^{(\ell)})|$. This choice improves practical performance by reducing mismatches between forward and backward micro steps, and ensures that the energy error threshold $\delta$ has a consistent interpretation across different values of the step size reduction factor~$\ell$. While this simplification deviates slightly from the theoretical formulation, the core theoretical guarantees are expected to remain intact.
\end{remark}

The next lemma shows that the micro step size factor $\ell = \micro(\theta,\rho,\mu, M, h, \delta)$ is always greater than or equal to the value obtained by applying the same procedure at the endpoint of the corresponding forward trajectory, but with the momentum reversed.

\medskip

\begin{lemma} \label{lemma:forward_backward}
For any $(\theta, \rho) \in \mathbb{R}^{2d}$, and energy error threshold $\delta >0$,  the forward step size reduction factor satisfies:
\[
\ell = \micro(\theta,\rho,\mu, M, h, \delta)  \ge \micro \left(  \theta', -\rho', \mu, M, h, \delta  \right), \quad \text{where} \quad  ( \theta', \rho') = \Phi_{h \ell^{-1}}^{\ell}(\theta, \rho) \;.
\] 
\end{lemma}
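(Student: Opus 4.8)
The plan is to exploit the structure of the leapfrog micro-trajectory over a single macro interval and the time-reversibility of individual leapfrog steps. Write $\ell = \micro(\theta,\rho,\mu,M,h,\delta)$ and let $(\theta',\rho') = \Phi_{h\ell^{-1}}^{\ell}(\theta,\rho)$ be the endpoint of the length-$\ell$ forward micro-trajectory. The key observation is that the backward micro-trajectory started from $(\theta',-\rho')$ with the same step-count $\ell$ traces out exactly the same set of phase-space points as the forward one, up to a momentum flip: by the time-reversibility identity $\Phi_{h\ell^{-1}}^{-1} = \mathcal{F}\circ\Phi_{h\ell^{-1}}\circ\mathcal{F}$, iterated $\ell$ times, one gets $\Phi_{h\ell^{-1}}^{\ell}(\theta',-\rho') = \mathcal{F}(\theta,\rho)$, and moreover each intermediate iterate $(\theta'^{(j)},\rho'^{(j)}) = \Phi_{h\ell^{-1}}^{j}(\theta',-\rho')$ equals $\mathcal{F}$ applied to the $(\ell-j)$-th forward iterate $(\theta^{(\ell-j)},\rho^{(\ell-j)})$. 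Since $H\circ\mathcal{F} = H$ (the kinetic term $\tfrac12\rho^\top M^{-1}\rho$ is even in $\rho$), the multiset of Hamiltonian values along the backward length-$\ell$ trajectory from $(\theta',-\rho')$ coincides with that along the forward length-$\ell$ trajectory from $(\theta,\rho)$.

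From this I conclude that the energy-error criterion at step-count $\ell$ is satisfied for the reversed endpoint: $H^+_\ell(\theta',-\rho') - H^-_\ell(\theta',-\rho') = H^+_\ell(\theta,\rho) - H^-_\ell(\theta,\rho) \le \delta$, the last inequality holding because $\ell = \micro(\theta,\rho,\mu,M,h,\delta)$ is by definition a value for which the criterion holds. Now $\micro(\theta',-\rho',\mu,M,h,\delta)$ is the \emph{minimal} power of two for which the criterion is satisfied, so it is at most $\ell$, which is the desired inequality.

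The one point that needs care — and what I expect to be the main (though modest) obstacle — is the bookkeeping in the reversibility identity: showing cleanly that the forward iterates of $\Phi_{h\ell^{-1}}$ from $(\theta,\rho)$ and the forward iterates of $\Phi_{h\ell^{-1}}$ from $(\theta',-\rho')$ are related by $(\theta'^{(j)},\rho'^{(j)}) = \mathcal{F}(\theta^{(\ell-j)},\rho^{(\ell-j)})$ for all $0\le j\le\ell$. This is a finite induction on $j$ using $\Phi_{h\ell^{-1}}^{-1}=\mathcal{F}\circ\Phi_{h\ell^{-1}}\circ\mathcal{F}$ (equivalently the commutative diagram for $\Phi_h^L$ stated in the excerpt with $L=\ell$ and step size $h\ell^{-1}$); the base case $j=0$ is the definition of $(\theta',\rho')$ together with $\mathcal{F}^2=\mathrm{id}$. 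Note the asymmetry in the statement (only "$\ge$", not equality) reflects exactly that the backward criterion \emph{might} already be met at a coarser step-count, so no reverse inequality is claimed and none is needed. I would also remark that the argument uses only time-reversibility and the evenness of $H$ in $\rho$, not volume preservation, so it is insensitive to the state-dependence of the micro-step selection.
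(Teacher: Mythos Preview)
Your proof is correct and follows essentially the same approach as the paper: both arguments use time-reversibility of leapfrog to show that the backward trajectory from $(\theta',-\rho')$ with step-count $\ell$ is the momentum-flipped forward trajectory, then invoke $H\circ\mathcal{F}=H$ to conclude that the energy-error criterion is met at $\ell$, hence the backward $\micro$ value is at most $\ell$. Your write-up is in fact slightly more explicit about the intermediate-iterate bookkeeping and the reason only the one-sided inequality holds.
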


\begin{proof}
Let $\ell = \micro(\theta,\rho, \delta, h)$.  By definition of micro in \eqref{eq:micro}, we have:  \begin{equation}
\label{eq:energy_error_forward}
\max \{ H(\theta^{(j)}, \rho^{(j)}) \,:\,  0\leq j\leq \ell \} -  \min \{ H(\theta^{(j)}, \rho^{(j)}) \,:\, 0\leq j\leq \ell \} \le \delta
\end{equation}  where the leapfrog iterates are given by \[ 
(\theta^{(j)},\rho^{(j)})=  \Phi_{h \ell^{-1}}^j  (\theta, \rho), \quad 0\leq j\leq \ell \;.
\]  
\begin{minipage}{\textwidth}
\begin{minipage}{0.5\textwidth}
For the backward evolution, we consider the trajectory starting from $(\theta', \rho')$. As shown in the grey dots in the image on the right, by applying a momentum flip $\mathcal{F}$, the backward orbit is given by: 
 \[
 \Phi_{h \ell^{-1}}^j \circ \mathcal{F}  (\theta', \rho') = (\theta^{(\ell-j)},-\rho^{(\ell-j)}) \;, \quad 0\leq j\leq \ell \;.
\] 
\end{minipage}
\begin{minipage}{0.5\textwidth}
\centering
\begin{tikzpicture}[scale=1.5]

    % Forward trajectory (red solid line)
    \draw[ultra thick,red] plot [smooth, tension=0.75] coordinates 
        { (0,0.4) (0.6,0.7) (1.2,0.85) (1.8,0.7) (2,0.4) };

    % Intermediate points (forward)
    \foreach \x/\y in {0/0.4, 0.6/0.7, 1.2/0.85, 1.8/0.7, 2/0.4} {
        \filldraw[red] (\x,\y) circle (1.75pt);
    }

    % Backward trajectory (gray dashed line)
    \draw[ultra thick,gray] plot [smooth, tension=0.75] coordinates 
        { (2,-0.4) (1.8,-0.7) (1.2,-0.85) (0.6,-0.7) (0,-0.4) };

    % Intermediate points (backward, excluding already defined final point)
    \foreach \x/\y in {1.8/-0.7, 1.2/-0.85, 0.6/-0.7} {
        \filldraw[gray] (\x,\y) circle (1.75pt);
    }

    % Labels
    \node[red] at (-0.3,0.5) { $(\theta, \rho)$};
    \node[gray] at (-0.4,-0.5) { $(\theta, -\rho)$};
    \node[red] at (2.4,0.5) { $(\theta', \rho')$};
    \node[gray] at (2.5,-0.5) { $(\theta', -\rho')$};

    % Curved black arrows for F map
    \draw[{Latex}-, black, dashed,  ultra thick, bend right=0] (0,0.35) to (0,-0.35);
    \draw[-{Latex}, black, dashed,  ultra thick, bend left=0] (2,0.35) to (2,-0.35);

    % Labels for F
    \node[black] at (-0.2,0) {\small $\mathcal{F}$};
    \node[black] at (2.2,0) {\small $\mathcal{F}$};

    % Final endpoint dots
    \filldraw[red] (0,0.4) circle (1.75pt);
    \filldraw[gray] (0,-0.4) circle (1.75pt);
    \filldraw[red] (2,0.4) circle (1.75pt);
    \filldraw[gray] (2,-0.4) circle (1.75pt);

\end{tikzpicture}
\end{minipage}
\end{minipage}

\medskip

Since $H\circ \mathcal{F} \equiv H$, it follows that the backward trajectory at step size $h \ell^{-1}$ also satisfies \eqref{eq:energy_error_forward}. Thus, the backward critical step size reduction factor cannot exceed $\ell$, ensuring that $\ell \ge \micro\left( \theta', \textcolor{blue}{-}\rho', \delta, h \right)$.
\end{proof}

\paragraph{Randomized step size selection within orbits}
On the same macro step, Lemma~\ref{lemma:forward_backward} suggests that the forward and backward  step size reduction factors may differ.   To address this issue, we introduce randomization in the step size selection process.

Let $p_{\micro}$ be the distribution from which the step size reduction factor is sampled given the critical step size reduction factor.
Given an initial point $(\theta, \rho) $ at time $t_0$, we recursively define the sequence of points $ (\theta_j, \rho_j) $ at times $ t_j $ as follows:
\begin{itemize}
    \item \textbf{Base case ($ j = 0 $)}:  
    \[
    (\theta_0, \rho_0) = (\theta, \rho).
    \]
    \item \textbf{Forward recursion ($ j > 0 $)}:  
    \[
    (\theta_j, \rho_j) = \Phi_{h \ell_{j-1,j}^{-1}}^{\ell_{j-1,j}}(\theta_{j-1}, \rho_{j-1}),
    \]
    where the step size reduction factor \( \ell_{j-1,j} \) is sampled from the conditional distribution:
    \[
    \ell_{j-1,j} \sim p_{\micro}(\cdot \mid \micro(\theta_{j-1},\rho_{j-1},\mu, M, h, \delta)).
    \]
    \item \textbf{Backward recursion ($ j < 0 $)}:  
    \[
    (\theta_j, \rho_j) = \mathcal{F} \circ \Phi_{h \ell_{j,j+1}^{-1}}^{\ell_{j,j+1}} \circ \mathcal{F} (\theta_{j+1}, \rho_{j+1}),
    \]
    where \( \ell_{j,j+1} \) is sampled from the analogous distribution:
    \[
    \ell_{j,j+1} \sim p_{\micro}(\cdot \mid \micro(\theta_{j+1},-\rho_{j+1},\mu, M, h, \delta)).
    \]
\end{itemize}

A simple choice for the distribution $p_{\micro}$ is a uniform distribution over two candidate micro step size factors: the deterministic baseline value $\widetilde{\ell} = \texttt{micro}(\theta, \rho, \mu, M, h, \delta)$, as defined in \eqref{eq:micro}, and its immediate successor $\widetilde{\ell} + 1$. Specifically,
\[
p_{\micro}(k \mid \widetilde{\ell}) 
= 
\frac{1}{2} \, \mathbb{1}_{ \{ \widetilde{\ell},\ \widetilde{\ell}+1 \} } (k) \;.
\]
Note that the support of $p_{\micro}$ is limited to larger micro step size factors relative to $\widetilde{\ell}$. This restriction is justified by Lemma~\ref{lemma:forward_backward}, which guarantees that the forward micro step size factor is always greater than or equal to the backward one over the same macro step. Therefore, if a given step size yields an acceptable energy error in the forward direction, it will also do so in the backward direction. This allows us to safely omit smaller micro step factors from consideration.

\medskip

\begin{remark}
The recently proposed step size adaptive NUTS algorithm \cite{BouRabeeCarpenterKleppeMarsden2024} uses the same random number of leapfrog micro steps within each macro step. Specifically, for each macro step spanning the index range $a{:}b$, the number of microsteps is given by a random variable $\ell$ such that $ \ell_{a,a+1} = \ell_{a+1,a+2} = \dots = \ell_{b-1,b} = \ell $ with $ \ell $ drawn from a distribution that depends on $(\theta, \rho)$,  $a$, and $b$. 
\end{remark}

\subsection{Orbit Construction with Variable Step Size Leapfrog Integration}

\label{sec:walnuts-orbit}

Orbit construction proceeds exactly as in NUTS, except that fixed step size leapfrog integration is replaced by variable step size leapfrog integration during the orbit extension steps described in Section~\ref{sec:short_overview_nuts}.  Specifically, by \eqref{eq:Phimn}, the overall evolution from time $t_0 = 0$ to time $t_j$ is given by
\begin{equation} \label{eq:Phi_j0_bfell}
\left. \Phi_{j,0} \right|_{\bfell} = \begin{cases} \Phi_{ h \ell_{j-1,j}^{-1}}^{\ell_{j-1,j}} \circ  \dots \circ \Phi_{ h \ell_{0,1}^{-1} }^{\ell_{0,1}}  & \text{if $j > 0$} \\
\mathcal{F} \circ \Phi_{ h \ell_{j,j+1}^{-1}}^{\ell_{j,j+1}} \circ  \dots \circ \Phi_{h \ell_{-1,0}^{-1} }^{\ell_{-1,0}} \circ \mathcal{F} & \text{if $j<0$}
\end{cases}
\end{equation}
where $\Phi_{0,0}$ is the identity map, and $\mathcal{F}$ denotes the momentum flip operator. Each composition uses a different number of micro steps, governed by the micro step size reduction factors $\ell_{k,k+1}$ stored in the tuple $\bfell$.  Given two endpoint indices $a, b \in \mathbb{Z}$ with $a \le b$, the orbit over the index range $a{:}b = \{a, a+1, \dots, b\}$, starting at $(\theta, \rho)$ at time $t_0 = 0$, is defined by
\[
\mathcal{O} = (\left. \Phi_{a,0} \right|_{\bfell}(\theta,\rho), \dots, \left. \Phi_{b,0} \right|_{\bfell}(\theta,\rho) ) = ((\theta_a, \rho_a), \dots, (\theta_b, \rho_b)).
\]
The figure below illustrates this construction. Each macro-step interval $[t_k, t_{k+1}]$ is subdivided into $\ell_{k,k+1}$ micro steps. The number of micro steps may vary across intervals, as shown by the differing densities of black points in the two macro-steps depicted. These locally adaptive refinements allow WALNUTS to dynamically adjust the resolution of leapfrog integration within each macro step:

\begin{center}
    \begin{tikzpicture}[scale=2.25]
    % Main horizontal line
    \draw[-, thick] (2,0.0) -- (4.5,0.0);  % First macro-step interval [t_a, t_{a+1}]
    \draw[-, thick] (5.5,0.0) -- (8,0.0);  % Last macro-step interval [t_{b-1}, t_b]

    % Major time points (macro steps)
    \node[black,scale=1.25] at (2,-0.35) {$t_a$};  
    \node[black,scale=1.25] at (4,-0.35) {$t_{a+1}$}; 
    \node[black,scale=1.25] at (6,-0.35) {$t_{b-1}$}; 
    \node[black,scale=1.25] at (8,-0.35) {$t_b$};  

    % Ellipsis separating macro-step intervals
    \node[black, scale=1.5] at (5,0.0) {$\cdots$};  

    % L values indicating micro-steps per macro step
    \node[black,scale=1.1] at (3,0.4) {$\ell_{a,a+1}$ micro steps};  
    \node[black,scale=1.1] at (7,0.4) {$\ell_{b-1,b}$ micro steps};  

    % Macro-step markers (empty circles)
    \foreach \x in {2,4,6,8} 
    {
    \filldraw[color=black,fill=white] (\x,0) circle (0.12); 
    };

    % Micro-step markers in first macro-step interval (L_a = 4)
    \foreach \n in {2.0,2.5,3,3.5,4.0}
    {
    \filldraw[color=black,fill=black] (\n,0) circle (0.05); 
    }

    % Micro-step markers in second macro-step interval (L_{b-1} = 8)
    \foreach \n in {6,6.25,6.5,6.75,7,7.25,7.5,7.75,8}
    {
    \filldraw[color=black,fill=black] (\n,0) circle (0.05); 
    }

    % Arrows indicating micro-step sizes
    \draw[<-, red,  ultra thick] (3,-0.1) .. controls (3.25, -0.3) ..  (3.5, -0.1);
    \node[red,scale=1.] at (3.25, -0.4) {$h \ell_{a,a+1}^{-1}$};
    
    \draw[->, red, ultra thick] (7,-0.1) .. controls (7.125, -0.3) ..  (7.25, -0.1);
    \node[red,scale=1.] at (7.15, -0.4) {$h \ell_{b-1,b}^{-1}$};
    \end{tikzpicture}

\end{center}

\subsection{Randomized Integration Time via Biased Progressive Sampling}

\label{sec:walnuts-int-time}

In WALNUTS, the index of the next state in the Markov chain is selected via biased progressive sampling, as in biased progressive HMC. This index corresponds to a randomly chosen point along the orbit and may be interpreted as a randomized integration time.

We use $\cat(\mathbb{S},q)$ to denote the categorical distribution supported on a discrete set $\mathbb{S}$ with the probability of selecting $x \in \mathbb{S}$ proportional to $q(x)$.   In view of Lemma~\ref{lemma:forward_backward}, \[
p_{\micro}( \ell_{k,k+1} \mid \micro(\theta_k, \rho_k, \mu, M, h, \delta) )
\] may not equal \[
p_{\micro}( \ell_{k,k+1} \mid \micro(\theta_{k+1}, -\rho_{k+1}, \mu, M, h, \delta) ) \;.
\]  To ensure reversibility, we assign a weight to each point in the orbit, defined recursively as follows.   
\begin{itemize}
\item 
The initial weight is \begin{equation} \label{eq:w0}
w_0 = \mu(\theta_0) e^{-\frac{1}{2} \rho_0\tran M^{-1} \rho_0} \;.
\end{equation}
\item
For $j>0$, 
\begin{equation} 
w_j =  \dfrac{\mu(\theta_{j}) e^{-\frac{1}{2} (\rho_j)\tran M^{-1} \rho_j}}{\mu(\theta_{j-1}) e^{-\frac{1}{2} (\rho_{j-1})\tran M^{-1} \rho_{j-1}}}  \dfrac{p_{\micro}( \ell_{j-1,j}  \mid \micro(\theta_{j}, -\rho_{j}, \mu, M, h, \delta) )}{p_{\micro}(\ell_{j-1,j}  \mid \micro(\theta_{j-1}, \rho_{j-1}, \mu, M, h, \delta) )}  w_{j-1} \;. \label{eq:wts-forw}
\end{equation}
\item
While for $j<0$,
\begin{equation}
w_j =\dfrac{\mu(\theta_{j}) e^{-\frac{1}{2} (\rho_j)\tran M^{-1} \rho_j}}{\mu(\theta_{j+1}) e^{-\frac{1}{2} (\rho_{j+1})\tran M^{-1} \rho_{j+1}}}  \dfrac{p_{\micro}( \ell_{j,j+1}  \mid \micro(\theta_{j}, \rho_{j}, \mu, M, h, \delta) )}{p_{\micro}( \ell_{j,j+1}  \mid \micro(\theta_{j+1}, -\rho_{j+1}, \mu, M, h, \delta) )}  w_{j+1} \;.
\label{eq:wts-backw}
\end{equation}
\end{itemize}

  Given an orbit with endpoints indices $a,b \in \mathbb{Z}$, we define the corresponding collection of weights by \[
\calW = (w_a, \dots, w_b) \;.
\]
Biased progressive sampling in WALNUTS selects the index of the next state iteratively as follows.  The process begins with the initial index $i_0 = 0$.  At the $k$-th doubling step ($k \in \{0, \dots, m_{\max}-1\}$), let $\calO_k$, $\calW_k$ denote the current orbit and its weights, and $\calO_{k}^{\ext}$, $\calW_k^{\ext}$ denote the proposed extension and its weights.  An extension index is sampled according to \[ i_{k}^{\ext} \sim \cat(a_k^{\ext}{:}b_k^{\ext},\calW^{\ext}_k)
\] and accepted with probability given by the acceptance function \begin{equation} \label{eq:walnuts-alpha}
\alpha(\calW_k, \calW_k^{\ext}) :=  \min\left(1, \dfrac{\sum \calW^{\ext}_{m} }
    {\sum \calW_m} \right) \;.
\end{equation}
Here, the expression $\sum \calW$ is shorthand for the sum of all the weights in the collection $\calW$, i.e., $\sum \calW = \sum_{w \in \calW} w$.  The index update is:
\begin{equation} \label{eq:bpupdate}
\begin{aligned}
  i_{k+1} = \begin{cases} i_{k}^{\ext} & \text{with probability  $\alpha(\calW_k, \calW_k^{\ext})$,}
 \\
i_k & \text{otherwise}.
 \end{cases}
\end{aligned} 
\end{equation}
This Metropolis-style rule favors integration times in the extension orbit while ensuring reversibility.

\medskip

\begin{remark} \label{rmk:fixed_macro}
An key advantage of WALNUTS is its use of a fixed macro time grid: each call to the micro function in \eqref{eq:micro} targets the same segment of the exact Hamiltonian trajectory, but with successively finer resolution. This design substantially improves the probability of reversibility compared to methods that do not operate on a fixed macro grid.  To highlight the contrast, consider an alternative approach \citep[used e.g. by][]{biron2024automala,kleppe2016adaptive}  that adaptively reduces the step size $h$ until the energy error falls below a threshold, then performs a single integration step using the refined step size, and finally checks reversibility by integrating backward with successively smaller step sizes. If the forward step involves a reduction in step size, the backward checks will evaluate points that precede (in time) the original initial condition, regions of the trajectory that are not aligned with the forward pass. This temporal misalignment increases the likelihood of reversibility failures. In WALNUTS, by contrast, all refinements are anchored to the same macro grid, ensuring that both forward and backward trajectories traverse the same time intervals.  
\end{remark}

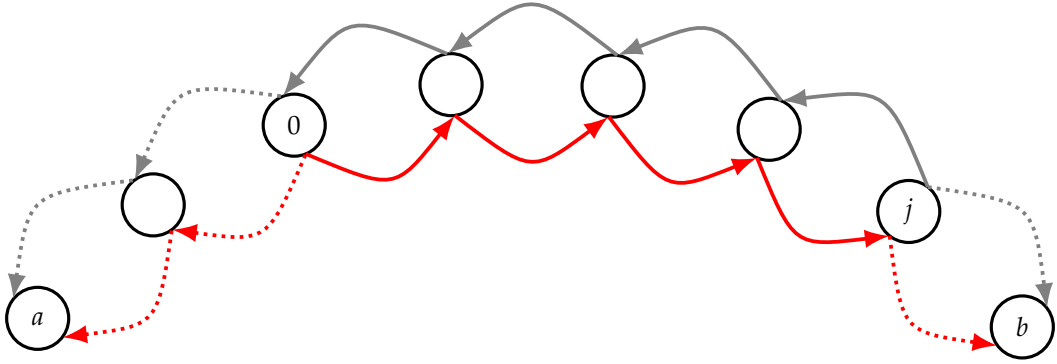
\begin{figure}[t]
    \begin{center}
    \begin{tikzpicture}[scale=0.55]
    \begin{scope}[very thick,decoration={
        markings,
        mark=at position 1.0 with {\arrow[scale=3]{stealth}}}
        ]
    \pgfmathsetmacro\A{15}
    \pgfmathsetmacro\R{0.75}
    \pgfmathsetmacro\leapHeight{2} % Adjust the leap height
    \pgfmathsetmacro\circleShift{1.5*\R} % Larger shift for bottom arrows to stay outside the circles
    
    \pgfmathsetmacro\a{-2}  % a position
    \pgfmathsetmacro\b{5}  % b position
    \pgfmathsetmacro\bm{4}  % b-1 position
    \pgfmathsetmacro\L{4}  % L position
    \pgfmathsetmacro\Lplus{max(\L,0)}  % L^+ position
    \pgfmathsetmacro\Lminus{min(\L,0)}  % L^- position
    \pgfmathsetmacro\phaseshift{-112}  % phase shift
    \pgfmathsetmacro\om{15}  % frequency in degrees
    
    % Draw circles and labels
    \foreach \i in {\a,...,\b} {
        \filldraw[color=black,fill=none] ({\A*cos(\om*\i+\phaseshift)}, {-\A*sin(\om*\i+\phaseshift)}) circle (\R);
    }
    \node[] at ({\A*cos(\om*\Lplus+\phaseshift)}, {-\A*sin(\om*\Lplus+\phaseshift)}) {$j$};
    \node[] at ({\A*cos(\om*\Lminus+\phaseshift)}, {-\A*sin(\om*\Lminus+\phaseshift)}) {$0$};
    \node[] at ({\A*cos(\om*\a+\phaseshift)}, {-\A*sin(\om*\a+\phaseshift)}) {$a$};
    \node[] at ({\A*cos(\om*\b+\phaseshift)}, {-\A*sin(\om*\b+\phaseshift)}) {$b$};
    
    % Function to draw curvy bottom arrows (red, from L^-)
    \foreach \i in {\a,...,\bm} {
        \pgfmathtruncatemacro\n{\i}
        \pgfmathtruncatemacro\nnext{\i+1}
           % Calculate start and end points for top arrows
           
        \foreach \shift/\color/\leap/\arrowDirection in {0/gray/\leapHeight/right} {
            \pgfmathsetmacro\thetaStart{\om*\n+\phaseshift}
            \pgfmathsetmacro\thetaEnd{\om*\nnext+\phaseshift}
           
            % Start and end points adjusted to avoid going inside circles
            \pgfmathsetmacro\xstart{(\A+\R)*cos(\thetaStart)}
            \pgfmathsetmacro\ystart{-(\A+\R)*sin(\thetaStart)}
           
            \pgfmathsetmacro\xend{(\A+\R)*cos(\thetaEnd)}
            \pgfmathsetmacro\yend{-(\A+\R)*sin(\thetaEnd)}
           
            % Control point for smoother curve and even spacing
            \pgfmathsetmacro\xcontrol{(\xstart + \xend)*0.55}
            \pgfmathsetmacro\ycontrol{(\ystart + \yend)*0.55}
           
            % Determine arrow direction and draw (top gray arrows)
            \ifnum\nnext>\Lplus
                \ifnum\n<\Lminus
                    \draw[color=\color, -{Latex}, line width=0.5mm] (\xstart, \ystart) .. controls (\xcontrol, \ycontrol) .. (\xend, \yend);
                \else
                    \draw[color=\color, -{Latex}, line width=0.5mm, dotted] (\xstart, \ystart) .. controls (\xcontrol, \ycontrol) .. (\xend, \yend);
                \fi
            \else
                \ifnum\nnext>\Lminus
                    \draw[color=\color, {Latex}-, line width=0.5mm] (\xstart, \ystart) .. controls (\xcontrol, \ycontrol) .. (\xend, \yend);
                \else
                    \draw[color=\color, {Latex}-, line width=0.5mm, dotted] (\xstart, \ystart) .. controls (\xcontrol, \ycontrol) .. (\xend, \yend);
                \fi
            \fi
        }
        % Calculate start and end points for bottom arrows
        \foreach \shift/\color/\leap/\arrowDirection in {-1.25*\circleShift/red/-\leapHeight/left} {
            \pgfmathsetmacro\thetaStart{\om*\n+\phaseshift}
            \pgfmathsetmacro\thetaEnd{\om*\nnext+\phaseshift}
           
            % Start and end points adjusted for the bottom arrows to stay outside
            \pgfmathsetmacro\xstart{(\A-\R)*cos(\thetaStart)}
            \pgfmathsetmacro\ystart{-(\A-\R)*sin(\thetaStart)}
           
            \pgfmathsetmacro\xend{(\A-\R)*cos(\thetaEnd)}
            \pgfmathsetmacro\yend{-(\A-\R)*sin(\thetaEnd)}
           
            % Control point for smoother curve and even spacing
            \pgfmathsetmacro\xcontrol{(\xstart + \xend)*0.45}
            \pgfmathsetmacro\ycontrol{(\ystart + \yend)*0.45}
           
            % Determine arrow direction and draw (bottom red arrows)
            \ifnum\nnext>\Lminus % Point left when left of L^-
                \ifnum\n<\Lplus
                    \draw[color=\color, -{Latex},  line width=0.5mm] (\xstart, \ystart) .. controls (\xcontrol, \ycontrol) .. (\xend, \yend);
                \else
                    \draw[color=\color, -{Latex}, line width=0.5mm, dotted] (\xstart, \ystart) .. controls (\xcontrol, \ycontrol) .. (\xend, \yend);
                \fi
            \else % Point right when right of L^-
                \ifnum\n>\Lplus+1
                    \draw[color=\color, {Latex}-, line width=0.5mm] (\xstart, \ystart) .. controls (\xcontrol, \ycontrol) .. (\xend, \yend);
                \else
                    \draw[color=\color, {Latex}-, line width=0.5mm, dotted] (\xstart, \ystart) .. controls (\xcontrol, \ycontrol) .. (\xend, \yend);
                \fi
            \fi
        }
    }
    \end{scope}
    \end{tikzpicture}
    \end{center}

    \caption{In WALNUTS, the probability of producing the same orbit starting from the $i$-th state in the orbit depends only on the transitions over the subintervals between $\min(i,0)$ and $\max(i,0)$. In the illustration, $i=4$. Red arrows indicate the directions of macro steps starting from the initial state indexed at $0$. Gray arrows represent the directions of macro steps starting from $i$ in order to reach $a$ and $b$. The dotted lines mark subintervals where the macro steps move in the same direction. On such intervals,  the conditional distribution of the step size reduction factor remains unchanged.}
    \label{fig:walnuts_ap}
\end{figure}

\subsection{Reversibility of WALNUTS}

\label{sec:reversibility}

We present a rigorous proof of the reversibility of WALNUTS by interpreting it as a GIST sampler, as described in Section~\ref{sec:gist}, and verifying that it satisfies the conditions of Theorem~\ref{thm:AVM_reversibility}.  While the overall structure of the proof is concise, establishing reversibility requires careful handling of the variable step size leapfrog integrator and its interaction with orbit construction and state selection.

To streamline the proof, we introduce some additional notation. For brevity, we write $\ell_j=\ell_{j,j+1}$ for the micro step size factor associated with the macro step interval $[t_j, t_{j+1}]$. By construction, $\ell_j$ denotes the step size reduction factor used when expanding the orbit away from the initial point $(\theta_0, \rho_0)$: forward from $(\theta_j, \rho_j)$ to $(\theta_{j+1}, \rho_{j+1})$ if $j > 0$, and backward from $(\theta_{j+1}, \rho_{j+1})$ to $(\theta_j, \rho_j)$ if $j < 0$.   With this notation, \eqref{eq:Phi_j0_bfell} simplifies to:
 \begin{equation} \label{eq:Phi_j0_bfell_2}
\left. \Phi_{j,0} \right|_{\bfell} = \begin{cases} \Phi_{ h \ell_{j-1}^{-1}}^{\ell_{j-1}} \circ  \dots \circ \Phi_{ h \ell_{0}^{-1} }^{\ell_{0}}  & \text{if $j > 0$} \\
\mathcal{F} \circ \Phi_{ h \ell_{j}^{-1}}^{\ell_{j}} \circ  \dots \circ \Phi_{h \ell_{-1}^{-1} }^{\ell_{-1}} \circ \mathcal{F} & \text{if $j<0$}
\end{cases}
\end{equation}
and $\Phi_{0,0}$ is the identity map.

As described in Section~\ref{sec:walnuts-orbit},  a WALNUTS orbit  \[
\calO \ = \  ((\theta_a, \rho_a), \dots, (\theta_b, \rho_b) )  
\] is uniquely determined by: 
\begin{itemize}
    \item an initial point $(\theta, \rho) \in \mathbb{R}^{2d}$;
    \item $m$: the number of doublings; 
    \item $b$: the rightmost index of the orbit; and,
    \item $\bfell=\{\ell_{j}\}_{a \leq j\leq b-1 }$: a collection of $2^{m-1}$ micro step size factors, one for each macro step.
\end{itemize}
The leftmost index is given by $a = b - 2^m + 1$. Accordingly, we use $(\theta, \rho, \calO)$ and the tuple $(\theta, \rho, m, b, \bfell)$ interchangeably. Once the micro step size factors $\bfell$ are sampled, they determine the sequence of forward and backward leapfrog steps used to generate the orbit from the initial point $(\theta, \rho)$.

We treat $m,b,\bfell$ and the integration time index $i \in a{:}b$ as auxiliary variables, and define the augmented state 
\begin{align*}
&z \ := \ (\theta,\rho, m, b, \bfell, i)  
\end{align*}
taking values in the augmented state space
\begin{align*}
\calZ \ := \ \R^d\times \R^d\times \bbN \times \bbN \times \bbN^{2^m-1} \times \bbZ \;.
\end{align*}

We denote the conditional over the orbit by $p_\orbit(m,b,\bfell\mid\theta,\rho)$, as defined by the orbit construction procedure in Section~\ref{sec:walnuts-orbit}. 
The conditional over the integration time  $p_\ind(i\mid \theta,\rho,m,b,\bfell)$ is defined by  biased progressive sampling, as described in Section~\ref{sec:walnuts-int-time}. To verify the conditions of Theorem~\ref{thm:AVM_reversibility}, it is helpful to also consider a simpler alternative.  To distinguish between these two integration time selection strategies, we introduce the following notation, which extends to the corresponding extended target distributions as well:

\begin{itemize}
\item \textbf{Biased progressive sampling} (superscript $\bp$) is the method used in WALNUTS.  It defines $p_\ind^{\bp}$ recursively. Let $\calO_k$ denote the orbit at $k$-th doubling step and $\calW_k$ its associated weights. Then,
\begin{equation} \label{eq:p_ind_bp}
    \begin{aligned}
       p_{\ind}^{\bp}(j\mid\theta,\rho,\calO_{k+1}) &= \alpha(\calW_k, \calW_k^{\ext}) \cdot \frac{w_j}{\sum \calW_k^{\ext} } \cdot \Indc{j \in a_k^{\ext}{:}b_k^{\ext}} \\
       &\quad + \left(1 - \alpha(\calW_k, \calW_k^{\ext}) \right) \cdot p_{\ind}^{\bp}(j\mid\theta,\rho,\calO_k) \cdot \Indc{j \in a_k{:}b_k},
    \end{aligned}
    \end{equation}
    where $\calW_k^{\ext}$ are the weights associated with the extension orbit $\calO_{k}^{\ext}$, and $\alpha(\calW_k, \calW_k^{\ext})$ is the acceptance probability defined in \eqref{eq:walnuts-alpha}.
    
    \item \textbf{Multinomial sampling} (superscript $\mn$) draws the integration time index from a categorical distribution proportional to the orbit weights:
    \begin{equation} \label{eq:p_ind_mn}
    p_{\ind}^{\mn}(j\mid \theta,\rho,\calO) = \frac{w_j}{\sum \calW }, \quad \text{for } j \in a{:}b
    \end{equation}
    where $\calW = (w_a, \dots, w_b)$ are the weights associated with the orbit $\calO$.

\end{itemize}

The superscripts $\bp$ and 
$\mn$ also apply to the corresponding extended target distributions.  Let $\zeta$ denote the reference measure on $\mathbb{A}$,  given by the product of Lebesgue measure on the first two components and counting measure on the remaining components.   The extended target density with respect to $\zeta$ under biased progressive sampling is given by
\begin{align} \label{eq:walnuts-joint}
    p_{\joint}^{\bp}(z) \ &\propto  \ e^{-H(\theta,\rho)} \cdot p_{\orbit}(m, b, \bfell\mid \theta,\rho) \cdot p_{\ind}^{\bp}(i \mid \theta,\rho, m, b, \bfell ) 
\end{align}
and the analogous definition $p_{\joint}^{\mn}$ is obtained by replacing $p_{\ind}^{\bp}$ with $p_{\ind}^{\mn}$.  The proportionality symbol reflects that $e^{-H(\theta,\rho)}$ is not necessarily normalized.

For any integer $k \in \mathbb{Z}$, we introduce the shift operator $\mathcal{S}^k$ which acts on $\bfell$ as follows: \[
(\shift^k \bfell)_j := \ell_{j-k} \;.
\] 
This operator maps the micro step size factor associated with the macro step $[t_j, t_{j+1}]$ to the one associated with $[t_{j-k}, t_{j-k+1}]$.
 In terms of this shift operator, define the map $\Psi: \calZ \to \calZ$ by  \begin{align} \label{eq:walnuts-inv}
    \Psi(z)  \ = \ (\left. \Phi_{i,0} \right|_{\bfell}(\theta,\rho), m, b-i, \shift^{i}\bfell, -i ) \;.
\end{align}
This map is carefully designed to have the key properties stated in the lemmas that follow: in particular, $\Psi$ will be shown to be a $\zeta$-preserving involution (Lemma~\ref{lem:measure-preserving-involution-Psi}) under which the extended target distribution remains invariant (Lemma~\ref{lem: p joint bp}). These properties are essential for understanding the reversibility of WALNUTS.

\medskip

\begin{lemma}
\label{lem:measure-preserving-involution-Psi}
    The mapping $\Psi$ is a $\zeta$-preserving involution on $\calZ$.
\end{lemma}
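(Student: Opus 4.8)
The plan is to check the two ingredients of the claim separately: that $\Psi\circ\Psi=\mathrm{id}_{\calZ}$, and that $\Psi$ pushes the reference measure $\zeta$ forward to itself. The bookkeeping on the discrete coordinates of the involution claim is immediate. Writing $\Psi(\theta,\rho,m,b,\bfell,i)=(\theta',\rho',m,b-i,\shift^i\bfell,-i)$ with $(\theta',\rho')=\left.\Phi_{i,0}\right|_{\bfell}(\theta,\rho)$, a second application sends $m\mapsto m$, $b-i\mapsto(b-i)-(-i)=b$, $-i\mapsto-(-i)=i$, and $\shift^i\bfell\mapsto\shift^{-i}\shift^i\bfell=\shift^0\bfell=\bfell$, using that the shifts form a group ($\shift^a\circ\shift^b=\shift^{a+b}$, $\shift^0=\mathrm{id}$). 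Hence the only substantive point in the involution claim is the geometric identity
\[
\left.\Phi_{-i,0}\right|_{\shift^i\bfell}\Bigl(\left.\Phi_{i,0}\right|_{\bfell}(\theta,\rho)\Bigr)=(\theta,\rho)\qquad\text{for every }(\theta,\rho)\in\R^{2d},
\]
i.e.\ that the re-anchored integrator $\left.\Phi_{-i,0}\right|_{\shift^i\bfell}$ inverts $\left.\Phi_{i,0}\right|_{\bfell}$.

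To prove this I would expand both maps as explicit compositions of single-micro-step leapfrog maps. Using the shorthand $\ell_j=\ell_{j,j+1}$, write $A_k:=\Phi_{h\ell_k^{-1}}^{\ell_k}$ for the $\ell_k$-micro-step leapfrog map on $[t_k,t_{k+1}]$, so that, say for $i>0$, $\left.\Phi_{i,0}\right|_{\bfell}=A_{i-1}\circ\cdots\circ A_1\circ A_0$. The shift operator $\shift^i$ is designed precisely so that the re-anchored backward integrator becomes the reversal of this composition conjugated by the momentum flip, $\left.\Phi_{-i,0}\right|_{\shift^i\bfell}=\mathcal{F}\circ A_0\circ A_1\circ\cdots\circ A_{i-1}\circ\mathcal{F}$; verifying this index matching is part of the work. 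Composing the two maps and telescoping from the inside, one repeatedly uses $\mathcal{F}^2=\mathrm{id}$ together with the time-reversibility of each leapfrog power, which in the equivalent form $A_k\circ\mathcal{F}\circ A_k=\mathcal{F}$ collapses a matched central pair $A_k\circ\mathcal{F}\circ A_k$ to $\mathcal{F}$; peeling off $A_{i-1},A_{i-2},\dots,A_0$ in succession reduces $\mathcal{F}\circ A_0\circ\cdots\circ A_{i-1}\circ\mathcal{F}\circ A_{i-1}\circ\cdots\circ A_0$ down to $\mathcal{F}\circ\mathcal{F}=\mathrm{id}$. The case $i<0$ is handled identically with the forward and momentum-flipped-backward compositions exchanged, and $i=0$ is trivial since $\left.\Phi_{0,0}\right|_{\bfell}=\mathrm{id}$ and $\shift^0=\mathrm{id}$.

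For $\zeta$-preservation, observe that $\Psi$ has a triangular structure: the discrete output $(m,b-i,\shift^i\bfell,-i)$ depends only on the discrete input $(m,b,\bfell,i)$, and the induced map on the countable discrete coordinate set is a bijection — indeed its own inverse, consistent with $\Psi^2=\mathrm{id}$ — so it preserves counting measure. For each fixed value of the discrete data, the continuous part of $\Psi$ is the map $(\theta,\rho)\mapsto\left.\Phi_{i,0}\right|_{\bfell}(\theta,\rho)$, a finite composition of single-micro-step leapfrog maps and momentum flips; each factor is volume preserving ($|\det D\Phi_h|\equiv1$ and $|\det\mathcal{F}|=1$), so the composition has unit absolute Jacobian determinant and preserves Lebesgue measure on $\R^{2d}$. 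Writing $\int_\calZ (f\circ\Psi)\,d\zeta$ as a sum over the discrete data of Lebesgue integrals, the unit-Jacobian change of variables in the continuous coordinates and the bijective relabeling of the discrete coordinates together yield $\int_\calZ f\,d\zeta$; equivalently, once $\Psi^2=\mathrm{id}$ is known it suffices to check that the continuous part has absolute Jacobian $1$.

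The main obstacle is the geometric identity, and within it the purely combinatorial task of confirming, interval by interval, that the micro-step counts of the orbit built forward from $(\theta_0,\rho_0)$ agree with those of the orbit built after re-anchoring at $(\theta_i,\rho_i)$ — this matching is exactly what forces the shift $\shift^i$ into the definition of $\Psi$, and the correct sign convention for $\shift$ must be pinned down here. One must also be careful that the momentum flips $\mathcal{F}$ appearing in $\left.\Phi_{j,0}\right|_{\bfell}$ for $j<0$ are positioned so that the conjugation identity $A_k\circ\mathcal{F}\circ A_k=\mathcal{F}$ applies at every stage of the telescoping. Once the compositions are written out with the correct indexing, the remainder is mechanical.
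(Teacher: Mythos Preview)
Your proposal is correct and follows essentially the same approach as the paper: reduce the involution claim to the geometric identity $\left.\Phi_{-i,0}\right|_{\shift^i\bfell}\circ\left.\Phi_{i,0}\right|_{\bfell}=\mathrm{id}$, and handle $\zeta$-preservation by noting that the continuous part is a composition of volume-preserving leapfrog maps while the discrete part is a bijection. The main difference is one of explicitness: the paper asserts the geometric identity directly from the definitions and the reindexing $(\shift^i\bfell)_j=\ell_{j-i}$ (with an accompanying diagram), whereas you spell out the telescoping via $A_k\circ\mathcal{F}\circ A_k=\mathcal{F}$; your measure-preservation argument, invoking that $\Psi$ restricted to the discrete coordinates is its own inverse and hence a counting-measure-preserving bijection, is also slightly cleaner than the paper's appeal to translation invariance.
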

\begin{proof}
    The map $\Psi$ is an involution, since for all $z\in\mathcal Z$,
\begin{align*}
	\Psi\circ\Psi(z)\ &=\ \Psi(\left. \Phi_{i,0} \right|_{\bfell} (\theta,\rho),m, b-i, \shift^{i}\bfell, -i ) \\
    \ &=\  (\left. \Phi_{-i,0} \right|_{\shift^{i} \bfell} (\theta_i,\rho_i),m, b, \bfell, i ) 
    \ = \ z\;.
\end{align*}
In the last step, we used the identity \[
\left. \Phi_{-i,0} \right|_{\shift^{i} \bfell} \circ \left. \Phi_{0,i} \right|_{\bfell} \ = \  \operatorname{id}_{\mathbb{R}^{2d}} \;,
\] which follows from the definition of $\left. \Phi_{0,i} \right|_{\bfell}$ in \eqref{eq:Phi_j0_bfell_2} and the fact that $(\shift^i \bfell)_j = \ell_{j-i}$ reindexes the micro step size factors appropriately for reversal.  The following diagram summarizes the composition for $i>0$,  with red arrows indicating the forward trajectory constructed by $\left. \Phi_{0,i} \right|_{\bfell}$ and gray arrows indicating the reversed trajectory constructed by $\left. \Phi_{-i,0} \right|_{\shift^{i} \bfell}$, which returns the state to its original position:
\begin{center}
\begin{tikzpicture}[>=Stealth, node distance=2.5cm]
% Nodes
\node (x0) at (0,0) {$(\theta, \rho)$};
\node (x1) [right of=x0] {$(\theta_1, \rho_1)$};
\node (x2) [right of=x1] {$\cdots$};
\node (xi-1) [right of=x2] {$(\theta_{i-1}, \rho_{i-1})$};
\node (xi) [right of=xi-1] {$(\theta_i, \rho_i)$};

% Forward arrows (red, above)
\draw[->, red, ultra thick, bend left=45] (x0) to node[above] {\small $\ell_0$} (x1);
\draw[->, red, ultra thick, bend left=45] (x1) to node[above] {\small $\ell_1$} (x2);
\draw[->, red, ultra thick, bend left=45] (x2) to node[above] {\small $\ell_{i-2}$} (xi-1);
\draw[->, red, ultra thick, bend left=45] (xi-1) to node[above] {\small $\ell_{i-1}$} (xi);

% Reverse arrows (gray, below)
\draw[<-, gray, ultra thick, bend right=45] (xi-1) to node[below] {\small $(\shift^i \ell)_{-1}$} (xi);
\draw[<-, gray, ultra thick, bend right=45] (x2) to node[below] {\small $(\shift^i \ell)_{-2}$} (xi-1);
\draw[<-, gray, ultra thick, bend right=45] (x1) to node[below] {\small $(\shift^i \ell)_{-i+1}$} (x2);
\draw[<-, gray, ultra thick, bend right=45] (x0) to node[below] {\small $(\shift^i \ell)_{-i}$} (x1);
\end{tikzpicture}
\end{center}
The transition from $(\theta,\rho)$ to $(\theta_i,\rho_i)$ is a composition of multiple leapfrog steps, therefore the transition is measure preserving in the first two components. The counting measure on $\bbZ$ is invariant under reflection. The counting measure on $\bbN$ is invariant under translation. This proves that $\Psi$ preserves  $\zeta$.
\end{proof}

For notational brevity, we define
\begin{align*}
    p^+_j \ &= \ p_{\micro}(\ell_{j} \mid \micro(\theta_{j},\rho_{j},\mu, M, h, \delta)) \;, \\
    p^-_j \ &= \  p_{\micro}(\ell_{j} \mid \micro(\theta_{j+1},-\rho_{j+1},\mu, M, h, \delta)) \;.
\end{align*}
And for any integers $j\leq k$, we write
\begin{align*}
    p^+_{j:k} \ &= \  p^+_j\cdot p^+_{j+1}\cdots p^+_k,  \\ 
    p^-_{j:k} \ &= \  p^-_j\cdot p^-_{j+1}\cdots p^-_k.
\end{align*}
If $j>k$, we use the convention that $p^+_{j{:}k}=1$ and $p^-_{j{:}k}=1$.

\medskip

\begin{lemma} \label{lem:orbit}
    The orbit selection kernel of WALNUTS is given by
    \begin{align*}
        p_{\orbit}(m, b, \bfell\mid \theta,\rho ) 
        =\frac{1}{2^m}\cdot p^+_{0{:}b-1} \cdot p^-_{a{:}-1}.
    \end{align*}
\end{lemma}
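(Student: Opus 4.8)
The plan is to compute $p_{\orbit}(m,b,\bfell\mid\theta,\rho)$ by unwinding the recursive doubling procedure from Section~\ref{sec:walnuts-orbit} and tracking, at each doubling step, (i) the probability of the Bernoulli direction choice, (ii) the product of the micro-step-size-factor selection probabilities $p_{\micro}(\cdot\mid\micro(\cdot))$ for the new macro steps added in that doubling, and (iii) the fact that sub-U-turn and U-turn checks are \emph{not} triggered (which, on the event that the final orbit is exactly $(m,b,\bfell)$, carries probability one since the orbit is given). First I would fix the final orbit with index range $a{:}b$, $a=b-2^m+1$, and note that $b$ has binary expansion $b=\sum_{j=1}^m B_j 2^{j-1}$, so the direction bits $B_1,\dots,B_m$ are uniquely determined by $b$; their joint probability is $2^{-m}$, which produces the prefactor $1/2^m$.

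Next I would argue by induction on the doubling step $k=0,1,\dots,m-1$ that the partial orbit $\calO_k$ with index range $a_k{:}b_k$ is produced with probability $2^{-k}\cdot p^+_{0{:}b_k-1}\cdot p^-_{a_k{:}-1}$ (interpreting the empty products as $1$ via the stated convention). The base case $k=0$ is the singleton, probability $1$. For the inductive step, the extension orbit $\calO_k^{\ext}$ has length $2^k$; if $B_{k+1}=1$ it occupies indices $(b_k+1){:}(b_k+2^k)$ and is built by forward variable-step leapfrog from $(\theta_{b_k},\rho_{b_k})$, so its conditional probability (given $\calO_k$) is $\prod_{j=b_k}^{b_k+2^k-1} p_{\micro}(\ell_j\mid\micro(\theta_j,\rho_j,\mu,M,h,\delta))=p^+_{b_k:b_k+2^k-1}$, which combined with $p^+_{0:b_k-1}$ gives $p^+_{0:b_{k+1}-1}$; the left part of the orbit is unchanged so the $p^-$ factor is untouched. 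Symmetrically, if $B_{k+1}=0$ the extension occupies $(a_k-2^k){:}(a_k-1)$, is built by backward variable-step leapfrog, and contributes $\prod_{j=a_k-2^k}^{a_k-1}p_{\micro}(\ell_j\mid\micro(\theta_{j+1},-\rho_{j+1},\mu,M,h,\delta))=p^-_{a_k-2^k:a_k-1}$, updating $p^-_{a_k:-1}$ to $p^-_{a_{k+1}:-1}$ while leaving the $p^+$ factor untouched. In either case the direction bit contributes a factor $1/2$, advancing $2^{-k}$ to $2^{-(k+1)}$. Setting $k=m$ at termination yields exactly $\frac{1}{2^m}p^+_{0:b-1}p^-_{a:-1}$.

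The main subtlety — and the step I'd be most careful with — is the independence/Markov structure of the micro-step factors: each $\ell_j$ is drawn from $p_{\micro}(\cdot\mid\micro(\theta_j,\rho_j,\dots))$ (forward, $j\ge 0$) or $p_{\micro}(\cdot\mid\micro(\theta_{j+1},-\rho_{j+1},\dots))$ (backward, $j<0$), where the conditioning point is itself determined by the previously drawn factors along that same macro chain. So strictly the orbit construction samples the $\ell_j$'s sequentially, and one must observe that conditioning on the entire final orbit $(\theta,\rho,m,b,\bfell)$ pins down every intermediate state $(\theta_j,\rho_j)$, so the product of conditional densities telescopes into the stated product of $p^+$ and $p^-$ terms with no leftover normalization. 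A secondary point worth one sentence: on the event that the realized orbit equals the given $(m,b,\bfell)$, none of the sub-U-turn or U-turn checks can have fired before step $m$ (else a shorter orbit would have been returned) and — if $m<m_{\max}$ — the stopping check at step $m$ did fire, but this is a deterministic consequence of the orbit's geometry, contributing no probability factor; hence these checks do not alter the formula. I would present the induction compactly, treating the $B_{k+1}=1$ and $B_{k+1}=0$ cases in parallel, and invoke the empty-product convention $p^+_{j:k}=p^-_{j:k}=1$ for $j>k$ to handle the boundary cases (e.g. $b=0$ or $a=0$) uniformly.
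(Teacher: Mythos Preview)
Your proof is correct and follows essentially the same approach as the paper: decomposing the orbit probability into the product of Bernoulli direction-bit probabilities (yielding $2^{-m}$) and the sequential micro-step-factor probabilities (yielding $p^+_{0:b-1}\cdot p^-_{a:-1}$), while treating the U-turn and sub-U-turn stopping events as determined by the orbit's geometry. You present the argument as a forward induction on the doubling step $k$, whereas the paper frames it as a marginalization over the full Bernoulli sequence $(B_1,\ldots,B_{m_{\max}})$ and all micro factors out to the maximal orbit, but the underlying computation and the handling of the stopping rules are identical.
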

\begin{proof}[Proof of Lemma~\ref{lem:orbit}]
    A WALNUTS orbit is generated by first sampling a sequence of Bernoulli variables $B=(B_1,\ldots,B_{m_{\max}})$ and  micro step size factors $\{\ell_j\}_{a_{m_{\max}}\leq j < b_{m_{\max}}}$. Given $B$ and $\{\ell_j\}_{a_{m_{\max}}\leq j < b_{m_{\max}} }$, the selection of $\calO=\calO_m$ is determined deterministically by the NUTS-style doubling procedure, and one of the following conditions must hold: 
    \begin{enumerate}
        \item $\calO_m$ does not satisfy the sub-U-turn condition, but its extension $\calO_m^{\ext}$ does;
        \item $\calO_m$ satisfies the U-turn condition, but none of its sub-orbits do;
        \item The maximum number of doublings is reached, i.e., $m=m_{\max}$.
    \end{enumerate}
    Let us write this deterministic selection as $\calO_m = \text{NUTS} (B, \{\ell_j\}_{a_{m_{\max}}\leq j < b_{m_{\max}} })$.
    Then, the probability of selecting orbit $\calO_m$ is
    \begin{align*}
        &p_\orbit(\calO_m\mid \theta,\rho ) = \sum_{B' }\sum_{\ell'} p(B') \cdot p(\ell'\mid B' )\cdot \Indc{\text{NUTS}(B',\ell' ) = \calO_m },
    \end{align*}
    where the sums range over all possible values of $ B' \in \{0,1\}^{m_{\max}} $ and $ \ell' = \{ \ell'_j \}_{a_{m_{\max}} \leq j < b_{m_{\max}}} $, and where
    \begin{align*}
        p(B') &= \frac{1}{2^{m_{\max}}},\\
        p(\ell'\mid B') &= \prod_{j=0}^{b_{\max-1} } p_\micro(\ell'_j \mid \micro(\theta'_j,\rho'_j, \mu, M, h, \delta) ) \cdot \prod_{j=a_{\max} }^{-1} p_\micro(\ell'_j \mid \micro(\theta'_{j+1}, -\rho'_{j+1}, \mu, M, h, \delta) ),
    \end{align*}
    and $(\theta'_j,\rho'_j)$ represents the $j$-th state in the orbit constructed from $B',\ell'$.
    
    The indicator $\Indc{\text{NUTS}(B',\ell' ) = \calO_m }$ is equal to 1 if and only if 
    \begin{align*}
        (B'_1,\ldots,B'_m)= (B_1,\ldots,B_m),\text{ and }
        \ell'_j=\ell_j \text{ for all } a_m\leq j\leq b_m-1 .
    \end{align*}
Hence, the probability of selecting $\calO_m$ is equal to the probability of matching the first $ m $ Bernoulli steps and the corresponding micro step factors:
    \begin{align*}
        \frac{1}{2^m} \cdot \prod_{j=0}^{b_m-1} p_\micro(\ell_j\mid \micro(\theta_j,\rho_j, \delta)) \cdot \prod_{j=a_m}^{-1} p_\micro(\ell_j\mid \micro(\theta_{j+1}, -\rho_{j+1}, \delta) ) = \frac{1}{2^m} \cdot p^+_{0:b-1} \cdot p^-_{a:-1}.
    \end{align*}
\end{proof}

\begin{lemma} \label{lem:mn}
    The index selection kernel for multinomial sampling is given by
    \begin{align*}
        p_{\ind}^{\mn}(j\mid \theta,\rho, \calO ) = \frac{e^{-H(\theta_j,\rho_j)}}{Z(\theta, \rho, \calO)}  \cdot \begin{cases}
             \dfrac{p^-_{0:j-1}} {p^+_{0:j-1} } & 0\leq j\leq b\\
             \dfrac{p^+_{j:-1}}{p^-_{j:-1}} & a\leq j < 0\\
        \end{cases}
    \end{align*}
    where the normalizing constant is given by
    \begin{align*}
        Z(\theta,\rho,\calO) = \sum_{j=0}^b e^{-H(\theta_j,\rho_j)} \frac{p^-_{0:j-1}} {p^+_{0:j-1} } + \sum_{j=a}^{-1} e^{-H(\theta_j,\rho_j)} \frac{p^+_{j:-1}}{p^-_{j:-1}}.
    \end{align*}
\end{lemma}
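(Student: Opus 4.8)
The plan is to unroll the weight recursions \eqref{eq:w0}--\eqref{eq:wts-backw} into closed forms for each $w_j$, and then divide by $\sum\calW$, which by \eqref{eq:p_ind_mn} is exactly $p_\ind^\mn(j\mid\theta,\rho,\calO)$. Two preliminary observations make the bookkeeping clean. First, the ``target'' factor telescopes: writing $\mu(\theta_k)e^{-\frac12\rho_k\tran M^{-1}\rho_k}=e^{-H(\theta_k,\rho_k)}$, the base case \eqref{eq:w0} is $w_0=e^{-H(\theta_0,\rho_0)}$, and \eqref{eq:wts-forw}, \eqref{eq:wts-backw} each multiply by a ratio of consecutive such factors, so chaining them leaves only the endpoint values. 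Second, the $p_\micro$ ratios translate into the shorthand $p^\pm_j$: using $\ell_{j-1,j}=\ell_{j-1}$, the numerator of the $p_\micro$ ratio in \eqref{eq:wts-forw} is $p_\micro(\ell_{j-1}\mid\micro(\theta_j,-\rho_j,\mu,M,h,\delta))=p^-_{j-1}$ and its denominator is $p_\micro(\ell_{j-1}\mid\micro(\theta_{j-1},\rho_{j-1},\mu,M,h,\delta))=p^+_{j-1}$; symmetrically, since $\ell_{j,j+1}=\ell_j$, the ratio in \eqref{eq:wts-backw} equals $p^+_j/p^-_j$.

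With these identifications, \eqref{eq:wts-forw} reads $w_j = \bigl(e^{-H(\theta_j,\rho_j)}/e^{-H(\theta_{j-1},\rho_{j-1})}\bigr)\,(p^-_{j-1}/p^+_{j-1})\,w_{j-1}$ for $j>0$. I would then prove by induction on $j\ge 0$ that $w_j = e^{-H(\theta_j,\rho_j)}\,p^-_{0:j-1}/p^+_{0:j-1}$: the case $j=0$ is the base case, using the convention $p^\pm_{0:-1}=1$; the inductive step appends the factors $e^{-H(\theta_j,\rho_j)}/e^{-H(\theta_{j-1},\rho_{j-1})}$ and $p^-_{j-1}/p^+_{j-1}$ to the hypothesis for $w_{j-1}$, and the first of these cancels against the $e^{-H(\theta_{j-1},\rho_{j-1})}$ sitting in $w_{j-1}$. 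An identical downward induction on $j\le 0$, starting again from $w_0$ and using \eqref{eq:wts-backw} in the form $w_j=\bigl(e^{-H(\theta_j,\rho_j)}/e^{-H(\theta_{j+1},\rho_{j+1})}\bigr)\,(p^+_j/p^-_j)\,w_{j+1}$, gives $w_j = e^{-H(\theta_j,\rho_j)}\,p^+_{j:-1}/p^-_{j:-1}$ for $a\le j<0$.

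Finally, since $p_\ind^\mn(j\mid\theta,\rho,\calO) = w_j/\sum_{k=a}^{b} w_k$ and substituting the two closed forms into $\sum_{k=a}^b w_k$ reproduces exactly the stated normalizer $Z(\theta,\rho,\calO)=\sum_{j=0}^b e^{-H(\theta_j,\rho_j)}\,p^-_{0:j-1}/p^+_{0:j-1}+\sum_{j=a}^{-1}e^{-H(\theta_j,\rho_j)}\,p^+_{j:-1}/p^-_{j:-1}$, the claimed identity follows. The only step requiring care is the index accounting --- tracking whether $(\theta_j,-\rho_j)$ or $(\theta_{j+1},-\rho_{j+1})$ enters $\micro$, aligning $\ell_{j,j+1}$ with the subscript conventions of $p^\pm_j$ and of the products $p^\pm_{j:k}$, and respecting the $p^\pm_{j:k}=1$ convention when $j>k$; beyond this there is no analytic content.
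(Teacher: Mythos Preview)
Your proposal is correct and follows exactly the approach the paper takes: the paper simply states that the lemma ``follows directly from the definitions of $p_{\ind}^{\mn}$ in \eqref{eq:p_ind_mn} and the weights $w_j$ in \eqref{eq:w0}, \eqref{eq:wts-forw} and \eqref{eq:wts-backw},'' and your argument is precisely the explicit unrolling of those recursions, with the telescoping of the $e^{-H}$ factors and the careful identification of the $p_\micro$ ratios with $p^\pm_{j-1}$ and $p^\pm_j$ spelled out.
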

Lemma~\ref{lem:mn} follows directly from the definitions of $p_{\ind}^{\mn}$ in \eqref{eq:p_ind_mn} and the weights $w_j$ in \eqref{eq:w0}, \eqref{eq:wts-forw} and \eqref{eq:wts-backw}.

\medskip

\begin{lemma}\label{lem: p joint mn}
The extended target density with multinomial sampling is invariant under $\Psi$ in \eqref{eq:walnuts-inv}, i.e., $p_\joint^\mn\circ\Psi=p_\joint^\mn$.
\end{lemma}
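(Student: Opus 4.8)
The plan is to reduce the identity, via the closed forms in Lemmas~\ref{lem:orbit} and~\ref{lem:mn}, to a single scaling relation between the normalizing constants of the multinomial index kernel before and after $\Psi$, and then establish that relation by re-indexing. It suffices to verify $p_\joint^\mn(\Psi(z))=p_\joint^\mn(z)$ on the support of $p_\joint^\mn$: off the support both sides vanish, and $\Psi$ (the WALNUTS involution) maps the support to itself because the sub-U-turn checks make producibility of an orbit independent of the starting point along it. If $i=0$ then $\Psi(z)=z$ and there is nothing to prove; and since $\Psi$ is an involution (Lemma~\ref{lem:measure-preserving-involution-Psi}), the case $i<0$ follows from the case $i>0$ applied to $\Psi(z)$. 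So assume $i>0$ and write $z'=\Psi(z)=(\theta_i,\rho_i,m,b-i,\shift^i\bfell,-i)$, with orbit $\calO'$, endpoints $a'=a-i$ and $b'=b-i$, and index $-i$. Because $\shift^i$ re-labels the micro step size factors exactly so that expanding $\calO'$ away from $(\theta_i,\rho_i)$ uses the same composition of leapfrog maps as $\calO$ (this is the identity $\left.\Phi_{-i,0}\right|_{\shift^i\bfell}\circ\left.\Phi_{i,0}\right|_{\bfell}=\mathrm{id}$ from the proof of Lemma~\ref{lem:measure-preserving-involution-Psi}), the $j$-th state of $\calO'$ equals $(\theta_{i+j},\rho_{i+j})$ and its $j$-th micro step size factor equals $\ell_{i+j}$; consequently the quantities $p^+_j,p^-_j$ computed along $\calO'$ — call them $(p')^+_j,(p')^-_j$ — satisfy $(p')^+_j=p^+_{i+j}$ and $(p')^-_j=p^-_{i+j}$ for every $j$. (A step that is forward in $\calO$ may be backward in $\calO'$; the two orbit probabilities then differ, which is precisely why the weights carry the $p^-/p^+$ ratio.)

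Next I would expand both densities. By Lemma~\ref{lem:orbit}, $p_\orbit(m,b,\bfell\mid\theta,\rho)=2^{-m}p^+_{0{:}b-1}p^-_{a{:}-1}$, and applying the same lemma to $\calO'$ with the re-indexing above gives $p_\orbit(m,b-i,\shift^i\bfell\mid\theta_i,\rho_i)=2^{-m}p^+_{i{:}b-1}p^-_{a{:}i-1}$. By Lemma~\ref{lem:mn} (the $0\le j\le b$ branch with $j=i$ for $z$, and the $a'\le j<0$ branch with $j=-i$ for $z'$, using $(\theta'_{-i},\rho'_{-i})=(\theta_0,\rho_0)$, $(p')^+_{-i{:}-1}=p^+_{0{:}i-1}$, $(p')^-_{-i{:}-1}=p^-_{0{:}i-1}$),
\[
p_\ind^\mn(i\mid\theta,\rho,\calO)=\frac{e^{-H(\theta_i,\rho_i)}}{Z(\theta,\rho,\calO)}\cdot\frac{p^-_{0{:}i-1}}{p^+_{0{:}i-1}},\qquad p_\ind^\mn(-i\mid\theta_i,\rho_i,\calO')=\frac{e^{-H(\theta_0,\rho_0)}}{Z(\theta_i,\rho_i,\calO')}\cdot\frac{p^+_{0{:}i-1}}{p^-_{0{:}i-1}}.
\]
Plugging these into $p_\joint^\mn\propto e^{-H}\,p_\orbit\,p_\ind^\mn$ and cancelling common factors (using $p^+_{0{:}b-1}=p^+_{0{:}i-1}p^+_{i{:}b-1}$ and $p^-_{a{:}-1}p^-_{0{:}i-1}=p^-_{a{:}i-1}$), both joint densities become the same multiple of $2^{-m}e^{-H(\theta_0,\rho_0)}e^{-H(\theta_i,\rho_i)}p^+_{i{:}b-1}p^-_{a{:}i-1}$, and the identity reduces to the single requirement
\[
Z(\theta_i,\rho_i,\calO')=\frac{p^+_{0{:}i-1}}{p^-_{0{:}i-1}}\,Z(\theta,\rho,\calO).
\]

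Finally I would prove this scaling identity, which is the real content of the lemma. The cleanest route is to show directly from the weight recursions \eqref{eq:w0}–\eqref{eq:wts-backw} that the weights $w'_j$ of $\calO'$ (anchored at $(\theta_i,\rho_i)$) satisfy $w'_j=\tfrac{p^+_{0{:}i-1}}{p^-_{0{:}i-1}}\,w_{i+j}$ for all $j\in a'{:}b'$ — that is, the new weights are a single common rescaling of the original weights re-indexed by $i$. This follows by telescoping each recursion (so that $w_k=e^{-H(\theta_k,\rho_k)}p^-_{0{:}k-1}/p^+_{0{:}k-1}$ for $k\ge0$ and $w_k=e^{-H(\theta_k,\rho_k)}p^+_{k{:}-1}/p^-_{k{:}-1}$ for $k<0$, with the analogous formulas for $w'_j$ via $(\theta'_j,\rho'_j)=(\theta_{i+j},\rho_{i+j})$ and $(p')^\pm_j=p^\pm_{i+j}$) and then collapsing the resulting ratios by splitting the ranges of the products $p^\pm_{\cdot{:}\cdot}$ at $0$ and at $i$; in every case — whether $i+j$ is nonnegative or negative — the $j$-dependence cancels and leaves exactly $p^+_{0{:}i-1}/p^-_{0{:}i-1}$. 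Summing over $j$ and re-indexing by $k=i+j$ then yields $Z(\theta_i,\rho_i,\calO')=\tfrac{p^+_{0{:}i-1}}{p^-_{0{:}i-1}}Z(\theta,\rho,\calO)$, completing the argument. (Equivalently, one may substitute $(p')^\pm_j=p^\pm_{i+j}$ into the explicit formula for $Z(\theta_i,\rho_i,\calO')$ in Lemma~\ref{lem:mn} and match it term-by-term against the right-hand side.) The main obstacle is exactly this last step: the forward/backward asymmetry of the micro step size references means the per-state weight ratio is not obviously constant along the orbit, and the weights \eqref{eq:wts-forw}–\eqref{eq:wts-backw} are calibrated precisely so that it is.
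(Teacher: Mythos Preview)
Your proof is correct and follows essentially the same approach as the paper: both expand $p_\joint^\mn(z)$ and $p_\joint^\mn(\Psi(z))$ via Lemmas~\ref{lem:orbit} and~\ref{lem:mn}, cancel common factors, and reduce the claim to the scaling identity $Z(\theta_i,\rho_i,\calO')\,p^-_{0{:}i-1}=Z(\theta,\rho,\calO)\,p^+_{0{:}i-1}$. The only difference is in how that identity is established: the paper expands each side into three explicit sums and matches them term-by-term, whereas you first telescope the weight recursions to show the stronger pointwise relation $w'_j=\tfrac{p^+_{0{:}i-1}}{p^-_{0{:}i-1}}\,w_{i+j}$ and then sum. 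Your route is slightly more structural---it exposes that the entire weight profile is a single rescaling of the shifted one, which is precisely why the multinomial kernel is automatically invariant---while the paper's direct expansion is a bit more self-contained. Both are short and equivalent; you even note the paper's approach as an alternative.
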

\begin{proof}
The case $i=0$ is trivial, as $\Psi$ is the identity.  We therefore assume $i > 0$; the case $i < 0$ is analogous and omitted. By Lemmas~\ref{lem:orbit} and~\ref{lem:mn}, the extended target evaluated at $z$ satisfies:
\begin{align}
    p_{\joint}^\mn(z) \ &\propto \  e^{-H(\theta,\rho)} \cdot 
    \frac{1}{2^m} p^+_{0{:}b-1} p^-_{a{:}-1}\cdot
    \frac{1}{Z(\theta,\rho,\calO)} e^{-H(\theta_i,\rho_i)} \frac{p^-_{0:i-1}} {p^+_{0:i-1} } \nonumber \\
   \  &\propto \ e^{-H(\theta,\rho) - H(\theta_i,\rho_i)} \frac{1}{2^m} p^-_{a:i-1} p^+_{i:b-1} \frac{1}{Z(\theta,\rho,\calO)} \;. \label{eq:pmn_z}
\end{align}
Now consider the transformed point $z' = \Psi(z)$, which shifts the proposal index $i$ to $0$ and the initial index $0$ to $-i$, and reindexes the micro step size factors via the shift operator: $\shift^i \bfell$ satisfies $(\shift^i \bfell)_j = \ell_{j - i}$. Thus, micro step factors and integration times in the reverse orbit are properly aligned,  and hence, we simply write: $(\theta_i,\rho_i,\calO ) = (\theta_i, \rho_i, m, \shift^i \bfell, b-i)$ which corresponds to the same orbit now viewed from the new initial point $(\theta_i, \rho_i)$. Evaluating the extended target at $z'$ yields:
\begin{align}
    p_{\joint}^\mn(z')
   \  &\propto \  e^{-H(\theta_i,\rho_i)} \cdot \frac{1}{2^m} p^+_{i:b-1}  p^-_{a:i-1} \cdot \frac{1}{Z(\theta_i,\rho_i,\calO )}  e^{-H(\theta,\rho)} \frac{ p^+_{0:i-1} }{p^-_{0:i-1} } \nonumber \\
   \  &\propto \  e^{-H(\theta,\rho) - H(\theta_i,\rho_i)} \frac{1}{2^m}p^-_{a:-1} p^+_{0:b-1}  \frac{1}{Z(\theta_i,\rho_i,\calO )} \;. \label{eq:pmn_zp}
\end{align}
Comparing \eqref{eq:pmn_z} to \eqref{eq:pmn_zp}, to prove $p_\joint^\mn(z') = p_\joint^\mn(z)$, it suffices to show
\begin{align} \label{eq:Z_identity}
   Z(\theta_i,\rho_i,\calO) \cdot p^-_{0:i-1} \ = \  Z(\theta,\rho,\calO) \cdot p^+_{0:i-1} \;.
\end{align}

Let $\mu_j := e^{-H(\theta_j, \rho_j)}$ denote the unnormalized weight at index $j$. We first expand $Z(\theta, \rho, \calO)$:
\begin{align*}
    Z(\theta,\rho,\calO)\cdot p^+_{0:i-1} &= \Big(\sum_{j=0}^b \mu_j \frac{p^-_{0:j-1}} {p^+_{0:j-1} } + \sum_{j=a}^{-1} \mu_j \frac{p^+_{j:-1}}{p^-_{j:-1}} \Big) \cdot p^+_{0:i-1}  \\ 
     &=\sum_{j=0}^i \mu_j p^-_{0:j-1} p^+_{j:i-1} + \sum_{j=i+1}^b \mu_j \frac{p^-_{0:j-1}}{p^+_{i:j-1}}  + \sum_{j=a}^{-1} \mu_j \frac{p^+_{j:i-1}}{p^-_{j:-1}} \\
    &=\sum_{j=0}^{i-1} \mu_j p^-_{0:j-1} p^+_{j:i-1} + \sum_{j=i}^b \mu_j \frac{p^-_{0:j-1}}{p^+_{i:j-1}}  + \sum_{j=a}^{-1} \mu_j \frac{p^+_{j:i-1}}{p^-_{j:-1}}.\numberthis\label{eq:Ztheta}
\end{align*}
Here we used that $p^+_{j{:}k}=1$ and $p^-_{j{:}k}=1$, if $j>k$.
Next, we expand $Z(\theta_i, \rho_i, \calO)$:
\begin{align*}
    Z(\theta_i,\rho_i,\calO)\cdot p^-_{0:i-1} &= \Big( \sum_{j=i}^b \mu_j  \frac{p^-_{i:j-1}}{p^+_{i:j-1}} + \sum_{j=a}^{i-1} \mu_j  \frac{p^+_{j:i-1}}{p^-_{j:i-1}} \Big) \cdot  p^-_{0:i-1} \\
    &=\sum_{j=i}^b \mu_j \frac{p^-_{0:j-1}}{p^+_{i:j-1}} + \sum_{j=0}^{i-1}\mu_j p^-_{0:j-1}p^+_{j:i-1} +  \sum_{j=a}^{-1} \mu_j \frac{p^+_{j:i-1}}{p^-_{j:-1}}.
    \numberthis\label{eq:Zthetaprime}
\end{align*}
The right-hand sides of \eqref{eq:Ztheta} and \eqref{eq:Zthetaprime} are identical term-by-term. Therefore, \eqref{eq:Z_identity} holds, and the invariance $p_\joint^\mn \circ \Psi = p_\joint^\mn$ follows. \end{proof}

Lemmas~\ref{lem:measure-preserving-involution-Psi} and~\ref{lem: p joint mn} verify the conditions of Theorem~\ref{thm:AVM_reversibility} in the case of WALNUTS with multinomial integration time sampling. Consequently, the associated transition kernel is reversible. We do not state this as a separate theorem, however, since our primary focus is on WALNUTS with biased progressive sampling.

The following result is key to understanding why WALNUTS, like biased progressive HMC and NUTS, always accepts the proposed next state without the need for a Metropolis correction. In the GIST framework (Section~\ref{sec:gist}), the acceptance probability for transitioning from one extended state to another is determined by a Metropolis ratio of extended target densities. Lemma~\ref{lem: p joint bp} shows that this Metropolis ratio is always equal to one, thereby ensuring automatic acceptance and preserving detailed balance without the need to explicitly compute an acceptance probability.

\medskip
\begin{lemma}\label{lem: p joint bp}
The extended target density with biased progressive sampling in \eqref{eq:walnuts-joint} is invariant under $\Psi$ in \eqref{eq:walnuts-inv}, i.e.,  $p_\joint^{\bp}\circ\Psi=p_\joint^{\bp}$.
\end{lemma}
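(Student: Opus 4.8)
The plan is to reduce to Lemma~\ref{lem: p joint mn} and then dispose of the only ingredient that differs, the index kernel. Directly from \eqref{eq:walnuts-joint} and its multinomial analogue, the two densities share the common prefactor $e^{-H(\theta,\rho)}\,p_{\orbit}(m,b,\bfell\mid\theta,\rho)$, so on $\{p_\joint^{\bp}>0\}$ (which is contained in $\{p_\joint^{\mn}>0\}$, since both vanish exactly when $p_{\orbit}$ does)
\[
p_\joint^{\bp}(z)\ =\ p_\joint^{\mn}(z)\cdot R(z),\qquad R(z)\ :=\ \frac{p_{\ind}^{\bp}(i\mid\theta,\rho,\calO)}{p_{\ind}^{\mn}(i\mid\theta,\rho,\calO)},
\]
the denominator being strictly positive because all orbit weights are. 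Given $p_\joint^{\mn}\circ\Psi=p_\joint^{\mn}$ from Lemma~\ref{lem: p joint mn}, it remains only to show $R\circ\Psi=R$ on $\{p_\joint^{\bp}>0\}$, which suffices because $\Psi$ is an involution. The case $i=0$ is trivial since then $\Psi=\mathrm{id}$; for $i\ne 0$ we may take $i>0$, the case $i<0$ being analogous. Moreover Lemma~\ref{lem: p joint mn} already forces $p_\joint^{\mn}(\Psi(z))=p_\joint^{\mn}(z)>0$, so the re-rooted orbit of $\Psi(z)$ is again produced with the same $m$ and every index kernel appearing below is strictly positive.

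First I would unroll the recursion \eqref{eq:p_ind_bp}. The extension orbits $\calO_0^{\ext},\dots,\calO_{m-1}^{\ext}$ partition $\calO$ minus its base point, so $i\ne 0$ lies in exactly one of them, say $i\in\calO_K^{\ext}$; telescoping \eqref{eq:p_ind_bp} then gives
\[
p_{\ind}^{\bp}(i\mid\theta,\rho,\calO)\ =\ \alpha(\calW_K,\calW_K^{\ext})\,\frac{w_i}{\sum\calW_K^{\ext}}\prod_{k=K+1}^{m-1}\bigl(1-\alpha(\calW_k,\calW_k^{\ext})\bigr).
\]
Since $p_{\ind}^{\mn}(i\mid\theta,\rho,\calO)=w_i/\sum\calW$, the weight $w_i$ cancels and
\[
R(z)\ =\ \frac{\alpha(\calW_K,\calW_K^{\ext})}{\sum\calW_K^{\ext}}\,\Bigl(\sum\calW\Bigr)\prod_{k=K+1}^{m-1}\bigl(1-\alpha(\calW_k,\calW_k^{\ext})\bigr),
\]
so $R$ depends on the orbit only through the acceptance probabilities and the weight totals of the sub-orbits $\calO_K,\calO_K^{\ext},\calO_{K+1},\calO_{K+1}^{\ext},\dots,\calO_m$.

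The substance is then a structural claim together with a rescaling identity for the weights. By induction on doubling steps, the running orbit $\calO_k$ is precisely the size-$2^{k}$ block of the recursive-halving hierarchy \eqref{eq:sub-orbits} that contains the base index, and $\calO_k^{\ext}$ is its sibling; hence $\calO_0\subset\calO_1\subset\cdots\subset\calO_m$ traces the path from the leaf $\{0\}$ to the root of the binary tree of $\calO$, and re-rooting at $i$ merely traces the path from the leaf $\{i\}$. These two paths coincide from the least common ancestor of $\{0\}$ and $\{i\}$ upward, so the step $K$ at which $i$ enters equals the step at which $0$ enters the re-rooted construction, and for every $k>K$ the unordered pair $\{\calO_k,\calO_k^{\ext}\}$ coincides with the re-rooted pair, whereas at level $K$ the two members merely swap roles (the re-rooted running orbit at step $K$ is $\calO_K^{\ext}$ and its re-rooted extension is $\calO_K$). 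For the weights, the one-step identity $w_{j+1}/w_j = e^{H(\theta_j,\rho_j)-H(\theta_{j+1},\rho_{j+1})}\,p^-_j/p^+_j$, which holds uniformly in $j$ by \eqref{eq:w0}--\eqref{eq:wts-backw}, is left unchanged by re-rooting at $i$, because $\shift^{i}$ realigns the micro step factors so that the $p^{\pm}$ attached to any fixed physical macro interval are the same in both frames; since $\calO$ is index-connected, this forces $w=\kappa\,w'$ for a single positive constant $\kappa=\kappa(i)$, so every sub-orbit weight total picks up the common factor $\kappa$ under re-rooting. Consequently $\alpha(\calW_k,\calW_k^{\ext})$ is unchanged for $k>K$ ($\alpha$ depends only on the ratio of totals, and $\kappa$ cancels), the total $\sum\calW$ rescales by $\kappa^{-1}$, and the re-rooted level-$K$ pair of totals is $(\kappa^{-1}\sum\calW_K^{\ext},\ \kappa^{-1}\sum\calW_K)$; substituting these into the two expressions for $R$ collapses $R(\Psi(z))=R(z)$ to
\[
\frac{\min\bigl(1,\ \sum\calW_K^{\ext}/\sum\calW_K\bigr)}{\sum\calW_K^{\ext}}\ =\ \frac{\min\bigl(1,\ \sum\calW_K/\sum\calW_K^{\ext}\bigr)}{\sum\calW_K},
\]
equivalently $\min\bigl(\sum\calW_K,\sum\calW_K^{\ext}\bigr)=\min\bigl(\sum\calW_K^{\ext},\sum\calW_K\bigr)$, which is immediate. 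With Lemma~\ref{lem: p joint mn} this yields $p_\joint^{\bp}\circ\Psi=p_\joint^{\bp}$.

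I expect the structural claim to be the main obstacle: identifying the running orbits $\calO_k$ with the nodes of the recursive-halving tree of $\calO$, and checking that re-rooting at $i$ swaps only the two children of the least common ancestor while leaving the remaining nested pairs (as sets of physical states) and their acceptance ratios intact — including the subtlety, already visible in Lemma~\ref{lem:mn}, that the forward and backward micro step factors interchange on the index segment between $0$ and $i$, which is exactly what the weight-rescaling bookkeeping must track.
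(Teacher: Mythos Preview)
Your argument is correct and takes a genuinely different route from the paper's. The paper proceeds by induction on the number of doublings $m$: in the inductive step it splits into Case~1 ($i\in\calO_m^{\ext}$, where one rewrites $p_{\ind}^{\bp}$ as $p_{\ind}^{\mn}$ times the symmetric factor $\min(1/\sum\calW_m^{\ext},1/\sum\calW_m)\cdot\sum\calW_{m+1}$ and invokes Lemma~\ref{lem: p joint mn} directly) and Case~2 ($i\in\calO_m$, where one peels off a single $(1-\alpha)$ factor and applies the inductive hypothesis to $\calO_m$). Your approach instead unrolls the recursion \eqref{eq:p_ind_bp} once and for all into the telescoped product, forms the ratio $R=p_{\ind}^{\bp}/p_{\ind}^{\mn}$, and proves $R\circ\Psi=R$ via the binary-tree picture: the running orbits $\calO_k$ are exactly the ancestors of the leaf $\{0\}$ in the recursive-halving hierarchy \eqref{eq:sub-orbits}, re-rooting at $i$ traces the ancestors of $\{i\}$, and the two paths agree above the least common ancestor $\calO_{K+1}$ while swapping the pair $(\calO_K,\calO_K^{\ext})$ at level $K$. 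The weight-rescaling step (that $w'_{j-i}=\kappa^{-1}w_j$ uniformly, because the one-step ratio $w_{j+1}/w_j$ from \eqref{eq:w0}--\eqref{eq:wts-backw} depends only on the physical macro interval) is exactly the mechanism that makes all the $\alpha$'s and $(1-\alpha)$'s for $k>K$ match, and reduces level $K$ to the elementary $\min(1,A/B)/A=\min(1,B/A)/B$ identity.

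What each buys: the paper's induction is shorter to write down and avoids having to articulate the LCA structure explicitly, but it somewhat obscures \emph{why} the biased-progressive rule is reversible --- the swap symmetry is hidden inside Case~1 and the telescoping inside Case~2. Your argument makes the mechanism transparent (it is literally a root-swap in a binary tree, with homogeneous weight rescaling), and also handles the re-rooted weight normalisation more explicitly than the paper's Case~1, where the distinction between the original $\calW$ and the re-rooted $\calW'$ is glossed over (harmlessly, since only ratios enter). One small refinement: your restriction to $\{p_\joint^{\bp}>0\}$ is slightly too tight for the ratio manipulation, since your product formula for $R$ is actually valid on all of $\{p_{\orbit}>0\}$; but the involution remark you make does close the gap, because once $p_\joint^{\bp}\circ\Psi=p_\joint^{\bp}$ holds on $\{p_\joint^{\bp}>0\}$, the involutivity of $\Psi$ forces $\Psi$ to permute $\{p_\joint^{\bp}=0\}$ as well.
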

\begin{proof}

The proof is by induction on the number of doublings $m$. For the base case $m=1$, assume $a=0$, $b=1$, and $i=1$. The claim for $i=0$ is trivial, and the proof for $a=-1$, $b=0$ follows by symmetry.  The orbit consists of two states, $(\theta_0,\rho_0)$ and $(\theta_1,\rho_1)$, and the orbit and index selection kernels under biased progressive sampling are:
\begin{align*}
    &p_\orbit(m,b,\ell_0 \mid\theta_0,\rho_0 ) =\frac12 p^+_0, \quad p_\orbit(m,b-1,\ell_0 \mid\theta_1,\rho_1 ) =\frac12 p^-_0,\\
    &p_\ind^{\bp}(1\mid \theta_0, \rho_0, \calO ) = \min(1, \frac{e^{-H(\theta_1,\rho_1)} \frac{p^-_0}{p^+_0} } {e^{-H(\theta_0,\rho_0)}} ), \quad
    p_\ind^{\bp}(-1\mid \theta_1, \rho_1, \calO ) = \min(1, \frac{e^{-H(\theta_0,\rho_0)} \frac{p^+_0}{p^-_0} } {e^{-H(\theta_1,\rho_1)}} ).
\end{align*}
Thus, the extended target evaluated at $z = (\theta, \rho, m, b, \bfell, i) = (\theta,\rho,1,1,\ell_0,1)$ is
\begin{align*}
    p_\joint^\bp(\theta,\rho,1,1,\ell_0,1)&\propto e^{-H(\theta_0,\rho_0)} \cdot p^+_0 \cdot \min(1, \frac{e^{-H(\theta_1,\rho_1)} \frac{p^-_0}{p^+_0} } {e^{-H(\theta_0,\rho_0)}} )\\
    &= \min(e^{-H(\theta_0,\rho_0)}\cdot p^+_0,\; e^{-H(\theta_1,\rho_1)} \cdot p^-_0 ).
\end{align*}
Similarly, the extended target evaluated at $z' = \Psi(z)$ is
\begin{align*}
    p_\joint^\bp(\theta_1,\rho_1,1,0,\ell_0,-1)&\propto e^{-H(\theta_1,\rho_1)} \cdot p^-_0 \cdot \min(1, \frac{e^{-H(\theta_0,\rho_0)} \frac{p^+_0}{p^-_0} } {e^{-H(\theta_1,\rho_1)}} )\\
    &= \min(e^{-H(\theta_0,\rho_0)}\cdot p^+_0,\; e^{-H(\theta_1,\rho_1)} \cdot p^-_0 ).
\end{align*}
This confirms that the extended target is invariant under $\Psi$ when the orbit contains two states.

Assume that the claim is true when $\calO$ has size $2,2^2,\ldots,2^{m}$. We now prove it holds when $\calO$ has size $2^{m+1}$. There are two cases depending on whether $(\theta_i,\rho_i)$ belongs to $\calO_m^\ext$ or $\calO_{m}$.
\begin{description}
   \item[Case 1: $(\theta_i, \rho_i) \in \calO_m^\ext$.]
    By \eqref{eq:p_ind_bp}, the probability of choosing $i$ under biased progressive sampling is
    \begin{align*}
        p_\ind^{\bp}(i\mid \theta_0,\rho_0, \calO_{m+1}) &= 
        \alpha(\calW_m, \calW_m^{\ext}) \cdot \frac{w_i}{\sum \calW_m^{\ext}} 
        \\
        &=\alpha(\calW_m, \calW_m^{\ext}) \cdot \frac{w_i}{\sum  \calW_{m+1}}    \frac{\sum \calW_{m+1} }{\sum  \calW^\ext_m }\\
        &=\alpha(\calW_m, \calW_m^{\ext}) \cdot p_{\ind}^\mn(i\mid \theta_0,\rho_0,\calO_{m+1} ) \cdot \frac{\sum \calW_{m+1} }{ \sum  \calW^\ext_m } \\
        &= p_{\ind}^\mn(i\mid \theta_0,\rho_0,\calO_{m+1} ) \cdot \min\left(\frac{1}{\sum  \calW^\ext_m}, \frac{1}{\sum \calW_m} \right) \cdot \sum \calW_{m+1}.
    \end{align*}
    Thus,
    \begin{align*}
        &p_\joint^\bp(\theta_0,\rho_0,m,b,\bfell,i) \\
        &\qquad \propto e^{-H(\theta_0,\rho_0)}\cdot p_\orbit(m,b,\bfell\mid\theta_0,\rho_0) \cdot p_{\ind}^\mn(i\mid \theta_0,\rho_0,\calO_{m+1}) \cdot \min\left(\frac{1}{\sum  \calW^\ext_m}, \frac{1}{\sum \calW_m} \right) \cdot \sum \calW_{m+1} \\
        &\qquad \propto p_{\joint}^\mn(\theta_0,\rho_0, \calO_{m+1}, i) \cdot \min\left(\frac{1}{\sum  \calW^\ext_m}, \frac{1}{\sum \calW_m} \right) \cdot \sum \calW_{m+1}.
    \end{align*}

Similarly, for the transformed point:
    \begin{align*}
        p_\ind^{\bp}(-i\mid \theta_i,\rho_i, \calO_{m+1}) &= p_{\ind}^\mn(-i\mid \theta_i,\rho_i,\calO_{m+1}) \cdot \min\left(\frac{1}{\sum  \calW^\ext_m}, \frac{1}{\sum \calW_m} \right) \cdot \sum \calW_{m+1}.
    \end{align*}
    and
    \begin{align*}
        &p_\joint^\bp(\theta_i,\rho_i,m,b,\bfell,-i)  = p_{\joint}^\mn(\theta_i,\rho_i, \calO, -i) \cdot \min\left(\frac{1}{\sum  \calW^\ext_m}, \frac{1}{\sum \calW_m} \right) \cdot \sum \calW_{m+1}.
    \end{align*}
    By Lemma~\ref{lem: p joint mn}, we have
    \begin{align*}
         p_{\joint}^\mn(\theta_0,\rho_0, \calO, i)=p_\joint^\mn(\theta_i,\rho_i,\calO,-i),
    \end{align*}
    and hence, $p_{\joint}^\bp(\theta_0,\rho_0, \calO, i)=p_\joint^\bp(\theta_i,\rho_i,\calO,-i)$.  This concludes the proof for the case $(\theta_i,\rho_i)\in\calO_m^\ext$.
    
    \item[Case 2: $(\theta_i, \rho_i) \in \calO_m$.]  By \eqref{eq:p_ind_bp}, we have
    \begin{align*}
        p_\ind^{\bp}(i\mid \theta_0,\rho_0,\calO_{m+1})=(1 - \alpha(\calW_m, \calW_m^{\ext}) ) \cdot p_\ind^{\bp}(i\mid\theta_0,\rho_0,\calO_{m}).
    \end{align*}

    Applying the induction hypothesis to $\calO_m$ gives
    \begin{align*}
        e^{-H(\theta,\rho)}\cdot p_{\orbit}(\calO_m\mid\theta,\rho)\cdot p_\ind^{\bp}(i\mid\theta,\rho,\calO_m)=e^{-H(\theta_i,\rho_i)}\cdot p_{\orbit}(\calO_m\mid\theta_i,\rho_i)\cdot p_\ind^{\bp}(-i\mid\theta_i,\rho_i,\calO_m).
    \end{align*}

Suppose $\calO_m^\ext$ is the right half of $\calO_{m+1}$, and let $b_m$ be the rightmost index of $\calO_m$.
(The argument for the other direction is the same.) Then we have
    \begin{align*}
        p_\orbit(\calO_{m+1} \mid\theta,\rho) &=\frac12 p_{\orbit}(\calO_m\mid\theta,\rho) \cdot \prod_{j=b_m}^{j=b_{m+1} - 1} p^+_j,\\
        p_\orbit(\calO_{m+1} \mid\theta_i,\rho_i) &=\frac12 p_{\orbit}(\calO_m \mid\theta_i,\rho_i) \cdot \prod_{j=b_m}^{j=b_{m+1} - 1} p^+_j.
    \end{align*}
Combining these identities confirms that
\[
    p_\joint^\bp(\theta, \rho, \calO_{m+1}, i) = p_\joint^\bp(\theta_i, \rho_i, \calO_{m+1}, -i).
\]
\end{description}
\end{proof}

Lemmas~\ref{lem:measure-preserving-involution-Psi} and~\ref{lem: p joint bp} verify the conditions of Theorem~\ref{thm:AVM_reversibility} for WALNUTS with biased progressive integration time sampling. As a result, the following reversibility result holds.

\medskip

\begin{theorem}
\label{thm:walnuts}
The WALNUTS transition kernel is reversible with respect to the target distribution~$\mu$.
\end{theorem}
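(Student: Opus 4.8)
The plan is to obtain Theorem~\ref{thm:walnuts} as a direct consequence of the general GIST reversibility principle (Theorem~\ref{thm:AVM_reversibility}) together with the two structural lemmas already established, after first making explicit that WALNUTS genuinely instantiates the GIST template. Concretely, I would identify the auxiliary variable as $v = (\rho, m, b, \bfell, i)$, with conditional law $p_a(v \mid \theta) = \mathcal{N}(\rho \mid 0, M)\, p_{\orbit}(m,b,\bfell \mid \theta,\rho)\, p_{\ind}^{\bp}(i \mid \theta,\rho,m,b,\bfell)$, so that the extended target on $\calZ$ relative to the reference measure $\zeta$ satisfies $\hat\mu(\theta,v) = \mu(\theta)\, p_a(v\mid\theta) \propto e^{-H(\theta,\rho)}\, p_{\orbit}(m,b,\bfell\mid\theta,\rho)\, p_{\ind}^{\bp}(i\mid\theta,\rho,m,b,\bfell) = p_{\joint}^{\bp}(z)$ as in~\eqref{eq:walnuts-joint}. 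The three auxiliary-variable draws performed in a WALNUTS transition---momentum refreshment, orbit construction by biased doubling with within-orbit micro-step selection, and biased progressive index selection---are exactly the Gibbs-style sampling of $v \sim p_a(\cdot\mid\theta)$ required in Step~1 of the GIST template, while the proposal/selection step is $z \mapsto \Psi(z)$ with $\Psi$ from~\eqref{eq:walnuts-inv}, whose first two coordinates evaluate to the selected state $(\theta_i,\rho_i) = \left.\Phi_{i,0}\right|_{\bfell}(\theta,\rho)$; discarding the auxiliary coordinates leaves $\theta_{n+1} = \theta_i$, which is precisely the WALNUTS update on $\theta$.

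Next I would invoke the two lemmas to check the hypotheses of Theorem~\ref{thm:AVM_reversibility}. Lemma~\ref{lem:measure-preserving-involution-Psi} supplies that $\Psi$ is an involution preserving $\zeta$; the only subtlety here is that $\Psi$ fixes $m$, hence restricts to a bijection of each fixed-$m$ slice $\mathbb{R}^{2d}\times\{m\}\times\bbN\times\bbN^{2^m-1}\times\bbZ$, on which ``$\zeta$-preserving'' is meaningful coordinate-by-coordinate, and the reindexing $\shift^{i}$ of the micro-step tuple is exactly what makes $\left.\Phi_{-i,0}\right|_{\shift^{i}\bfell}\circ\left.\Phi_{0,i}\right|_{\bfell} = \mathrm{id}$. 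Lemma~\ref{lem: p joint bp} supplies $p_{\joint}^{\bp}\circ\Psi = p_{\joint}^{\bp}$, hence $\hat\mu\circ\Psi = \hat\mu$; this forces the GIST acceptance ratio $\hat\mu(\Psi z)/\hat\mu(z)$ to equal $1$, so every proposal is accepted and the Metropolis-free WALNUTS transition of Section~\ref{sec:walnuts} coincides with the GIST transition. Applying Theorem~\ref{thm:AVM_reversibility} then gives that the marginal chain on $\theta$ is reversible with respect to $\mu$, which is the claim.

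I do not anticipate a genuine obstacle at the level of the theorem: the real difficulty---that a composition of variable-step-size leapfrog maps is not time-reversible, so reversibility cannot be argued step by step---has already been absorbed into the design of $\Psi$ (in particular the shift $\shift^{i}$) and into the inductive argument of Lemma~\ref{lem: p joint bp}, which routes through the multinomial identity of Lemma~\ref{lem: p joint mn} and the orbit-kernel computation of Lemma~\ref{lem:orbit}. The one point I would write out carefully in the final proof is the bookkeeping linking the recursive, possibly early-terminating NUTS-style doubling (with sub-U-turn checks) to the closed forms in Lemmas~\ref{lem:orbit}--\ref{lem: p joint bp}: namely, that the stopping configuration reached from $(\theta_0,\rho_0)$---whether $\calO_m^{\ext}$ satisfies the sub-U-turn condition, or $\calO_m$ U-turns while none of its sub-orbits do, or $m = m_{\max}$---is the same configuration reached when the construction is rerun from $(\theta_i,\rho_i)$, which is exactly what the sub-U-turn condition enforces and what makes $p_{\orbit}$ symmetric under $\Psi$. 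With that in place, the theorem follows by citing Theorem~\ref{thm:AVM_reversibility} together with Lemmas~\ref{lem:measure-preserving-involution-Psi} and~\ref{lem: p joint bp}.
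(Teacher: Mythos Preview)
Your proposal is correct and follows essentially the same approach as the paper: the paper's proof consists of the single sentence preceding the theorem, namely that Lemmas~\ref{lem:measure-preserving-involution-Psi} and~\ref{lem: p joint bp} verify the hypotheses of Theorem~\ref{thm:AVM_reversibility}, and your write-up spells out this deduction in somewhat more detail (in particular, making explicit the identification of the auxiliary variable $v$, the fact that the acceptance ratio is identically one, and the role of the sub-U-turn checks already absorbed into Lemma~\ref{lem:orbit}).
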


\subsection{Pseudocode Implementation of WALNUTS}

WALNUTS requires the following inputs: an unnormalized target density $\mu$, a mass matrix $M$, a macro step size $h > 0$, a maximum number of doublings $m_{\max}$, and a leapfrog energy error threshold $\delta > 0$. The probability distribution $p_{\micro}$, which governs the sampling of micro step size reduction factors, is defined internally in terms of the output of the \texttt{micro} subroutine (see Listing~\ref{algo:micro_lf}).  Complete pseudocode for all components is summarized below and detailed in Appendix~\ref{sec:pseudocode}.  We emphasize that, unlike NUTS, WALNUTS incorporates variable step size leapfrog integration as a central feature of its design.

The main WALNUTS algorithm is given in Listing~\ref{algo:WALNUTS}, which implements orbit construction and biased progressive state selection. Orbit expansion is handled by the \texttt{extend-orbit-forward} and \texttt{extend-orbit-backward} procedures (Listings~\ref{algo:extend-forward} and~\ref{algo:extend-backward}, respectively), which generate new segments of the variable step size leapfrog trajectory along with corresponding weights. The U-turn and sub-U-turn conditions used to terminate orbit growth are checked by the \texttt{U-turn} and \texttt{sub-U-turn} subroutines (Listings~\ref{algo:U-turn} and~\ref{algo:sub-U-turn}). Finally, the local step size selection is handled by the \texttt{micro} subroutine (Listing~\ref{algo:micro_lf}), which adaptively determines the number of micro steps needed to keep the energy error within the specified threshold $\delta$.

\section{Empirical Evaluations}

\label{sec:empirical}

The results and plots presented below can be reproduced using a Python implementation of WALNUTS, which is available on GitHub under a permissive open-source license.\footnote{The code is released under the MIT License and can be found at \url{https://github.com/bob-carpenter/walnuts}.}

\subsection{Setup for Numerical Experiments}
Throughout this and the following sections, unless otherwise specified, WALNUTS is run with a default micro step distribution $p_{\micro}$ defined as follows. Given a current state $(\theta, \rho) \in \mathbb{R}^{2d}$, the function call
\[
\widetilde{\ell} = \texttt{micro}(\theta, \rho, \mu, M, h, \delta)
\]
returns the smallest power of two $\widetilde{\ell} \in 2^{\mathbb{N}}$ such that the energy error of the leapfrog trajectory satisfies $H^+_{\widetilde{\ell}} - H^-_{\widetilde{\ell}} \leq \delta$, as defined in~\eqref{eq:micro}.

The \emph{randomized two-point}  distribution is then defined by:
\begin{equation} \tag{R2P}
p_{\micro}(\ell \mid \widetilde{\ell}) =
\begin{cases}
    2/3 & \text{if } \ell = \widetilde{\ell}, \\
    1/3 & \text{if } \ell = \widetilde{\ell}+1, \\
    0   & \text{otherwise}.
\end{cases}
\end{equation}
The corresponding WALNUTS method using this distribution is referred to as \emph{WALNUTS–R2P}.

In contrast, the \emph{deterministic} variant corresponds to the degenerate case:
\begin{equation} \tag{D}
p_{\text{micro}}(\ell \mid \widetilde{\ell}) = 
\begin{cases}
1 & \text{if } \ell = \widetilde{\ell}, \\
0 & \text{otherwise}.
\end{cases}
\end{equation}
The corresponding method is referred to as \emph{WALNUTS–D}.
These two variants will be compared in the numerical studies that follow.

When measuring performance in terms of gradient evaluations, both the forward and backward $\micro$-computations are counted, even though backward $\micro$-computations may be run on different a processor and thus only adding negligible (wall clock) computation time. 

To avoid the \emph{looping phenomenon}, we jitter the macro step size throughout. NUTS and WALNUTS rely on the U-turn condition to adaptively truncate their orbit lengths.  However, for certain fixed step sizes, this condition can become ineffective: the orbit selection procedure may fail to detect a U-turn and instead expand to the maximum allowed orbit length  and severely impair mixing \cite[Figure 4(a) and 5(b)]{BoOb2024}. This behavior, known as the looping phenomenon, arises from resonance between the doubling procedure and a natural frequency of the underlying Hamiltonian dynamics, causing the physical orbit length to repeatedly miss the U-turn detection range.   A simple and practical remedy is to jitter the leapfrog step size---either once per orbit or at each leapfrog step---to break this periodicity. As shown empirically, such randomization mitigates looping behavior and improves performance \cite[Figure 5(c) and 5(d)]{BoOb2024}. Throughout our experiments, we uniformly jitter the macro step size by $\pm20\%$ at each integration step.

\subsection{High-Dimensional Gaussian}

\begin{figure}
\centering
\begin{tikzpicture}
\begin{axis}[
    width=12cm,
    height=5cm,
    domain=-0.1:4.5,
    samples=200,
    axis lines=middle,
    xlabel={$\|\theta\|^2$},
    ylabel={density},
    tick style={line width=2pt},
    major tick length=6pt,
    xtick={1,3},
    xticklabels={$0$, $d$},
    ytick=\empty,
    axis line style={->, very thick},
    xlabel style={at={(axis description cs:1.0,0)},anchor=west},
    ylabel style={at={(axis description cs:0.04,1)},anchor=south},
    clip=false,
 ]

% Main density bump
\addplot[gray, ultra thick, domain=0:4.5] {exp(-0.5*((x - 3)^2)/0.1)};

% Fill concentration region
\addplot [
    draw=none,
    fill=red,
    fill opacity=0.2,
    domain=2.4:3.6
] {exp(-0.5*((x - 3)^2)/0.1)} \closedcycle;

% Vertical lines: d - alpha and d + alpha
\addplot[red, dashed, ultra thick] coordinates {(2.4,0) (2.4,{1})};
\addplot[red, dashed, ultra thick] coordinates {(3.6,0) (3.6,{1})};

% Vertical line at d
\addplot[red, thin] coordinates {(3,0) (3,{exp(-0.5*((3 - 3)^2)/0.1)})};

% Annotations for d - alpha and d + alpha
\node[red] at (axis cs:2.4,0.4) [anchor=east] {$d - \alpha$};
\node[red] at (axis cs:3.6,0.4) [anchor=west] {$d + \alpha$};

% Annotation: alpha = O(sqrt(d))
\draw[->, thick] (axis cs:3.9,0.88) -- (axis cs:3.62,0.7);
\node at (axis cs:3.9,0.95) [anchor=west] {$\alpha = O(\sqrt{d})$};

% Annotation: Mode
\draw[->, blue, thick] (axis cs:0.65,0.4) -- (axis cs:1,0.02);
\node[blue] at (axis cs:0.35,0.45) [anchor=west] {mode};

\end{axis}
\end{tikzpicture}
\caption{\textbf{Concentration of  High-Dimensional Gaussian.} Illustration of the concentration of $|\theta|^2$ for $\theta \sim \mathcal{N}(0,I_d)$. The shaded region shows the high-probability shell defined in \eqref{eq:gauss-concentration}, centered around $|\theta|^2 = d$ with width $O(\sqrt{d})$. The mode at $\theta = 0$, corresponding to $|\theta|^2 = 0$,  lies well outside this concentration region.}
\label{fig:gaussian-concentration}
\end{figure}
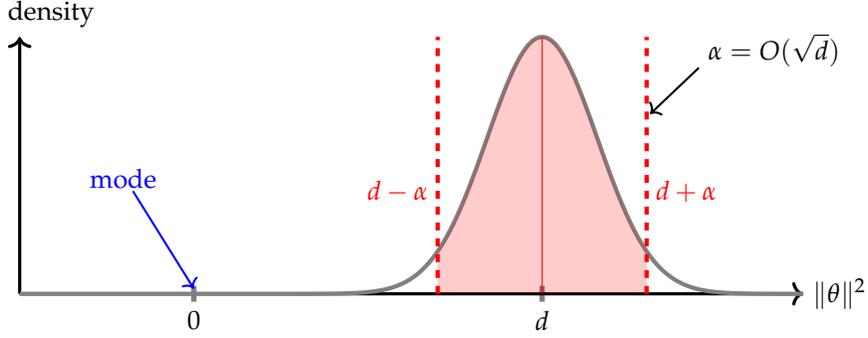

\label{sec:gaussian}

\begin{figure}
    \centering
    \includegraphics[width=0.99\linewidth]{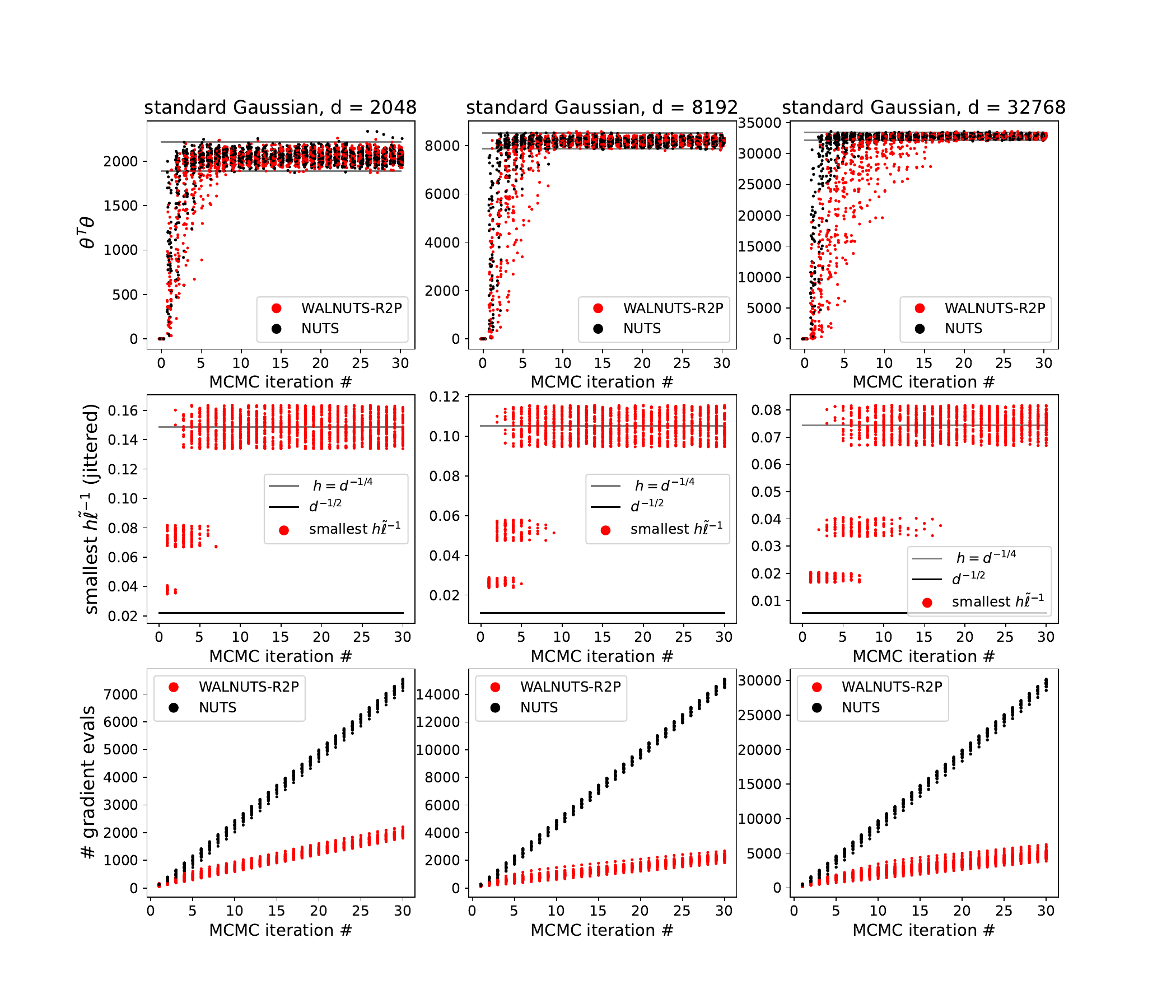}
    \caption{ \textbf{Cold-Start Behavior and Step Size Adaptation in High-Dimensional Gaussians.} In all cases, 50 independent chains were initialized at the mode $\theta = 0$ and run for 30 MCMC iterations. WALNUTS was tuned for the stationary regime using $h = d^{-1/4}$ and $\delta = 0.3$, while NUTS was tuned specifically for the cold-start regime using $h = d^{-1/2}$. The upper panels show  $|\theta|^2$ per MCMC iteration for a standard Gaussian target distribution in the indicated dimension $d$. Red horizontal lines indicate the intervals covering 99\% of the probability mass of $|\theta|^2$. The middle panels show the smallest value of $h\tilde{\ell}^{-1}$ observed within each orbit (jittered for readability), which may be interpreted as the local micro step size selected by \texttt{micro}. The lower panels show the cumulative gradient count after each MCMC iteration for the same simulations. }
    \label{fig:gauss-transient}
\end{figure}

\begin{figure}
    \centering
    \includegraphics[width=0.9\linewidth]{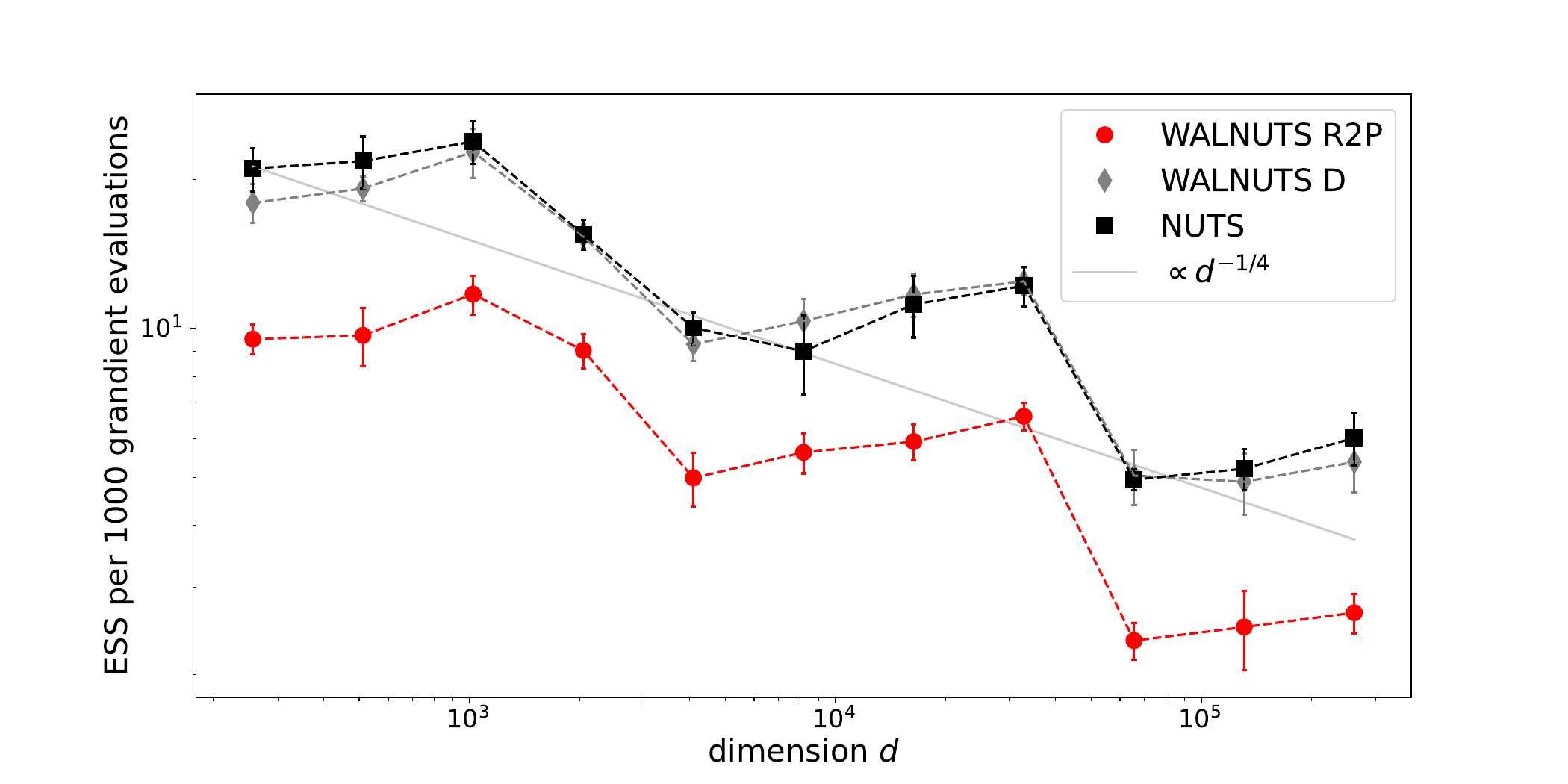}
    \caption{\textbf{Efficiency of WALNUTS in High-Dimensional Gaussians.} Effective sample size (ESS) per 1000 gradient evaluations for a standard Gaussian target in varying dimensions, under warm-start initialization. Results for each dimension are based on 10 chains, each run for 1000 iterations. Mean ESS per 1000 gradient evaluations is indicated by symbols; vertical bars show 95\% confidence intervals for the exact ESS. All methods used a step size of $h = 1.4 \, d^{-1/4}$, which was found by trial and error to be near-optimal for NUTS. Both WALNUTS variants used $\delta = 0.3$.}
    \label{fig:gauss-ESS}
\end{figure}

\paragraph{Geometry of Gaussian concentration}

This part evaluates the performance of WALNUTS on a $d$-dimensional standard Gaussian target with density $\mu(\theta) \propto \exp(-|\theta|^2/2)$ for $\theta \in \mathbb{R}^d$. We compare WALNUTS to the standard No-U-Turn Sampler (NUTS) using a fixed step size leapfrog integrator. A $d \times d$-identity mass matrix $M=I_d$ is used, so that $\rho \sim \mathcal{N}(0, I_d)$.

It is well known that in high dimensions, the standard Gaussian concentrates its mass in a thin spherical shell around the sphere of radius $\sqrt{d}$. Specifically,
\begin{equation} \label{eq:gauss-concentration}
\mu(D_\alpha^c) \leq 2 \exp(-\alpha^2 / 8d), \quad
D_\alpha = \left\{ \theta \in \mathbb{R}^d : \left| \|\theta\|^2 - d \right| \leq \alpha \right\}, \quad \alpha > 0,
\end{equation}
as shown, for example, in \cite[Lemma 1]{Vershynin}. This implies that most of the probability mass lies within a shell of width $O(\sqrt{d})$ around the sphere of radius $\sqrt{d}$. Figure~\ref{fig:gaussian-concentration} illustrates this concentration phenomenon.

We consider two initialization strategies: one in which chains start at the mode $\theta = 0$, which lies well outside the concentration region; and one in which chains are initialized from the stationary distribution. Cold starts introduce substantial transient behavior. For HMC-type methods with fixed step size leapfrog integrators, controlling the energy error during this transient phase typically requires step sizes of order $O(d^{-1/2})$. By contrast, once the chain enters the concentration region, step sizes of order $O(d^{-1/4})$ are sufficient for mixing, as quantified for NUTS in \cite[Theorem 2]{BoOb2024}.  We emphasize that this is a sufficient condition for mixing and not a stability condition: leapfrog integration remains stable uniformly in $d$ as long as $h \le 2$ in this example.

To make this precise, let $\theta \in D_\alpha$ and $\rho \sim \mathcal{N}(0, I)$. Then the leapfrog energy error after $i$ steps satisfies
\begin{equation} \label{eq:deltaH}
\left| H\circ\Phi_h^i(\theta,\rho) - H(\theta,\rho) \right|
= \frac{1}{8} h^2 \left| |\Pi(\Phi_h^i(\theta,\rho))|^2 - |\theta|^2 \right| = O(h^2 d^{1/2}), \quad i \in \mathbb{Z} \;,
\end{equation}
where $\Pi$ is the projection on the position component.  This bound follows from the fact that the leapfrog integrator preserves the modified Hamiltonian $H_h(\theta,\rho) = H(\theta,\rho) - \tfrac{1}{8} h^2 |\theta|^2$, i.e., $H_h \circ \Phi_h \equiv H_h$, and that the leapfrog trajectory remains in a shell of comparable radius with high probability.

This estimate highlights two distinct regimes. When the chain is initialized at the mode, $|\theta|^2 = 0$, while $|\Pi(\Phi_h^i(\theta, \rho))|^2 = O(d)$ with high probability. As a result, the energy error is $O(h^2 d)$ during the transient phase. Once the chain reaches the concentration region $D_\alpha$, both $|\theta|^2$ and $|\Pi(\Phi_h^i(\theta, \rho))|^2$ are close to $d$, and their difference is only $O(d^{1/2})$ with high probability. This reduces the energy error to $O(h^2 d^{1/2})$, allowing for larger step sizes and more efficient sampling.

WALNUTS adapts to this variation in leapfrog energy error through its two-level step size mechanism. The macro step size $h$ is tuned for efficient mixing in the high-density shell, i.e., $h = O(d^{-1/4})$. The micro step size, on the other hand, is adjusted locally via the \texttt{micro} routine to ensure that the total energy error over each macro interval remains below a fixed threshold $\delta$. When initialized far from the concentration region, like the mode, WALNUTS selects smaller micro step sizes to compensate for elevated energy error. Once the orbit enters the high-probability shell, the micro step size increases accordingly. In this way, WALNUTS dynamically adapts its step size based on local energy errors, maintaining energy accuracy cheaply.

\paragraph{Numerical results for high-dimensional Gaussian}

The first set of simulations, summarized in Figure~\ref{fig:gauss-transient}, considers a setting in which NUTS and WALNUTS-R2P chains are initialized at the mode of a $d$-dimensional standard Gaussian target. In these simulations, NUTS is tuned for the transient phase using a step size $h=d^{-1/2}$, which, as discussed above, is optimal during the transient phase. WALNUTS-R2P is instead tuned for the stationary regime using a macro step size of  macro step size $h=d^{-1/4}$ (and $\delta=0.3$), corresponding to the optimal (fixed) step size in the stationary regime. From the upper set of plots in Figure~\ref{fig:gauss-transient}, it is seen that transient-tuned NUTS arrives at the mode slightly faster in terms of MCMC iterations than WALNUTS, with the gap growing modestly with increasing dimension.

The mid-level set of plots show the smallest $h\tilde{\ell}^{-1}$ used by WALNUTS in each orbit, which may be interpreted as the smallest micro step size used (modulus macro step size jittering and randomization of step size factors (R2P)) selected by \texttt{micro}.  As seen in the plots, WALNUTS uses smaller step sizes during the transient phase and gradually transitions to behavior similar to optimally tuned NUTS in the stationary phase.

The lower set of plots shows the cumulative number of gradient evaluations for WALNUTS and NUTS.  As expected, due to the lack of step size adaptivity for NUTS, the benefit of small fixed step size during the transient phase is soon overwhelmed by the cost of such small step sizes in the stationary phase.

Though it is seen that the initial WALNUTS step sizes are somewhat larger than $h^{-1/2}$ (indicated by black line in mid-level plots) for this choice of $\delta$, and thereby arriving at the high probability shell somewhat slower than NUTS, this choice of $\delta$ appears to be a reasonable when computational cost is taken into account. 

%In Figure \ref{fig:gauss-transient}, the upper panels show that the iteration-to-iteration convergence of WALNUTS (tuned for the general situation) toward the stationary regime is faster than for NUTS, as WALNUTS adapts automatically to more extreme scaling away from the stationary shell. For the lower-dimensional case, NUTS implementation requires fewer gradient evaluations to reach the stationary regime, owing to the added gradient evaluations required for finding close to optimal step sizes in WALNUTS. For higher dimensions, this edge seems to be lost.

%\paragraph{Warm start}

Figure~\ref{fig:gauss-ESS} shows the estimated effective sample size (ESS) per 1000 gradient evaluations for warm-start simulations targeting a standard Gaussian distribution, focusing on ESS for the scalar quantity $|\theta|^2$. All three methods—NUTS, WALNUTS-D, and WALNUTS-R2P—used the same macro step size $h = 1.4 \, d^{-1/4}$. For this relatively simple target, which is well-suited to fixed step size HMC methods like NUTS, WALNUTS-R2P exhibits somewhat lower efficiency, primarily due to the added cost of frequently applying higher simulation fidelity (i.e., using $\ell = \tilde{\ell} + 1$). WALNUTS-D and NUTS show comparable efficiency overall. At higher dimensions, WALNUTS-D appears to slightly outperform NUTS in some cases, though the Monte Carlo variability is too large to support a definitive conclusion. The step-like pattern in the results reflects dimension-dependent shifts in the number of doublings used by the no-U-turn condition.

To summarize these numerical results, WALNUTS demonstrates substantial promise in two key respects. First, it effectively handles transient regimes without getting stuck near the initialization point, avoiding the cold-start issues that can hinder other methods. Second, for target distributions where standard HMC-based methods like NUTS are known to perform well, WALNUTS achieves comparable efficiency. These results suggest that WALNUTS combines robustness to initialization with competitive performance in well-conditioned, high-dimensional settings.

\subsection{Neal's Funnel}

\label{sec:funnel}

\begin{figure}
\centering
\begin{tikzpicture}[scale=0.85]

% Panel 1: Funnel Geometry (left side)
\begin{scope}[xshift=0cm, yshift=3.5cm, scale=0.8]
\draw[line width=1pt] plot[samples=100,domain=-3.01:3.01] ({-e^(\x/2)},\x);
\draw[line width=1pt] plot[samples=100,domain=-3.01:3.01] ({e^(\x/2)},\x);
\draw[line width=0.7pt] ({-exp(3/2)},3.01) arc (270:130:0.149) coordinate[pos=1] (end1) {};
\draw[line width=0.7pt] ({exp(3/2)},3.01) arc (-90:40:0.149) coordinate[pos=1] (end2) {};
\draw[line width=0.7pt] (end1) -- (end2);
\draw[line width=0.7pt] ({-exp(-3/2)},-3) arc (-170:10:0.226);
\node[left] at (-0.5,-2.5) { Neck};
\node[left] at (-3.2,2) { Mouth};
\draw[->] (2.5,-2.5) -- (3.7,-2.5) node[pos=1,below right] { $x$};
\draw[->] (2.5,-2.5) -- (2.5,-1.3) node[pos=1,above left] { $\omega$};

% Width bracket at neck (omega = -2.0)
\draw[ultra thick, gray] ({-exp(-2.0/2)}, -2.0) -- ({exp(-2.0/2)}, -2.0);
\draw[ultra thick, gray] ({-exp(-2.0/2)}, -2.2) -- ({-exp(-2.0/2)}, -1.8);
\draw[ultra thick, gray] ({exp(-2.0/2)}, -2.2) -- ({exp(-2.0/2)}, -1.8);

% Width bracket at mouth (omega = +2.0)
\draw[ultra thick, gray] ({-exp(2.0/2)}, 2.0) -- ({exp(2.0/2)}, 2.0);
\draw[ultra thick, gray] ({-exp(2.0/2)}, 1.8) -- ({-exp(2.0/2)}, 2.2);
\draw[ultra thick, gray] ({exp(2.0/2)}, 1.8) -- ({exp(2.0/2)}, 2.2);

% Highlight original mouth region (omega = 1.8 to 2.2)
\fill[red!20, opacity=0.4, draw=red, dashed]
  plot[samples=100, domain=1.8:2.2]
  ({-exp(\x/2)}, \x)
  -- plot[samples=100, domain=2.2:1.8]
  ({exp(\x/2)}, \x)
  -- cycle;

% Overlay: rescaled version of mouth at neck (illustrating self-similarity)

  \fill[red!20, opacity=0.4, draw=red, dashed]
    plot[samples=100, domain=1.8:2.2]
    ({-exp(\x/2) * exp(-2)}, {\x - 4})
    -- plot[samples=100, domain=2.2:1.8]
    ({exp(\x/2) * exp(-2)}, {\x - 4})
    -- cycle;

\end{scope}

% Panel 2: Spectral Radius (top-right)
\begin{scope}[xshift=7.cm, yshift=4.5cm]
\begin{axis}[
    width=6cm,
    height=4cm,
    title={},
    xlabel={ $\omega$},
    ylabel={ $\ln(\lambda)$},
    domain=-6:6,
    samples=100,
    grid=major,
    ymin=-3,
    ymax=4,
    xmin=-4,
    xmax=4,
    xtick={-4,-2,0,2,4},
    ytick={0,2,4},
    tick label style={font=},
    title style={font=\small},
    label style={font=},
]
\addplot[blue, thick=0.8pt, smooth] {1/9 > exp(-x) ? ln(1/9) : -x};
\end{axis}
\end{scope}

% Panel 3: Condition Number (bottom-right)
\begin{scope}[xshift=7.cm, yshift=0.5cm]
\begin{axis}[
    width=6cm,
    height=4cm,
    title={},
    xlabel={ $\omega$},
    ylabel={ $\ln(\kappa)$},
    domain=-6:6,
    samples=100,
    grid=major,
    ymin=0,
    ymax=6,
    xmin=-4,
    xmax=4,
    xtick={-4,-2,0,2,4},
    ytick={0,2,4,6},
    tick label style={font=},
    title style={font=\small},
    label style={font=},
]
\addplot[red, thick=0.8pt, smooth] {ln(9) + abs(x)};
\end{axis}
\end{scope}

\end{tikzpicture}
\caption{\textbf{Curvature, Conditioning, and Scale Invariance in Neal's Funnel.} Illustration of Neal's funnel geometry (left), along with the spectral radius $\lambda(\omega) = \max(1/9, e^{-\omega})$ (top right; see~\eqref{eq:spectral-radius}) and the condition number $\kappa(\omega) = 9 \cdot \max(e^{\omega}, e^{-\omega})$ (bottom right; see~\eqref{eq:condition-number}) plotted as functions of the funnel axis variable $\omega$. The neck and mouth regions have the same shape up to scale: the width at level $\omega$ is proportional to $ e^{\omega/2}$, with relative rate of change $\frac{d}{d\omega} \text{width}(\omega) = \frac{1}{2} \cdot \text{width}(\omega)$. The translucent bands show a mouth segment and its horizontally rescaled copy overlaid on the neck, highlighting the funnel’s local scale invariance: each horizontal slice has the same shape up to scale. This structure makes it difficult for methods such as ensemble samplers to use local geometric cues to guide movement into or out of the neck.   }
\label{fig:funnel-geometry}
\end{figure}
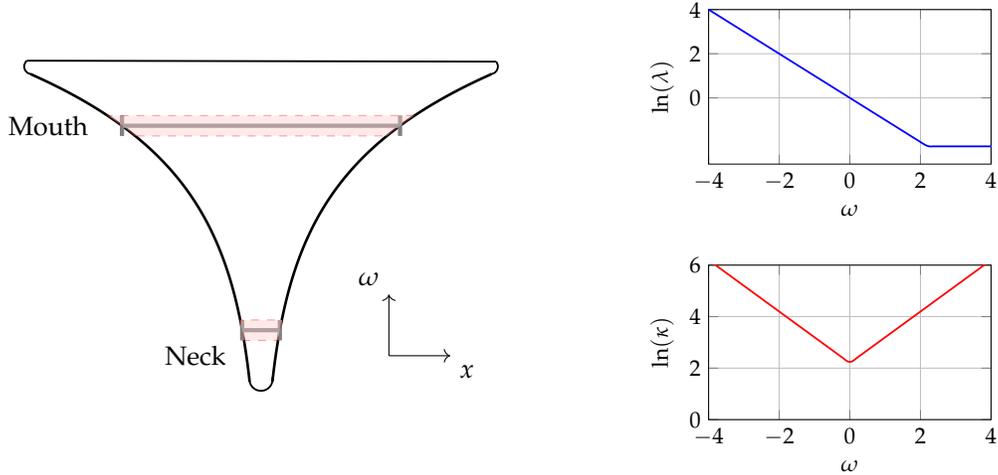

\begin{figure}
    \centering
    \includegraphics[width=0.99\linewidth]{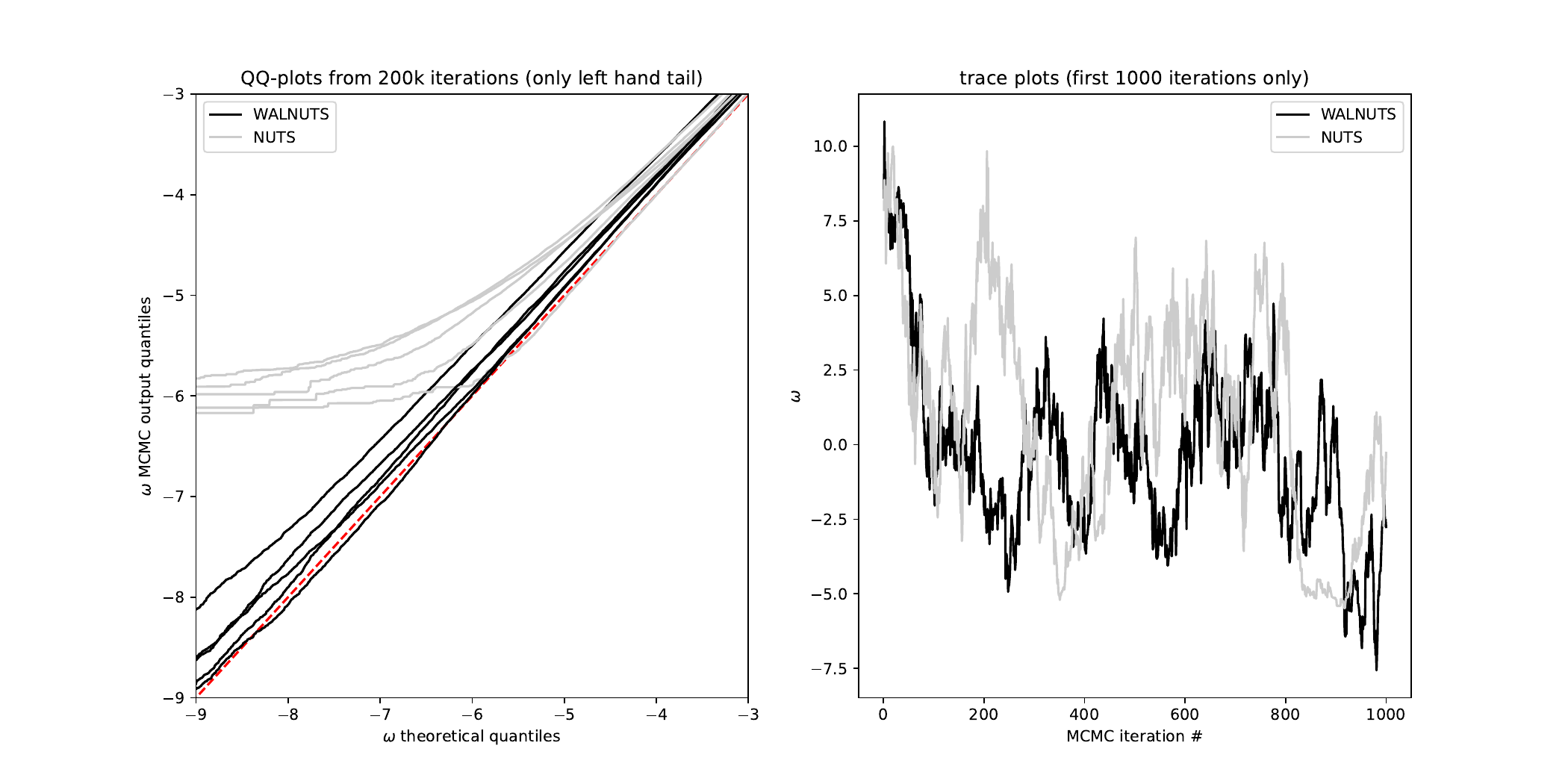}
    \caption{\textbf{Tail accuracy and mixing in Neal’s funnel: WALNUTS vs. NUTS} The left panel shows QQ-plots (restricted to the far left tail) for WALNUTS-R2P (black) and NUTS (gray), each based on 200{,}000 iterations. WALNUTS was run with tuning parameters $\delta = 0.21$ and $h = 0.36$, while NUTS used a step size $h_{\text{NUTS}} = 0.11$, chosen so that the average orbit length matched the number of gradient evaluations used by WALNUTS. With comparable computational cost (NUTS used 104\% of WALNUTS's gradient evaluations), WALNUTS successfully explores the full support of the target, while NUTS fails to reach deep into the neck of the funnel. The right panel shows trace plots of $\omega$, illustrating that both samplers mix very slowly. This poor mixing explains the variability in the QQ-plots, particularly in the far tail.}
    \label{fig:funnel-qq}
\end{figure}

\begin{figure}
    \centering
    \includegraphics[width=0.99\linewidth]{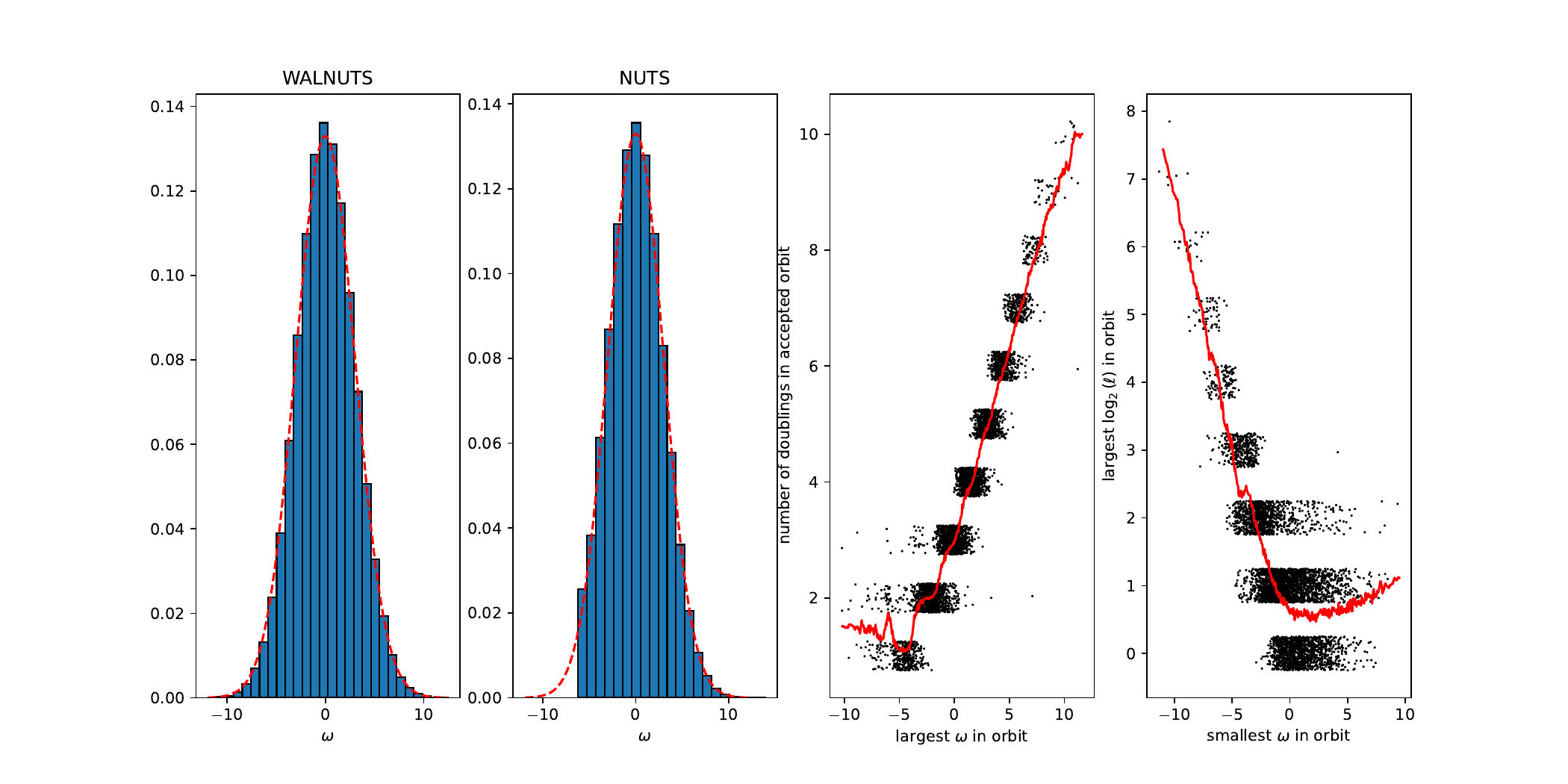}
    \caption{\textbf{Diagnostics for WALNUTS and NUTS on Neal’s Funnel.}  Various diagnostics for WALNUTS-R2P (auto-tuned with $\delta = 0.21$, macro step size $h = 0.36$, 1 million iterations) applied to Neal’s funnel distribution \eqref{eq:funnel_distr}. The left panel shows a histogram of sampled $\omega$ values overlaid with the true $N(0,9)$ density. The second panel shows the corresponding histogram for NUTS, using 104\% of WALNUTS’s total gradient evaluations. The third panel plots (thinned and jittered) the number of doublings in accepted WALNUTS orbits versus the largest $\omega$ visited during each orbit. The right panel plots (thinned and jittered) the maximum number of micro step size doublings ($\log_2(\ell)$) within each orbit versus the smallest $\omega$ visited. In both rightmost panels, nonparametric KNN regression curves are overlaid in red.
    }
    \label{fig:funnel-diagnositics}
\end{figure}

\begin{figure}
    \centering
    \includegraphics[width=0.99\linewidth]{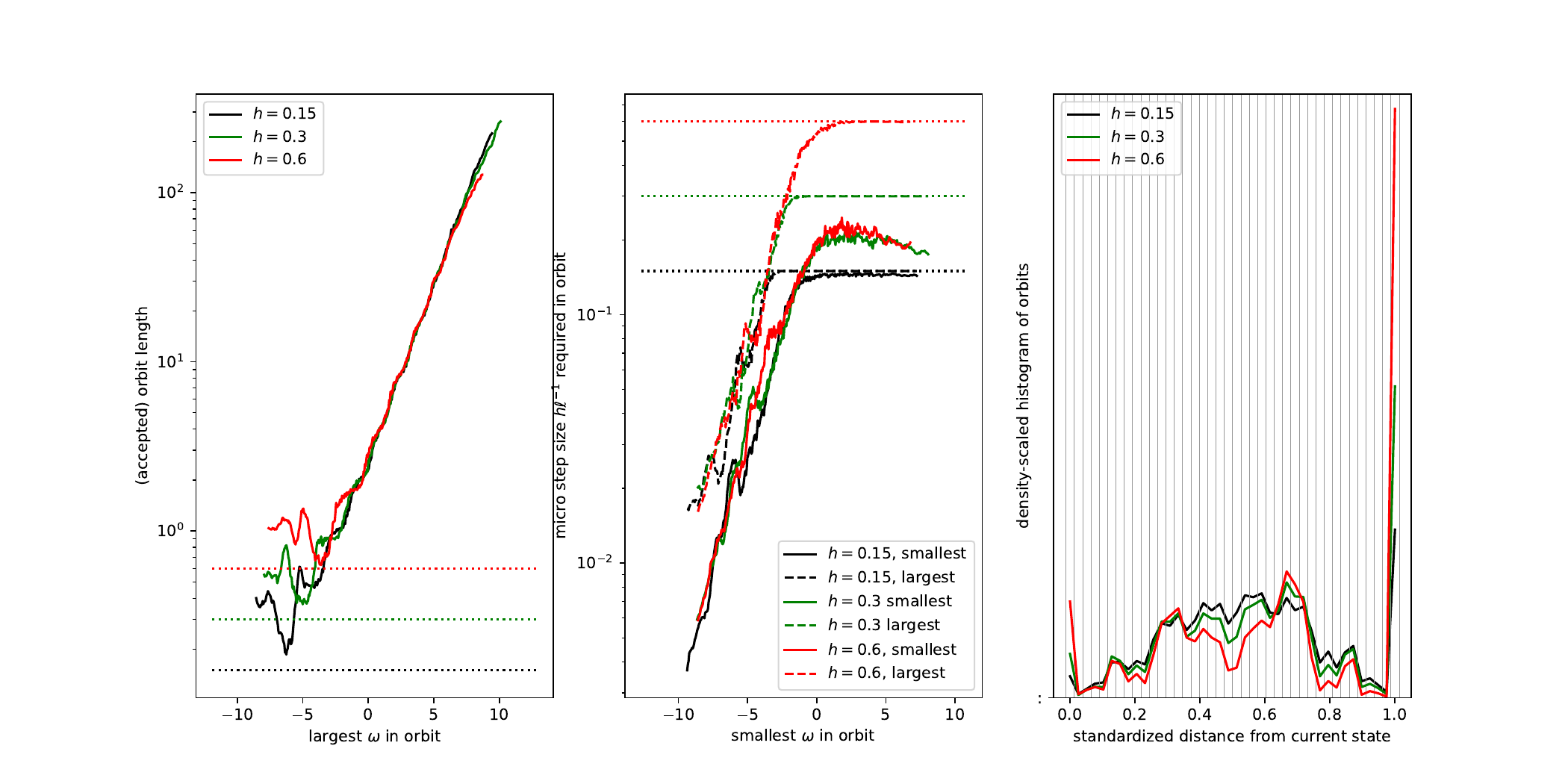}
    \caption{ \textbf{Influence of macro-step size on WALNUTS in Neal’s Funnel.} This figure illustrates the effect of the macro step size $h$ on WALNUTS behavior when applied to Neal's funnel distribution~\eqref{eq:funnel_distr}, based on 50{,}000 MCMC iterations for each value of $h$.
The left panel shows KNN regression curves of orbit (integration time) length as a function of the largest $\omega$ encountered in the orbit, for three macro step sizes ($h = 0.15$, $0.3$, $0.6$; shown as dotted lines), with $\delta = 0.1$ fixed throughout. Orbit lengths are broadly similar across values of $h$, except in the left-hand tail, where many accepted orbits consist of a single integration step.
The middle panel shows KNN regression curves of the largest and smallest micro step sizes $h\ell^{-1}$ used in each accepted orbit, plotted against the smallest $\omega$ visited. For large $\omega$, smaller macro step sizes act as a ceiling on $h\ell^{-1}$, limiting step size even when larger values would be allowed by the energy error threshold.
The right panel shows histograms of the relative time distance between the initial and selected states in each accepted orbit. The distributions are broadly consistent with the idealized triangular law from biased progressive HMC.  }
    \label{fig:funnel-mul-step-size}
\end{figure}

\begin{figure}
    \centering
    \includegraphics[width=0.99\linewidth]{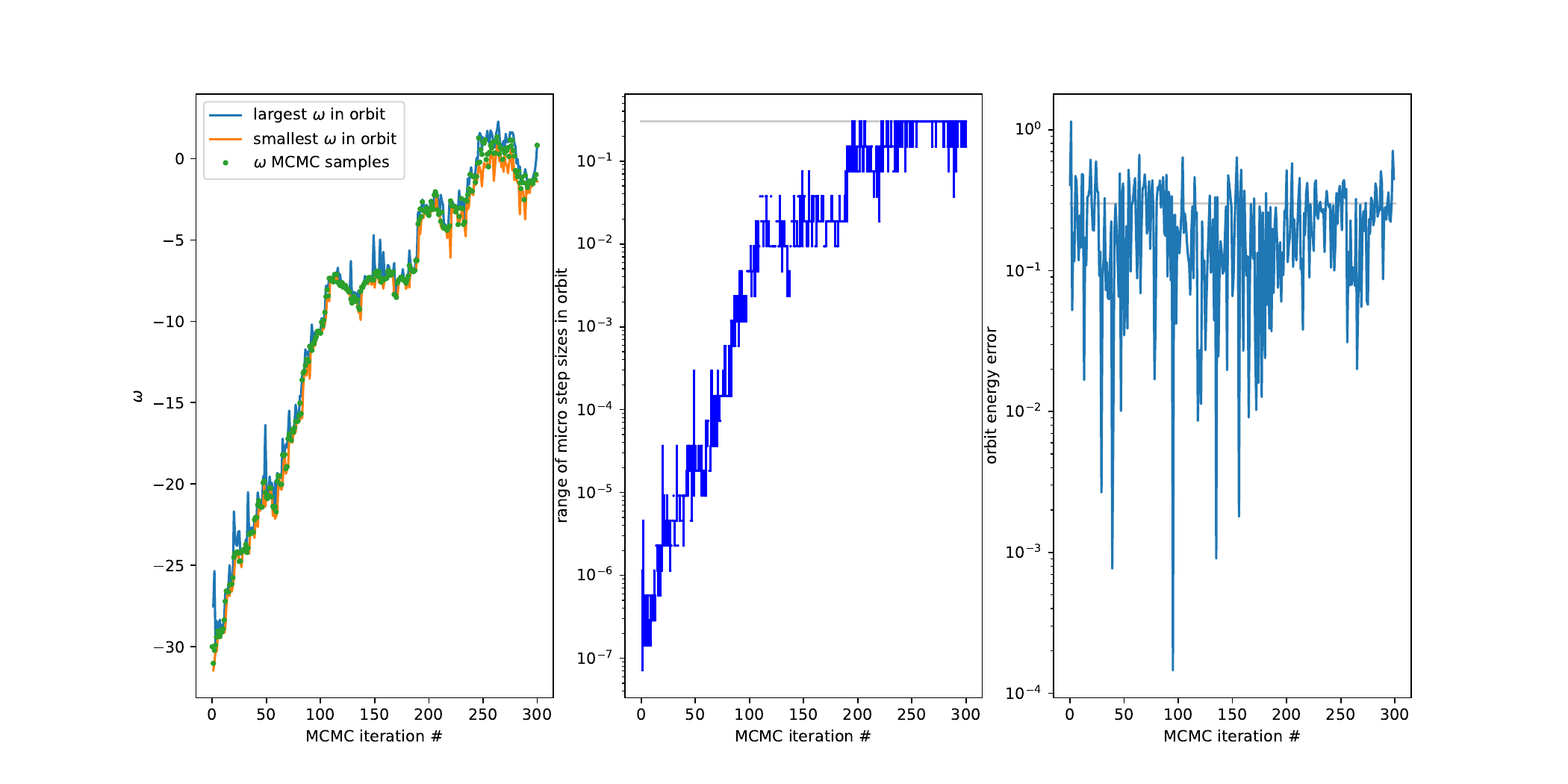}
    \caption{ \textbf{WALNUTS Diagnostics from a Cold Start Deep in the Neck of Neal’s Funnel.} The left panel shows the first 300 WALNUTS samples of $\omega$ from a run on Neal’s funnel distribution, initialized in a cold state deep in the neck at $\omega = -30$, with $x_i = 0$ for $i = 1, \dots, 10$. Green dots indicate sampled values; solid lines show the $\omega$-range of each orbit. The middle panel displays the range of micro step sizes $h\ell^{-1}$ used per orbit, with the macro step size $h = 0.3$ shown as a horizontal grey line.
The right panel shows the total energy error per orbit, defined as the difference between the largest and smallest Hamiltonian values. The local energy error tolerance $\delta = 0.3$ is indicated by a dashed line.}
    \label{fig:funnel-transient}
\end{figure}

\paragraph{Geometry of Neal's funnel}

Consider the $(d+1)$-dimensional funnel distribution introduced by Neal \cite{Neal2003Slice}, defined over $\theta = (\omega, x)$ with $\omega \in \mathbb{R}$ and $x = (x_1, \dots, x_d) \in \mathbb{R}^d$, where
\begin{equation}
    \omega \sim \N(0,9),\quad x_i\mid \omega \sim \N\left(0,{\exp(\omega)}\right),\;i=1,\dots,d. \label{eq:funnel_distr}
\end{equation}

The joint distribution $\mu(\omega, x)$ exhibits a funnel-like geometry, with $\omega$ forming the funnel axis  and may be thought of as a model problem for Bayesian hierarchical models. Several authors \cite{betancourt2013hamiltonianmontecarlohierarchical,kleppe2022connecting,BouRabeeCarpenterKleppeMarsden2024} have found that HMC-like methods with fixed step sizes may fail to properly explore such target distributions. 

This difficulty stems from the model’s extreme variation in scale: for large positive $\omega$, the conditional variances $\operatorname{Var}(x_i \mid \omega) = e^{\omega}$ are large and the target is wide and diffuse---this region corresponds to the mouth of the funnel.  Conversely, for large negative $\omega$, the conditional variances shrink exponentially, concentrating the mass near a narrow neck. These features are illustrated in Figure~\ref{fig:funnel-geometry}, where the funnel's ``wide mouth'' and ``narrow neck'' encode this dramatic variation in scale.

The challenge of sampling from Neal's funnel can be quantified in terms of the curvature of the potential energy landscape associated with the target distribution.  The potential energy is given by $U(\omega, x) = -\log \mu(\omega, x)$, and its curvature varies drastically across regions of the domain. In particular, the Hessian evaluated along the funnel axis at $(\omega, 0)$ is diagonal:
\[
D^2 U(\omega, 0) = \operatorname{diag}\left( \tfrac{1}{9}, e^{-\omega}, \dots, e^{-\omega} \right) \;.
\]
The corresponding \emph{spectral radius}, defined as the largest eigenvalue of the Hessian, is given by
\begin{equation} \label{eq:spectral-radius}
    \lambda(\omega) := \max \left( \tfrac{1}{9}, e^{-\omega} \right) \;.
\end{equation}
This implies that the local Lipschitz constant $L$ of $\nabla U$ scales with $\lambda(\omega)$, and the leapfrog integrator is numerically stable only if
\[
h \leq \frac{2}{\sqrt{L}} \leq 2 e^{-\omega/2} \;.
\]
As a result, exploring the neck region ($\omega \ll 0$) requires exponentially small step sizes to maintain stability, while larger step sizes are needed to mix efficiently in the mouth region ($\omega \gg 0$).

Furthermore, the \emph{condition number} of the Hessian matrix evaluated along the funnel axis (i.e., the ratio of the largest to smallest eigenvalue of $D^2 U(\omega,0)$) is given by
\begin{equation} \label{eq:condition-number}
    \kappa(\omega) := 9 \cdot \max(e^{\omega}, e^{-\omega}) \;,
\end{equation}
which grows exponentially in $|\omega|$. This makes the funnel an extreme example of an ill-conditioned target. Any fixed-step-size implementation of NUTS must adopt a small enough leapfrog step size to avoid numerical divergences in the neck, thereby incurring significant inefficiency in the mouth. Conversely, if tuned for the mouth, such a method becomes numerically unstable or inaccurate in the neck. Since both regions carry significant posterior mass, this trade-off cannot be ignored.

These challenges make Neal’s funnel a rigorous test for samplers that adapt to local geometry. WALNUTS addresses this issue by dynamically adjusting the step size \emph{within} each leapfrog path. Specifically, WALNUTS first chooses a macro step size $h$ that is suitable for the well-conditioned regions of the target (e.g., the mouth of the funnel), allowing rapid progress when the target is flat. Within each macro step, however, WALNUTS reduces the micro step size until the local energy error falls below the user-specified tolerance $\delta$. This gives rise to a sequence of micro step sizes, each of which may vary depending on the local curvature along the orbit. As a result, the variable step size leapfrog integrator within WALNUTS automatically uses finer resolution when passing through stiff regions such as the narrow neck (where stability conditions are more restrictive), and coarser resolution in flatter regions where larger steps are feasible (see Figure~\ref{fig:walnuts_intro}).

\medskip

\begin{remark}[Mode of Neal's Funnel]
\label{rmk:funnel-mode}
Analogously to the case of a high-dimensional Gaussian, the potential energy $U(\omega, x) = -\log \mu(\omega, x)$ attains its minimum at a mode that lies in a region of extremely low posterior mass. Completing the square in $U$, we find that the unique minimizer is given by
\[
(\omega^*, x^*) = \left( -\tfrac{9d}{2}, 0 \right) \;.
\]
Indeed, expanding the expression for $U$,
\[
U(\omega, x) = \frac{\omega^2}{18} + \frac{\|x\|^2}{2e^{\omega}} + \frac{d}{2} \omega + \text{const} \;,
\]
we see that $U$ is minimized at $x = 0$ and $\omega = -\tfrac{9d}{2}$. This mode lies deep in the neck of the funnel, far in the lower tail of the marginal $\omega \sim \mathcal{N}(0, 9)$, and hence, in a region with negligible posterior mass. Much like the mode of a high-dimensional standard Gaussian, it is not representative of typical samples from the target.
\end{remark}

\paragraph{Numerical results for Neal's funnel}

Figure~\ref{fig:funnel-qq}, left panel, shows QQ-plots of the marginal distribution of $\omega$ based on MCMC output using either WALNUTS-R2P or NUTS, with comparable computational budgets. WALNUTS provides a significantly more accurate representation of the far left tail, demonstrating its ability to explore high-curvature regions deep in the neck of the funnel. However, both methods exhibit slow mixing, as evidenced by the trace plots in the right panel of Figure~\ref{fig:funnel-qq}. Notably, the poor mixing persists across the entire support of the target distribution, suggesting that the orbit length selection mechanism in NUTS (and inherited by WALNUTS) tends to produce trajectories that are too short on average.  Addressing this issue is an important direction for future work.

Figure~\ref{fig:funnel-diagnositics} presents diagnostic information from a warm-start run of WALNUTS applied to the funnel distribution \eqref{eq:funnel_distr}, using $1$ million MCMC iterations with auto-tuned parameters $\delta = 0.21$ and macro step size $h_0 = 0.36$. The far-left panel shows a histogram of sampled $\omega$ values, which indicates no apparent difficulty exploring the left-hand tail of the distribution (see also Figure~\ref{fig:funnel-qq}). For comparison, the second panel from the left shows the corresponding histogram for NUTS, run at similar computational cost (104\% of the gradient evaluations used by WALNUTS), which reveals a clear failure to explore the left-hand tail.

The second panel from the right displays the number of orbit doublings used in each accepted orbit, plotted against the largest value of $\omega$ visited during that orbit. This shows that WALNUTS adapts by generating longer trajectories in the broad mouth of the funnel (i.e., regions with large $\omega$), where the target is flatter and longer integration times are beneficial.

The rightmost panel of Figure~\ref{fig:funnel-diagnositics} shows the largest number of micro step size halvings (i.e., $\log_2(\ell)$) used within each orbit, plotted against the smallest value of $\omega$ visited in that orbit. It is evident that WALNUTS automatically selects smaller step sizes in the narrow neck of the funnel (i.e., regions with small $\omega$), where stability constraints are more restrictive, while using larger step sizes in the broad mouth where the geometry is more forgiving.

Figure~\ref{fig:funnel-mul-step-size} examines how the macro step size $h$ influences orbit (integration time) length, micro step sizes, and the potential accumulation of local integration errors. The left panel shows KNN regressions of orbit length as a function of the largest $\omega$ visited within each orbit. For large values of $\omega$, all considered macro step sizes result in similar orbit lengths. In the narrow neck of the funnel, orbit lengths vary because they often consist of only a single integration step, and are therefore bounded below by the macro step size.

The middle panel of Figure~\ref{fig:funnel-mul-step-size} shows KNN regressions of the largest and smallest micro step sizes used by \texttt{micro} in each orbit, plotted as a function of the smallest $\omega$ visited within that orbit. For small values of $\omega$, all macro step sizes yield similar results, since the upper bound imposed by the macro step size on the micro step sizes is rarely active. As $\omega$ increases, however, the results begin to diverge because the macro step size increasingly constrains the allowable micro step sizes.

The rightmost panel of Figure~\ref{fig:funnel-mul-step-size} presents histograms of the relative orbit time distance between the initial state $(\theta_0, \rho_0)$ and the selected state $(\tilde\theta, \tilde\rho)$, normalized by the total accepted orbit time. In the idealized setting of biased progressive HMC with exact Hamiltonian simulation and fixed orbit length, this statistic follows a symmetric triangular distribution centered at $0.5$ (see Theorem~\ref{thm:law_of_iprime}). The empirical histograms are broadly consistent with this behavior and do not exhibit signs of degeneracy. The peaks at standardized distances 0 and 1 are attributable to orbits consisting of a single integration step, which occur more frequently as the macro step size increases.

Figure~\ref{fig:funnel-transient} shows a single cold-start simulation consisting of 300 WALNUTS MCMC iterations, initialized at $\omega = -30$ (i.e., 10 standard deviations below the mean) with $x_i = 0$ for $i = 1, \dots, 10$.
The left panel displays the MCMC samples as green dots, along with the extent of each orbit in the $\omega$-direction. The sampler has no difficulty moving even through the highly ill-conditioned regions encountered early in the simulation.
The middle panel shows the range of micro step sizes used in each orbit, again indicating that WALNUTS adapts as expected.  The right panel plots the total energy error per orbit, defined as the difference between the largest and smallest Hamiltonian values, suggesting that energy errors do not exhibit problematic constructive accumulation.

The choice of initial configuration is motivated by the fact that the funnel distribution has a single mode at $\omega = -45$ and $x_i = 0$ for all $i$ (see Remark~\ref{rmk:funnel-mode}). While there is no theoretical obstacle to starting the chain at the mode, extrapolation from Figure~\ref{fig:funnel-transient} suggests that doing so would require micro step sizes near machine precision and would entail considerable computational effort.

This example supports the conclusion that the proposed locally adaptive step size procedure functions as intended, successfully identifying regions that require smaller step sizes. As discussed above, however, orbit-length selection remains a challenge for this model: NUTS-based strategies lead to slow mixing, and WALNUTS is not immune to this issue. While this suggests that further refinement of orbit-length selection methods, particularly for multiscale target distributions, could improve performance, such developments are beyond the scope of the present work---more on this point in Section~\ref{sec:conclusion} below.

\subsection{The Stock-Watson model}

\label{sec:stock-watson}

\begin{figure}
    \centering
    \includegraphics[width=0.45\linewidth]{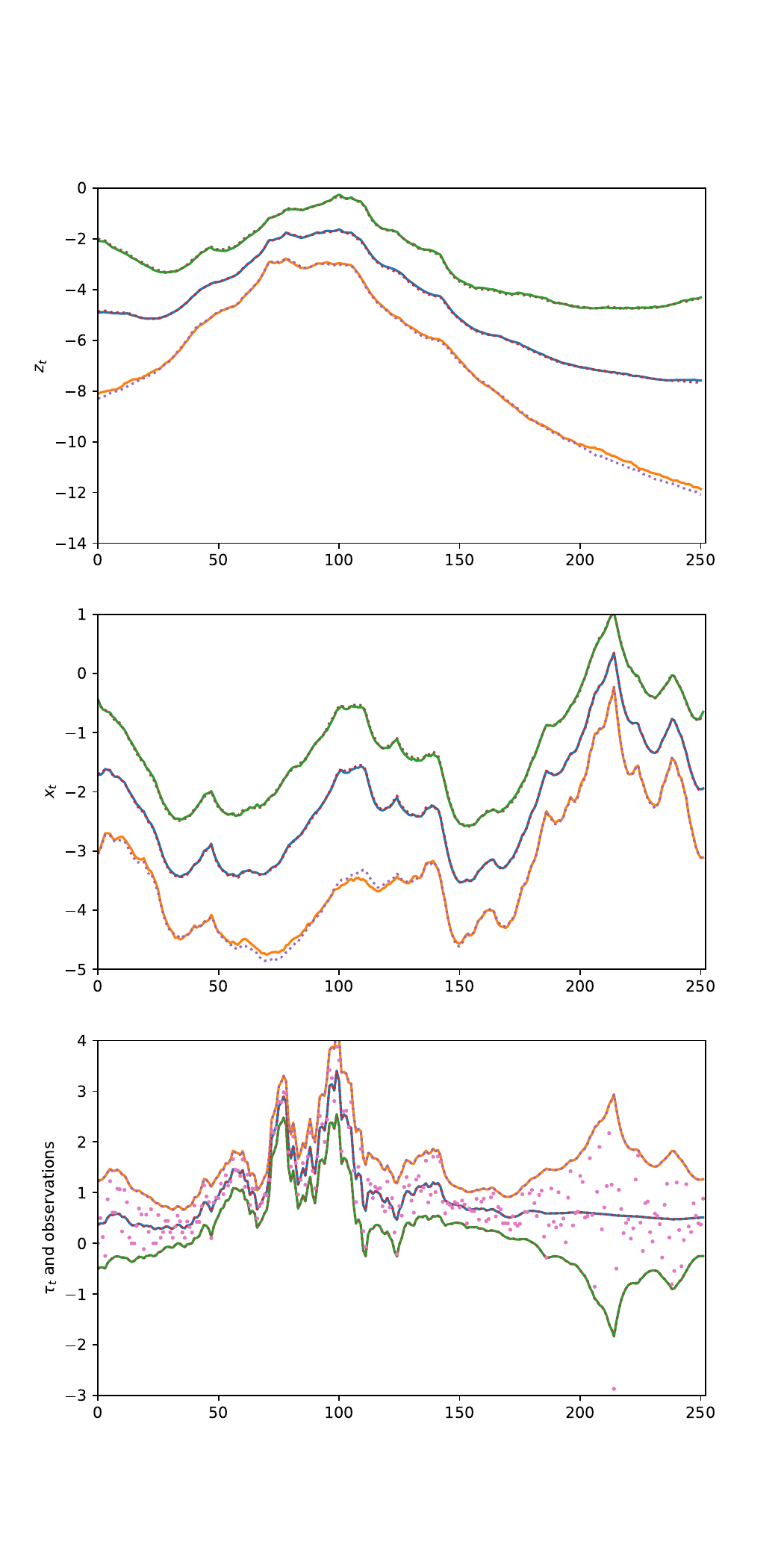}
    \caption{\textbf{Posterior distributions of the latent factors $(z,x,\tau)$ under the Stock Watson model.} The top panel shows the 0.05, 0.5, and 0.95 quantiles of the persistent volatility factor $z_t$: solid lines correspond to WALNUTS, and dotted lines to NUTS. The middle panel presents the same quantiles for the transient volatility factor $x_t$. The bottom panel displays the posterior median of $\tau_t$ along with the bands $\tau_t \pm 2\exp(0.5 x_t)$ (medians plotted), representing a scale-aware uncertainty interval. Observations $y_t$ are shown as dots.
Results from WALNUTS-D are nearly identical and omitted for clarity.
    }
    \label{fig:sw-factors}
\end{figure}

\begin{figure}
    \centering
    \includegraphics[width=0.99\linewidth]{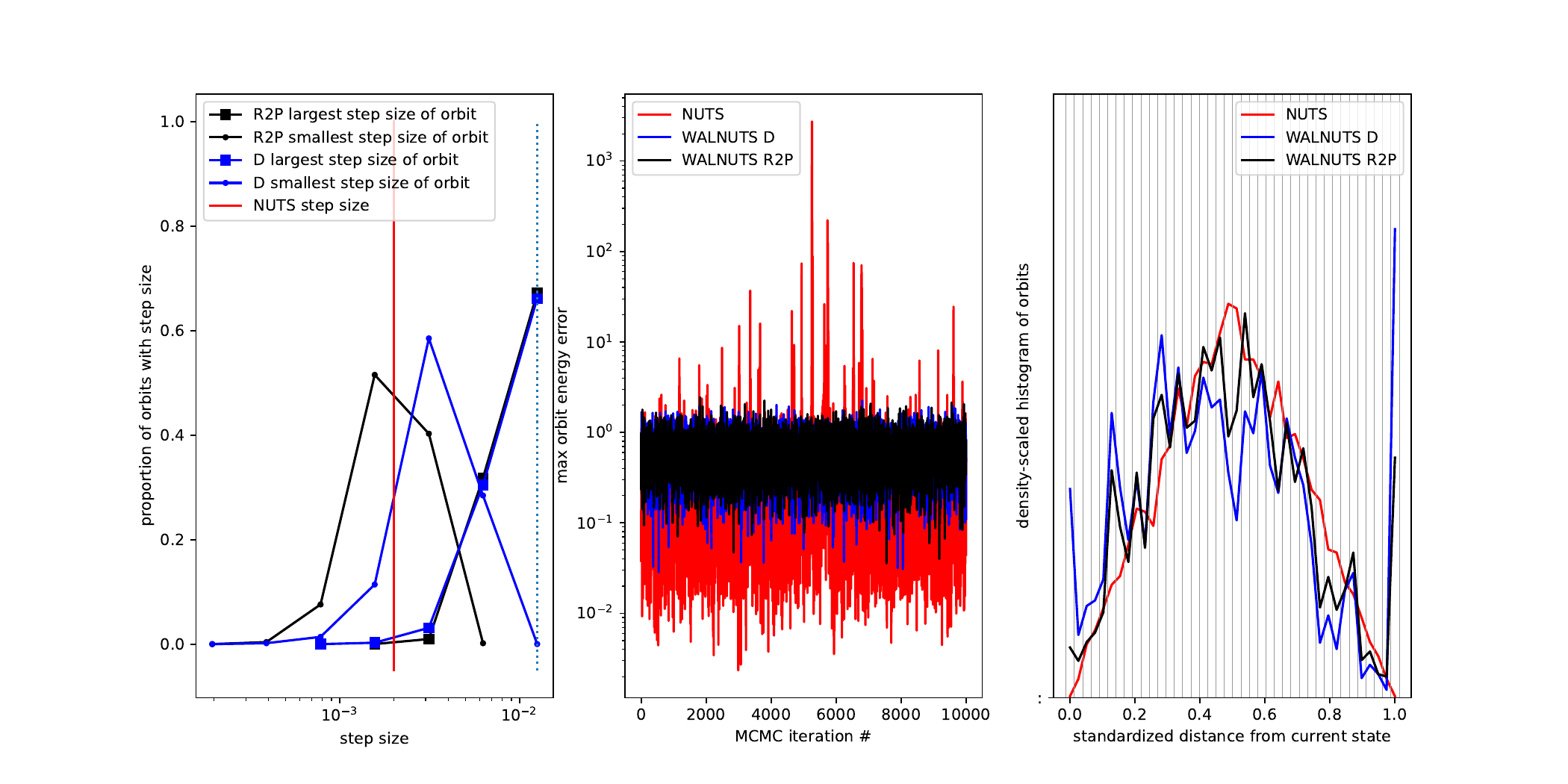}
    \caption{\textbf{Distribution of step sizes and orbit energy errors for the Stock-Watson model.}  The left panel shows the distribution of the largest (dots) and smallest (squares) micro step sizes used in each WALNUTS orbit (both R2P and D variants). The red vertical line indicates the NUTS step size, while the dotted line marks the largest micro step size considered by WALNUTS ($h/8$). The middle panel presents the total energy error per orbit, computed as $\max_j H(\theta^{(j)}, \rho^{(j)}) - \min_j H(\theta^{(j)}, \rho^{(j)})$, for NUTS and both versions of WALNUTS. Computational costs are comparable between WALNUTS-R2P and NUTS, while WALNUTS-D requires approximately 60\% of the gradient evaluations of the former two. The right panel displays the integration time statistic normalized by orbit length. These distributions do not raise any concerns. Note that the smaller step size used by NUTS results in more frequent orbit doublings, which explains the more triangular-like distribution of this statistic.}
    \label{fig:sw-diag}
\end{figure}

As a real-world example, we consider the Stock and Watson inflation rate model \citep{doi:10.1111/j.1538-4616.2007.00014.x}. Quarterly US inflation rates $y_t,\;t=1,\dots,T$ are modeled using three latent, nonlinearly coupled random walk factors $(z_t, x_t, \tau_t)$ as follows:
\begin{eqnarray}
   z_t|z_{t-1},\sigma &\sim& \N(z_{t-1},\sigma^2),\; t=2,\dots,T-1,\\
   x_t|x_{t-1},\sigma &\sim& \N(x_{t-1},\sigma^2),\; t=2,\dots,T,\\
   \tau_t|\tau_{t-1},z_{t-1} &\sim& \N(\tau_{t-1},\exp(z_{t-1})),\;t=2,\dots,T,\\
   y_{t}|\tau_t,x_t &\sim& \N(\tau_t,\exp(x_t)),\;t=1,\dots,T.
\end{eqnarray}
The description of the model is completed with the prior $\sigma^{-2} \sim \mathrm{Gamma}(5, 0.5)$, using the rate parameterization. The dataset used is the same as in \cite{kleppe2022connecting}, consisting of $T=252$ observations between 1955 (first quarter) and 2018 (first quarter). Figure \ref{fig:sw-diag} provides posterior quantiles of the three latent processes, and it is seen that the balance between persistent volatility (i.e. $\exp(z_t)$) and transient volatility (i.e. $\exp(x_t)$) shows substantial temporal variation \citep[see ][who first pointed out these effects]{doi:10.1111/j.1538-4616.2007.00014.x}.

Due to the nonlinear coupling of the latent factors, the posterior distribution of this model is known to exhibit substantial non-Gaussian structure, including pronounced funnel-like geometries.  To facilitate sampling, we reparameterize the model in terms of the innovations of the latent processes: $([z_1,(z_2-z_1)\sigma^{-1},\dots,(z_{T-1}-z_{T-2})\sigma^{-1}],[x_1,(x_{2}-x_1)\sigma^{-1},\dots,(x_{T}-x_{T-1})\sigma^{-1}],[\tau_1,(\tau_2-\tau_1)\exp(-0.5z_1),\dots,(\tau_T-\tau_{T-1})\exp(-0.5z_{T-1})],\log \sigma^2)$.  This transformation yields a posterior that can be sampled using an identity mass matrix. However, it is important to emphasize that the transformation does not eliminate the underlying nonlinearities. In particular, this model with this parameterization is known to require very small step sizes when using fixed step size HMC, motivating more elaborate preconditioning strategies as explored in \cite{kleppe2019}.

Posterior summaries from runs of 10,000 iterations of NUTS (with fixed step size $h = 0.002$) and WALNUTS (with local error tolerance $\delta = 0.3$, macro step size $h = 0.1$, and a minimum of 8 micro-steps per macro-step, enforced by starting the loop in \texttt{micro} at index 3 instead of 0) are shown in Figure~\ref{fig:sw-factors}. These parameter settings result in approximately equal computational cost. At first glance, the posterior estimates produced by NUTS and WALNUTS appear similar.

However, Figure~\ref{fig:sw-diag} reveals important differences in the energy error diagnostics. The energy errors from NUTS exhibit a substantial number of large deviations (e.g., greater than 2), indicating that many transitions diverge. This suggests that some regions of posterior support are effectively inaccessible to NUTS at this step size, even though such pathologies are not evident from the trace plots in Figure~\ref{fig:sw-factors}. A practitioner encountering such behavior would rightly be concerned about convergence.

In contrast, WALNUTS shows no such issues: energy errors remain stable across iterations, even though only local error control is enforced. The left panel of Figure~\ref{fig:sw-diag} shows the distribution of the largest and smallest micro step sizes used by the two variants of WALNUTS.
\emph{Remarkably, despite frequently using step sizes larger than NUTS, WALNUTS maintains stable energy errors by selectively using smaller steps only where needed!}

\section{Conclusion \& Outlook}

\label{sec:conclusion}

WALNUTS introduces a robust and flexible locally adaptive variant of HMC that performs well in difficult sampling problems where state-of-the-art methods like NUTS often struggle. Despite its locally adaptive nature, WALNUTS is not significantly more expensive than NUTS for well-conditioned targets, especially when warmup is appropriately tuned. When NUTS works, it works; but WALNUTS is designed for when it doesn’t. It offers a principled fallback that expands the range of models that can be reliably sampled without manual tuning or aggressive preconditioning.  In the near term, WALNUTS can be integrated into probabilistic programming languages such as Stan, PyMC, or NumPyro, where its plug-and-play adaptivity and robustness offers practical benefits with minimal user tuning. 

Importantly, the error-controlled adaptive integration scheme at the heart of WALNUTS is not specific to NUTS. It is a general-purpose tool that can be incorporated into a wide variety of HMC-type methods, including standard HMC, randomized HMC \cite{BoSa2017,Deligiannidis2021,BoEb2022} (and other Hamiltonian-based PDMPs \cite{chevallier2025towards}), kinetic Langevin samplers \cite{BouRabeeOberdoerster2024}, and generalized HMC variants \cite{turok2024sampling}. The approach we present here also opens up new directions for developing Hamiltonian-based samplers that remain robust and efficient even in challenging inference settings where the local geometry of the target distribution varies substantially. Below, we outline several promising directions for future work.

\begin{itemize}
\item \textbf{Mass matrix adaptivity.}
A natural extension is to introduce local adaptivity into the mass matrix, allowing it to vary with position to better capture heterogeneous curvature in the target distribution. This builds on ideas from Riemannian HMC \cite{GiCa2011} and recent advances in adaptive mass matrix tuning \cite{kleppe2016adaptive,whalley2024randomized,hird2023quantifying,tran2024tuning}, while aiming to retain the simplicity and computational efficiency of WALNUTS by avoiding expensive full Hessian evaluations.

\item \textbf{Ensemble-based approaches.}
Another approach to preconditioning WALNUTS is to run an ensemble of chains, following the strategy of affine-invariant methods \cite{foreman2013emcee,goodman2010ensemble}, including the recently introduced affine-invariant HMC variant \cite{chen2025new}. These methods leverage the positions of parallel chains to adapt proposals to the geometry of the target distribution.  Global affine invariance is not sufficient to handle highly multiscale or hierarchical targets such as Neal’s funnel where local structure can vary dramatically across regions. Still, ensemble-based strategies are promising. They provide a natural way to estimate low-rank structure and local scaling, which might be hard to detect with a single locally adaptive chain. Combining these techniques with WALNUTS' local step size adaptation could yield samplers that better capture local geometry.

\item \textbf{Improving the U-Turn Diagnostic.}
Another important direction is to revisit the U-turn diagnostic, which remains a heuristic inherited from standard NUTS. A more principled approach to deciding when and how to truncate orbit construction could significantly improve both robustness and sampling efficiency. Indeed, Figure~\ref{fig:funnel-qq} illustrates the limitations of the current U-turn-based criterion: the left panel shows that WALNUTS more accurately captures the far left tail of the marginal distribution of $\omega$, demonstrating its ability to explore high-curvature regions deep in the neck of the funnel. However, the trace plots in the right panel reveal slow mixing for both WALNUTS and NUTS, with poor performance persisting across the entire support. This suggests that the orbit length selection mechanism often terminates integration prematurely. Addressing this issue by developing other diagnostics than the U-turn one is a key direction for future work.

\end{itemize}

Taken together, these directions suggest the potential for a new class of locally adaptive HMC methods that offer improved robustness and efficiency across a broad range of challenging scenarios, including models with extreme anisotropy, funnel geometries, or high dimensionality. They point toward a new generation of adaptive MCMC algorithms that more fully exploit local geometry, without relying on global approximations or heavy precomputations, and that remain effective across a diverse range of model classes and data regimes.

\printbibliography

@misc{BouRabeeCarpenterMarsden2024,
  title={GIST: Gibbs self-tuning for locally adaptive Hamiltonian Monte Carlo},
  author={Bou-Rabee, Nawaf and Carpenter, Bob and Marsden, Milo},
  archivePrefix={arXiv},
  primaryClass={stat.CO},
    eprint={2404.15253},
  year={2024}
}

@misc{BouRabeeCarpenterKleppeMarsden2024,
  title={Incorporating Local Step-Size Adaptivity into the No-U-Turn Sampler using Gibbs Self Tuning},
  author={Bou-Rabee, Nawaf and Carpenter, Bob and Kleppe, Tore and Marsden, Milo},
  archivePrefix={arXiv},
  primaryClass={stat.ME},
    eprint={2408.08259},
  year={2024}
}

@article{haario2006dram,
  title={DRAM: efficient adaptive {MCMC}},
  author={Haario, Heikki and Laine, Marko and Mira, Antonietta and Saksman, Eero},
  journal={Statistics and Computing},
  volume={16},
  pages={339--354},
  year={2006},
  publisher={Springer}
}

@article{article,
author = {Andersen, Hans and Diaconis, Persi},
year = {2007},
month = {01},
pages = {},
title = {Hit and Run as a Unifying Device},
volume = {148},
journal = {J. Soc. Fr. Stat. & Rev. Stat. Appl.}
}

@article{carpenter2016stan,
	Author = {Carpenter, B. and Gelman, A. and Hoffman, M. and Lee, D. and Goodrich, B. and Betancourt, M. and Brubaker, M. A. and Guo, J. and Li, P. and Riddell, A.},
	Journal = {Journal of Statistical Software},
	Pages = {1--37},
	Title = {Stan: A probabilistic programming language},
	Volume = {20},
	Year = {2016}}

@article{Deligiannidis2021,
author = {George Deligiannidis and Daniel Paulin and Alexandre Bouchard-C{\^o}t{\'e} and Arnaud Doucet},
title = {{Randomized Hamiltonian Monte Carlo as scaling limit of the bouncy particle sampler and dimension-free convergence rates}},
volume = {31},
journal = {The Annals of Applied Probability},
number = {6},
publisher = {Institute of Mathematical Statistics},
pages = {2612 -- 2662},
year = {2021},
}

@article{BoEb2022,
	Author = {Bou-Rabee, N. and Eberle, A.},
	Date-Added = {2022-10-23 11:59:43 +0000},
	Date-Modified = {2022-10-23 12:01:10 +0000},
	Journal = {Ann. Inst. H. Poincar{\'e} Probab. Statist},
	Number = {2},
	Pages = {916-944},
	Title = {Couplings for {A}ndersen dynamics in high dimension},
	Volume = {58},
	Year = {2022}}

@article{hird2023quantifying,
  title={Quantifying the effectiveness of linear preconditioning in Markov chain Monte Carlo},
  author={Hird, Max and Livingstone, Samuel},
  journal={arXiv preprint arXiv:2312.04898},
  year={2023}
}

@article{margossian2024variational,
  title={Variational Inference in Location-Scale Families: Exact Recovery of the Mean and Correlation Matrix},
  author={Margossian, Charles C and Saul, Lawrence K},
  journal={arXiv preprint arXiv:2410.11067},
  year={2024}
}

@inproceedings{margossian2023shrinkage,
  title={The shrinkage-delinkage trade-off: An analysis of factorized gaussian approximations for variational inference},
  author={Margossian, Charles C and Saul, Lawrence K},
  booktitle={Uncertainty in Artificial Intelligence},
  pages={1358--1367},
  year={2023},
  organization={PMLR}
}

@article{tran2024tuning,
  title={Tuning diagonal scale matrices for HMC},
  author={Tran, Jimmy Huy and Kleppe, Tore Selland},
  journal={Statistics and Computing},
  volume={34},
  number={6},
  pages={196},
  year={2024},
  publisher={Springer}
}

@article{Neal2003Slice,
author = {Radford M. Neal},
title = {{Slice sampling}},
volume = {31},
journal = {The Annals of Statistics},
number = {3},
publisher = {Institute of Mathematical Statistics},
pages = {705 -- 767},
keywords = {Adaptive methods, auxiliary variables, dynamical methods, Gibbs sampling, Markov chain Monte Carlo, Metropolis algorithm, overrelaxation},
year = {2003}
}

@article{BouRabeeOberdoerster2024,
	author = {Nawaf Bou-Rabee and Stefan Oberd{\"o}rster},
	journal = {Electronic Journal of Probability},
	keywords = {Couplings, high acceptance regime, Metropolis-Hastings, mixing time, non-reversible Markov chain},
	pages = {1 -- 27},
	title = {{Mixing of Metropolis-adjusted Markov chains via couplings: The high acceptance regime}},
	volume = {29},
	year = {2024}}

@article{mira2001metropolis,
  title={On Metropolis-Hastings algorithms with delayed rejection},
  author={Mira, Antonietta},
  journal={Metron},
  volume={59},
  number={3-4},
  pages={231--241},
  year={2001}
}

@article{green2001delayed,
  title={Delayed rejection in reversible jump Metropolis--Hastings},
  author={Green, Peter J and Mira, Antonietta},
  journal={Biometrika},
  volume={88},
  number={4},
  pages={1035--1053},
  year={2001},
  publisher={Biometrika Trust}
}

@article{BoSaActaN2018,
	Author = {B{ou-Rabee}, N. and Sanz-Serna, J. M.},
	Date-Added = {2017-06-25 20:11:22 +0000},
	Date-Modified = {2017-07-09 17:04:58 +0000},
	Journal = {Acta Numerica},
	Pages = {113--206},
	Title = {Geometric integrators and the Hamiltonian {M}onte {C}arlo Method},
	Volume = {27},
	Year = {2018}}

@article{BoSa2017,
	Author = {Bou-Rabee, Nawaf and Sanz-Serna, Jes{\'u}s Mar{\'\i}a},
	Fjournal = {Ann. Appl. Probab.},
	Journal = {Ann. Appl. Probab.},
	Number = {4},
	Pages = {2159--2194},
	Publisher = {The Institute of Mathematical Statistics},
	Title = {Randomized Hamiltonian Monte Carlo},
	Volume = {27},
	Year = {2017}}

@article{HoGe2014,
	Author = {Hoffman, M. D. and Gelman, A.},
	Date-Added = {2015-11-19 10:29:07 +0000},
	Date-Modified = {2015-11-19 10:29:22 +0000},
	Journal = {Journal of Machine Learning Research},
	Number = {1},
	Pages = {1593--1623},
	Publisher = {JMLR. org},
	Title = {The no-U-turn sampler: Adaptively setting path lengths in Hamiltonian Monte Carlo},
	Volume = {15},
	Year = {2014}}

@article{Ne2011,
	Author = {Neal, R. M.},
	Date-Added = {2015-07-28 12:38:38 +0000},
	Date-Modified = {2017-04-16 13:58:40 +0000},
	Journal = {Handbook of {M}arkov {C}hain {M}onte {C}arlo},
	Pages = {113-162},
	Title = {{MCMC} using {H}amiltonian dynamics},
	Volume = {2},
	Year = {2011}}

@book{Vershynin,
    AUTHOR = {Vershynin, Roman},
     TITLE = {High-dimensional probability},
    SERIES = {Cambridge Series in Statistical and Probabilistic Mathematics},
    VOLUME = {47},
 PUBLISHER = {Cambridge University Press, Cambridge},
      YEAR = {2018}
}

@article{liu2024autostep,
  title={AutoStep: Locally adaptive involutive MCMC},
  author={Liu, Tiange and Surjanovic, Nikola and Biron-Lattes, Miguel and Bouchard-C{\^o}t{\'e}, Alexandre and Campbell, Trevor},
  journal={arXiv preprint}, number={2410.18929},
  year={2024}
}

@article{chen2025new,
  title={New affine invariant ensemble samplers and their dimensional scaling},
  author={Chen, Yifan},
  journal={arXiv preprint arXiv:2505.02987},
  year={2025}
}

@article{foreman2013emcee,
  title={emcee: the MCMC hammer},
  author={Foreman-Mackey, Daniel and Hogg, David W and Lang, Dustin and Goodman, Jonathan},
  journal={Publications of the Astronomical Society of the Pacific},
  volume={125},
  number={925},
  pages={306},
  year={2013},
  publisher={IOP Publishing}
}

@article{goodman2010ensemble,
  title={Ensemble samplers with affine invariance},
  author={Goodman, Jonathan and Weare, Jonathan},
  journal={Communications in applied mathematics and computational science},
  volume={5},
  number={1},
  pages={65--80},
  year={2010},
  publisher={Mathematical Sciences Publishers}
}

@misc{BoOb2024,
      title={Mixing of the No-U-Turn Sampler and the Geometry of Gaussian Concentration}, 
      author={Nawaf Bou-Rabee and Stefan Oberdörster},
      year={2024},
      eprint={2410.06978},
      archivePrefix={arXiv}
}

@article{chevallier2025towards,
  title={Towards practical PDMP sampling: Metropolis adjustments, locally adaptive step-sizes, and NUTS-based time lengths},
  author={Chevallier, Augustin and Power, Sam and Sutton, Matthew},
  journal={arXiv preprint}, 
    number={2503.11479},
  year={2025}
}

@phdthesis{stofer1988some,
  title={Some geometric and numerical methods for perturbed integrable systems},
  author={Stofer, Daniel Martin},
  year={1988},
  school={ETH Zurich}
}

@article{calvo1993development,
  title={The development of variable-step symplectic integrators, with application to the two-body problem},
  author={Calvo, Mari Paz and Sanz-Serna, Jes{\'u}s Mar{\'\i}a},
  journal={SIAM Journal on Scientific Computing},
  volume={14},
  number={4},
  pages={936--952},
  year={1993},
  publisher={SIAM}
}

@article{BoVa2012,
	Author = {B{ou-Rabee}, N. and V{anden-Eijnden}, E.},
	Date-Modified = {2017-04-16 22:13:28 +0000},
	Journal = {J Comput Phys},
	Pages = {2565-2580},
	Title = {A patch that imparts unconditional stability to explicit integrators for {L}angevin-like equations},
	Volume = {231},
	Year = {2012}}

@misc{turok2024sampling,
  title={Sampling From Multiscale Densities With Delayed Rejection Generalized Hamiltonian Monte Carlo},
  author={Turok, Gilad and Modi, Chirag and Carpenter, Bob},
 archivePrefix={arXiv},
  primaryClass={stat.CO},
    eprint={2406.02741},
  year={2024}
}

@article{modi2023delayed,
  title={Delayed rejection {H}amiltonian {M}onte {C}arlo for sampling multiscale distributions},
  author={Modi, Chirag and Barnett, Alex and Carpenter, Bob},
  journal={Bayesian Analysis},
  volume={1},
  number={1},
  pages={1--28},
  year={2023},
  publisher={International Society for Bayesian Analysis}
}

@article{kleppe2016adaptive,
  title={Adaptive step size selection for {H}essian-based manifold {L}angevin samplers},
  author={Kleppe, Tore Selland},
  journal={Scandinavian Journal of Statistics},
  volume={43},
  number={3},
  pages={788--805},
  year={2016},
  publisher={Wiley Online Library}
}

@misc{betancourt2013hamiltonianmontecarlohierarchical,
      title={Hamiltonian Monte Carlo for Hierarchical Models}, 
      author={M. J. Betancourt and Mark Girolami},
      year={2013},
      eprint={1312.0906},
      archivePrefix={arXiv},
      primaryClass={stat.ME}
}

@article{kleppe2022connecting,
  title={Connecting the dots: Numerical randomized {H}amiltonian {M}onte {C}arlo with state-dependent event rates},
  author={Kleppe, Tore Selland},
  journal={Journal of Computational and Graphical Statistics},
  volume={31},
  number={4},
  pages={1238--1253},
  year={2022},
  publisher={Taylor \& Francis}
}

@misc{AndrieuLeeLivingstone,
      title={A general perspective on the Metropolis-Hastings kernel}, 
      author={Christophe Andrieu and Anthony Lee and Sam Livingstone},
      year={2020},
      eprint={2012.14881},
      archivePrefix={arXiv},
      primaryClass={stat.CO},
}

@article{fu2025hamiltonian,
  title={Hamiltonian Descent Algorithms for Optimization: Accelerated Rates via Randomized Integration Time},
  author={Fu, Qiang and Wibisono, Andre},
  journal={arXiv preprint arXiv:2505.12553},
  year={2025}
}

@article{whalley2024randomized,
  title={Randomized time {R}iemannian manifold {H}amiltonian {M}onte {C}arlo},
  author={Whalley, Peter A and Paulin, Daniel and Leimkuhler, Benedict},
  journal={Statistics and Computing},
  volume={34},
  number={1},
  pages={48},
  year={2024},
  publisher={Springer}
}

@inproceedings{biron2024automala,
  title={auto{MALA}: Locally adaptive {M}etropolis-adjusted {L}angevin algorithm},
  author={Biron-Lattes, Miguel and Surjanovic, Nikola and Syed, Saifuddin and Campbell, Trevor and Bouchard-C{\^o}t{\'e}, Alexandre},
  booktitle={27th International Conference on Artificial Intelligence and Statistics},
  volume={PMLR 238},
  year={2024}
}

@article{DuKePeRo1987,
	Author = {Duane, S. and Kennedy, A. D. and Pendleton, B. J. and Roweth, D.},
	Date-Modified = {2017-02-05 02:46:14 +0000},
	Journal = {Physics Letters B},
	Pages = {216--222},
	Title = {Hybrid {M}onte-{C}arlo},
	Volume = {195},
	Year = {1987}}

@article{Ho1991,
	Author = {Horowitz, A. M.},
	Journal = {Phys Lett B},
	Pages = {247--252},
	Title = {A Generalized Guided {M}onte-{C}arlo Algorithm},
	Volume = {268},
	Year = {1991}}

@misc{betancourt2017conceptual,
  title={A conceptual introduction to Hamiltonian Monte Carlo},
  author={Betancourt, Michael},
    archivePrefix={arXiv},
    primaryClass={stat.ME},
    eprint={1701.02434},
  year={2017}
}

@article{GiCa2011,
	Author = {Girolami, M. and Calderhead, B.},
	Journal = {J R Statist Soc B},
	Pages = {123-214},
	Title = {Riemann manifold {L}angevin and {H}amiltonian {M}onte {C}arlo methods},
	Volume = {73},
	Year = {2011}}

@book{HaLuWa2010,
	Author = {Hairer, E. and Lubich, C. and Wanner, G.},
	Publisher = {Springer},
	Title = {Geometric Numerical Integration},
	Year = {2010}}

@book{LeRe2004,
	Author = {Leimkuhler, B. and Reich, S.},
	Publisher = {Cambridge University Press},
	Series = {Cambridge Monographs on Applied and Computational Mathematics},
	Title = {Simulating {H}amiltonian Dynamics},
	Year = {2004}}

@article{SherlockUrbasLudkin2023Apogee,
author = {Sherlock, Chris and Urbas, Szymon and Ludkin, Matthew},
title = {The apogee to apogee path sampler},
journal = {Journal of Computational and Graphical Statistics},
volume = {32},
number = {4},
pages = {1436--1446},
year = {2023},
publisher = {Taylor \& Francis}
}

@inproceedings{xu2021couplings,
  title={Couplings for Multinomial {H}amiltonian {M}onte {C}arlo},
  author={Xu, Kai and Fjelde, Tor Erlend and Sutton, Charles and Ge, Hong},
  booktitle={International Conference on Artificial Intelligence and Statistics},
  pages={3646--3654},
  year={2021},
  organization={PMLR}
}

@article{kleppe2019,
author = {Tore Selland Kleppe},
title = {Dynamically Rescaled {H}amiltonian {M}onte {C}arlo for {B}ayesian Hierarchical Models},
journal = {Journal of Computational and Graphical Statistics},
volume = {28},
number = {3},
pages = {493--507},
year = {2019}
}

@misc{Glatt-Holtzetal2024,
      title={Sacred and Profane: from the Involutive Theory of MCMC to Helpful Hamiltonian Hacks}, 
      author={Nathan E. Glatt-Holtz and Andrew J. Holbrook and Justin A. Krometis and Cecilia F. Mondaini and Ami Sheth},
      year={2024},
      eprint={2410.17398},
      archivePrefix={arXiv},
      primaryClass={stat.CO},
}

@article{Glatt-Holtz_Krometis_Mondaini_2023, title={On the accept–reject mechanism for {M}etropolis–{H}astings algorithms}, volume={33}, DOI={10.1214/23-aap1948}, number={6B}, journal={The Annals of Applied Probability}, author={Glatt-Holtz, Nathan and Krometis, Justin and Mondaini, Cecilia}, year={2023}}

@article{doi:10.1111/j.1538-4616.2007.00014.x,
author = {Stock, James H. and Watson, Mark W.},
title = {Why Has {U.S.} Inflation Become Harder to Forecast?},
journal = {Journal of Money, Credit and Banking},
volume = {39},
year={2007},
number = {s1},
pages = {3-33},
keywords = {C53, E37, Phillips curve, trend‐cycle model, moving average, great moderation},
doi = {},
url = {},
eprint = {},
abstract = {We examine whether the U.S. rate of price inflation has become harder to forecast and, to the extent that it has, what changes in the inflation process have made it so. The main finding is that the univariate inflation process is well described by an unobserved component trend‐cycle model with stochastic volatility or, equivalently, an integrated moving average process with time‐varying parameters. This model explains a variety of recent univariate inflation forecasting puzzles and begins to explain some multivariate inflation forecasting puzzles as well.}
}

\clearpage

\appendix
\section{Pseudocode Implementation of WALNUTS} \label{sec:pseudocode}

\begin{mylisting}
    \begin{flushleft}
$\texttt{\bfseries WALNUTS}\!\left(\theta, \mu,     M, h, m_{\max}, \delta \right)$
    \vspace*{2pt}    \vspace*{2pt}
    \hrule
    \vspace*{2pt}
    \textrm{Inputs:}
 \begin{tabular}[t]{ll}
    $\theta \in \mathbb{R}^d$ & initial state
         \\[2pt] 
    $\mu:\mathbb{R}^{d} \rightarrow (0, \infty)$ & unnormalized target density 
     \\[2pt] 
    $M\in\R^{d\times d}$ & mass matrix
     \\[2pt] 
    $h > 0$ & macro step size
     \\[2pt] 
    $m_{\max} \in \mathbb{N}$ & maximum number of doublings  \\[2pt] 
$\delta >0$ &  maximum energy error   
    \end{tabular} 
    \\[2pt]
    Return: 
    \begin{tabular}[t]{ll}
    $\tilde{\theta} \in \mathbb{R}^d$ & next state
    \end{tabular}
    \vspace*{4pt}
    \hrule
    \vspace*{8pt}

    $\rho \sim \mathcal{N}(0, M)$  \\[4pt] 

   $(\tilde \theta, \tilde \rho) = (\theta_0,\rho_0) = (\theta, \rho)$ \\[4pt]

$w_0 = \mu(\theta_0) \; e^{-\frac{1}{2} \rho_0\tran M^{-1} \rho_0} $ \\[4pt] 

    $\calO=\left( (\theta_0,\rho_0) \right)$ \\[4pt]

    $\calW=\left( w_0 \right)$ \\[4pt]

    $B \sim \Unif(\{0,1\}^{m_{\max}})$ \\[4pt] 

    for $i$ from $1$ to $m_{\max}$:
   \vspace*{4pt}

    \qquad $\calO^{\old} = \calO =\left(  (\theta_a, \rho_a), \dots , (\theta_b, \rho_b) \right) $ \\[4pt] 

    \qquad $\calW^{\old} = \calW =\left( w_a, \dots ,  w_b \right) $ \\[4pt]

    \qquad if $B_i=1$: \\[4pt] 
    
    \qquad \qquad $\calO^{\ext},\calW^{\ext}=\texttt{extend-orbit-forward}(\theta_b, \rho_b, w_b,  \mu, M, h, \delta, 2^{i-1} ) $ \\[4pt] 

    \qquad \qquad $\calO =\calO^{\old} \odot \calO^{\ext} $ \\[4pt] 
    
    \qquad \qquad $\calW =\calW^{\old} \odot \calW^{\ext} $ \\[4pt]

    \qquad else:\hfill \\[4pt] 
    
    \qquad \qquad $\calO^{\ext},\calW^{\ext}=\texttt{extend-orbit-backward}(\theta_a, \rho_a, w_a, \mu, M, h, \delta, 2^{i-1} ) $ \\[4pt] 

   \qquad \qquad $\calO =\calO^{\ext} \odot  \calO^{\old}  $ \\[4pt] 
   
   \qquad \qquad $\calW =\calW^{\ext} \odot  \calW^{\old}  $ \\[4pt] 
   
    \qquad if $\texttt{sub-U-turn}(\calO^{\ext}, M)$: \\[4pt] 
    
    \qquad \qquad break \\[4pt]

     \qquad  $u\sim  \Unif((0,1))$ \\[4pt] 
%    \vspace*{4pt}

    \qquad if $u \leq \dfrac{\sum\calW^{\ext}}
    {\sum \calW^{\old}} :$ \\[4pt]

    \qquad\qquad $(\tilde \theta, \tilde \rho) \sim \cat\left(  \calO^{\ext}, \calW^{\ext} \right)$ \\[4pt] 
    
    \qquad if $\texttt{U-turn}(\calO, M)$: \\[4pt] 
    
    \qquad \qquad break \\[4pt] 

    return $\tilde \theta$ \\[4pt] 
    \vspace*{6pt}
    \hrule
    \caption{\it WALNUTS algorithm with biased progressive state selection.}
    \label{algo:WALNUTS}
    \end{flushleft}
\end{mylisting}

\begin{mylisting}[t]
    \begin{flushleft}
    $\texttt{\bfseries U-turn}(\calO, M)$
    \vspace*{2pt}
    \hrule
    \vspace*{2pt}
    \textrm{Inputs:}
    \begin{tabular}[t]{ll}
    % $a,b \in \mathbb{Z}$ & leapfrog orbit endpoints
    % \\[2pt] 
    % $(\theta^{(i)}, \rho^{(i)}) \in \mathbb{R}^{2d}$ & $i$-th leapfrog iterate where $i \in [a:b]$ \\[2pt]
   $\calO= \left((\theta^{\leftmost}, \rho^{\leftmost}), \dots, (\theta^{\rightmost}, \rho^{\rightmost})\right)$  
     & orbit, with $|\calO|$ a power of 2
    \end{tabular} 
    \vspace*{4pt}
    \hrule
    \vspace*{8pt}
     return $ (\rho^{\rightmost})^{\top} M^{-1} (\theta^{\rightmost} - \theta^{\leftmost}) <0~~\text{or}~~(\rho^{\leftmost})^{\top} M^{-1} (\theta^{\rightmost}  - \theta^{\leftmost}) <0$ \hfill 
    \vspace*{4pt}
    \hrule
    \caption{\it Check if the orbit $\calO$ satisfies the U-turn condition in Definition~\ref{defn:u-turn}.}
    \label{algo:U-turn}
    \end{flushleft}
\end{mylisting}
    
\begin{mylisting}[t]
    \begin{flushleft}
    $\texttt{\bfseries sub-U-turn}(\calO, M)$
    \vspace*{2pt}
    \hrule
    \vspace*{2pt}
    \textrm{Inputs:}
    \begin{tabular}[t]{ll}
    % $a,b \in \mathbb{Z}$ & leapfrog orbit endpoints
    % \\[2pt]  
    % $(\theta^{(i)}, \rho^{(i)}) \in \mathbb{R}^{2d}$ & $i$-th leapfrog iterate where $i \in [a:b]$ \\[2pt]
   $\calO = \calO^{\leftmost} \odot \calO^{\rightmost}$    
    & orbit, with $|\calO^\leftmost| = |\calO^\rightmost|$ a power of 2
    \end{tabular} 
    \vspace*{4pt}
    \hrule
    \vspace*{8pt}
    %{\footnotesize (INITIALIZE)} \\[2pt]
    %$\theta = \theta$
    %\\[12pt]
    if $\text{length}(\calO) < 2$: \ return \texttt{False}   \\[4pt]
    return $ \texttt{U-turn}(\calO, M) ~~\text{or}~~ \texttt{sub-U-turn}(\calO^{\text{left}},M)~~ \text{or}~~ \texttt{sub-U-turn}(\calO^{\text{right}},M)$ \hfill 
    \vspace*{4pt}
    \hrule
    \caption{\it Check if the orbit $\calO$ satisfies the sub-U-turn condition. The function \texttt{U-turn} is given in Listing~\ref{algo:U-turn}.}
    \label{algo:sub-U-turn}
    \end{flushleft}
\end{mylisting}

\begin{mylisting}[t]
    \begin{flushleft}
$\texttt{\bfseries extend-orbit-forward}\!\left(\theta_b, \rho_b, w_b,  \mu,  M, h, \delta, L \right)$
\vspace*{2pt}
    \hrule
    \vspace*{2pt}
    \textrm{Inputs:}
\begin{tabular}[t]{ll}
$(\theta_b, \rho_b) \in \mathbb{R}^{2d}$ & initial position, momentum \\[2pt] 
$w_b \in \mathbb{R}_{\ge 0}$ & initial weight \\[2pt] 
    $\mu:\mathbb{R}^{d} \rightarrow (0, \infty)$ & unnormalized target density 
     \\[2pt] 
    $M\in\R^{d\times d}$ & mass matrix
     \\[2pt] 
    $h > 0$ & macro step size
     \\[2pt] 
$\delta >0$ &  maximum energy error  \\[2pt] 
$L \in \mathbb{N}$ &  number of macro steps \\[2pt] 
\end{tabular}
   \\[2pt]
    Return: 
    \begin{tabular}[t]{ll}
      $\calO^{\ext} \in \left( \mathbb{R}^{2d} \right)^{L}$ & orbit \\[2pt]
       $\calW^{\ext} \in  (\mathbb{R}_{\ge 0})^{L}$ & weights
\\[2pt]
    \end{tabular}
    \vspace*{4pt}
    \hrule
    \vspace*{8pt}

    for $i$ from $b+1$ to $b+L$: \\[4pt]

     \qquad $\ell \sim p_{\micro}( \cdot \mid \micro(\theta_{i-1}, \rho_{i-1}, \mu,  M, h, \delta) ) $ \\[4pt]

     \qquad $(\theta_{i}, \rho_{i}) = \Phi_{h \ell^{-1}}^{\ell}(\theta_{i-1}, \rho_{i-1} )$ \\[4pt]
     
     \qquad $w_i = \dfrac{\mu(\theta_{i}) e^{-\frac{1}{2} (\rho_i)\tran M^{-1} \rho_i}}{\mu(\theta_{i-1}) e^{-\frac{1}{2} (\rho_{i-1})\tran M^{-1} \rho_{i-1}}}  \dfrac{p_{\micro}( \ell  \mid \micro(\theta_{i}, -\rho_{i}, \mu, M, h, \delta) )}{p_{\micro}( \ell \mid \micro(\theta_{i-1}, \rho_{i-1}, \mu, M, h, \delta) )}  w_{i-1}$ \\[4pt]

\null  return $\calO^{\ext} = \left( (\theta_{b+1}, \rho_{b+1}), (\theta_{b+2}, \rho_{b+2}), \dots, (\theta_{b+L}, \rho_{b+L}) \right)$ and $\calW^{\ext} = (w_{b+1}, w_{b+2}, \dots, w_{b+L})$  \hfill 
\vspace*{6pt}
    \hrule
    \caption{\it Generates an orbit of $L$ macro steps and associated weights over the index range $(b+1){:}(b+L)$. }
    \label{algo:extend-forward}
    \end{flushleft}
\end{mylisting}

\begin{mylisting}[ht]
    \begin{flushleft}
$\texttt{\bfseries extend-orbit-backward}\left(\theta_a, \rho_a, w_a,  \mu, M, h, \delta, L \right)$
\vspace*{2pt}
\hrule
\vspace*{2pt}
\textrm{Inputs:}
\begin{tabular}[t]{ll}
$(\theta_a, \rho_a) \in \mathbb{R}^{2d}$ & initial position, momentum \\[2pt] 
$w_a \in \mathbb{R}_{\ge 0}$ & initial weight \\[2pt] 
    $\mu:\mathbb{R}^{d} \rightarrow (0, \infty)$ & unnormalized target density 
    \\[2pt] 
    $M\in\R^{d\times d}$ & mass matrix
     \\[2pt] 
    $h > 0$ & macro step size
     \\[2pt] 
$\delta >0$ &  maximum energy error  \\[2pt] 
$L \in \mathbb{N}$ &  number of macro steps  \\[2pt] 
\end{tabular}
   \\[2pt]
    Return: 
    \begin{tabular}[t]{ll}
      $\calO^{\ext} \in \left( \mathbb{R}^{2d} \right)^{L}$ & orbit \\
       $\calW^{\ext} \in  \mathbb{R}^{L}$ & weights

    \end{tabular}
\vspace*{4pt}
\hrule
\vspace*{8pt}
%{\footnotesize (INITIALIZE)} \\[2pt]
%$\theta = \theta$
%\\[12pt]
for $i$ from $a-1$ down to $a-L$: \\[4pt] 

     \qquad $\ell \sim p_{\micro}( \cdot \mid \micro(\theta_{i+1}, -\rho_{i+1}, \mu, M, h, \delta) ) $ \\[4pt] 

     \qquad $(\theta_{i}, \rho_{i}) = \mathcal{F} \circ \Phi_{h \ell^{-1}}^{\ell}\circ \mathcal{F} (\theta_{i+1}, \rho_{i+1} )$ \\[4pt] 
     
     \qquad $w_i =\dfrac{\mu(\theta_{i}) e^{-\frac{1}{2} (\rho_i)\tran M^{-1} \rho_i}}{\mu(\theta_{i+1}) e^{-\frac{1}{2} (\rho_{i+1})\tran M^{-1} \rho_{i+1}}}  \dfrac{p_{\micro}( \ell  \mid \micro(\theta_{i}, \rho_{i}, \mu, M, h, \delta) )}{p_{\micro}( \ell  \mid \micro(\theta_{i+1}, -\rho_{i+1}, \mu, M, h, \delta) )}  w_{i+1}$ \\[4pt]

\null  return $\calO^{\ext} = \left( (\theta_{a-L}, \rho_{a-L}) , \dots, (\theta_{a-2}, \rho_{a-2}), (\theta_{a-1}, \rho_{a-1}) \right)$ and $\calW^{\ext} = (w_{a-L},  \dots, w_{a-2}, w_{a-1})$  \hfill 
\vspace*{4pt}
\hrule
\caption{\it Generates an orbit of $L$ macro steps and associated weights over the index range $(a-L){:}(a-1)$. }
\label{algo:extend-backward}
\end{flushleft}
\end{mylisting}

\begin{mylisting}[t]
    \begin{flushleft}
    $\texttt{\bfseries micro}(\theta, \rho,  \mu, M, h,  \delta)$
    \vspace*{2pt}
    \hrule
    \vspace*{2pt}
    \textrm{Inputs:}
\begin{tabular}[t]{ll}
$(\theta, \rho) \in \mathbb{R}^{2d}$ & initial position, momentum \\[2pt] 
$h_0 > 0$ &  macro step size  \\[2pt] 
     $\mu:\mathbb{R}^{d} \rightarrow (0, \infty)$ & unnormalized target density 
    \\[2pt] 
    $M\in\R^{d\times d}$ & mass matrix
     \\[2pt] 
    $h > 0$ & macro step size
     \\[2pt] 
$\delta >0$ &  maximum energy error  \\[2pt] 
\end{tabular}
   \\[2pt]
    Return: 
    \begin{tabular}[t]{ll}
      $\ell \in \mathbb{N}$ & step size reduction factor 
    \end{tabular}
    \vspace*{4pt}
    \hrule
    \vspace*{8pt}
    %{\footnotesize (INITIALIZE)} \\[2pt]
    $(\theta^{(0)}, \rho^{(0)}) = (\theta, \rho)$

    for $i$ from $0$ to $\infty$   \\[4pt]

    \qquad $\ell=2^{i}$ \\[4pt]

    \qquad $h_{\micro}=h \; \ell^{-1}$ \\[4pt]

    \qquad     $H_{\max} = H_{\min} = -\log \mu(\theta^{(0)}) + \frac12 (\rho^{(0)})\tran M^{-1} \rho^{(0)}$ \\[4pt]

    \qquad for $j$ from $0$ to $\ell-1$  \\[4pt]

\qquad \qquad $\pos{\rho}{j + 1/2} = \pos{\rho}{j} + \frac{1}{2} \; h_{\micro} \; \nabla \log \mu(\pos{\theta}{j})$ 
\\[4pt]

\qquad \qquad $\pos{\theta}{j + 1} = \pos{\theta}{j} + h_{\micro} \; M^{-1} \pos{\rho}{j + 1/2}$ 
\\[4pt]

\qquad \qquad $\pos{\rho}{j + 1} = \pos{\rho}{j + 1/2} + \frac{1}{2} \; h_{\micro} \; \nabla \log \mu(\pos{\theta}{j + 1})$ 
\\[4pt]

    \qquad \qquad  $H^{(j+1)} = - \log \mu(\theta^{(j+1)}) + \frac12 (\rho^{(j+1)})\tran M^{-1} \rho^{(j+1)}$ \\[4pt]

    \qquad \qquad $H_{\max} = \max( H^{(j+1)}, H_{\max} )$ \\[4pt]
    
    \qquad \qquad $H_{\min} = \min( H^{(j+1)}, H_{\min} )$  \\[4pt]

    \qquad if $H_{\max} - H_{\min} \le \delta$  \\[4pt]
    
    \qquad \qquad return $\ell$ 
    
    \vspace*{4pt}
    \hrule
    \caption{\it Computes the smallest power-of-two integer $\ell$ such that $\ell$ leapfrog steps of micro step size $h_{\micro} = h \ell^{-1}$ keep the energy error within the user-specified threshold $\delta$, as defined in \eqref{eq:micro}.}
    \label{algo:micro_lf}
    \end{flushleft}
\end{mylisting}

\clearpage

\section{Practical Implementation}

\label{sec:practical}

This section outlines practical implementation strategies that reduce the computational and memory overhead of WALNUTS.   The pseudocode in Listing~\ref{algo:WALNUTS}, including dependencies, is intended to provide the reader with a clear conceptual understanding of the WALNUTS transition kernel. 
However, directly implementing this version can lead to substantial redundant computations and high memory usage.  In particular, storing all positions and momenta in an orbit $\mathcal{O}$ may require $O(d \cdot 2^{m_{\max}})$ memory, where $d$ is the dimension of the state space and $m_{\max}$ is the maximum number of doublings.

\subsection{Error-controlled leapfrog integration steps}

In the pseudocode for orbit expansion (Listings~\ref{algo:extend-forward} and~\ref{algo:extend-backward}), the calls to \texttt{micro} and the leapfrog integration steps used to build the final orbit are implemented as separate function calls for both forward and backward directions. While this modular structure aids conceptual understanding and code readability, it introduces computational overhead, as the same leapfrog steps may be recomputed multiple times.

In practice, the contents of the \texttt{for} loops in \texttt{extend-orbit-forward} and \texttt{extend-orbit-backward} are consolidated into a single routine that performs all relevant operations together. This combined implementation reduces redundant computation by reusing results where possible. The forward case is described below; the backward case is analogous.

\begin{itemize}
\item Compute $\widetilde{\ell}=\texttt{micro}(\theta_{i-1},\rho_{i-1},\mu,M,h,\delta)$, and store $(\theta_i, \rho_i) = \Phi_{h \widetilde{\ell}^{-1}}^{\widetilde{\ell}}(\theta_{i-1}, \rho_{i-1})$, which is already computed as part of the \texttt{micro} procedure.
\item Sample $\ell\sim p_{\micro}(\cdot|\widetilde{\ell})$.
\item If $\ell = \widetilde{\ell} = 0$: reuse the stored result $(\theta_i, \rho_i)$.
\item If $\ell = \widetilde{\ell} > 0$: use the precomputed result $(\theta_i, \rho_i) = \Phi_{h \widetilde{\ell}^{-1}}^{\widetilde{\ell}}(\theta_{i-1}, \rho_{i-1})$. To compute the weight factor, we then run \texttt{micro}$(\theta_i, -\rho_i, \mu, M, h, \delta)$ using step sizes corresponding to micro step count $2^j$ for $j = 0, 1, \dots, \log_2(\ell) - 1$. (There is no need to run the backward \texttt{micro} iteration for $j = \log_2(\ell)$ or higher due to reversibility of the integration; see Lemma~\ref{lemma:forward_backward}.)
 %If no acceptable path with fidelity lower than $\ell$ is found, weigth factor $=1$.
\item Otherwise (i.e., if $\ell \ne \widetilde{\ell}$): proceed as in \texttt{extend-orbit-forward}. In this case, there are no obvious computational savings compared to the original algorithm, and the backward \texttt{micro} loop generally cannot be terminated early.
\end{itemize}

Note that there can be a substantial computational cost difference between the cases $\ell = \widetilde{\ell}$ and $\ell \ne \widetilde{\ell}$. This may partly explain why WALNUTS configured with $p_{\texttt{micro}}(\widetilde{\ell} \mid \widetilde{\ell}) = 1$ often appears to run faster in practice, even if it terminates orbit construction slightly earlier and more frequently.

\subsection{Memory-efficient orbit construction}

\begin{figure}
    \centering
    \includegraphics[width=0.99\linewidth]{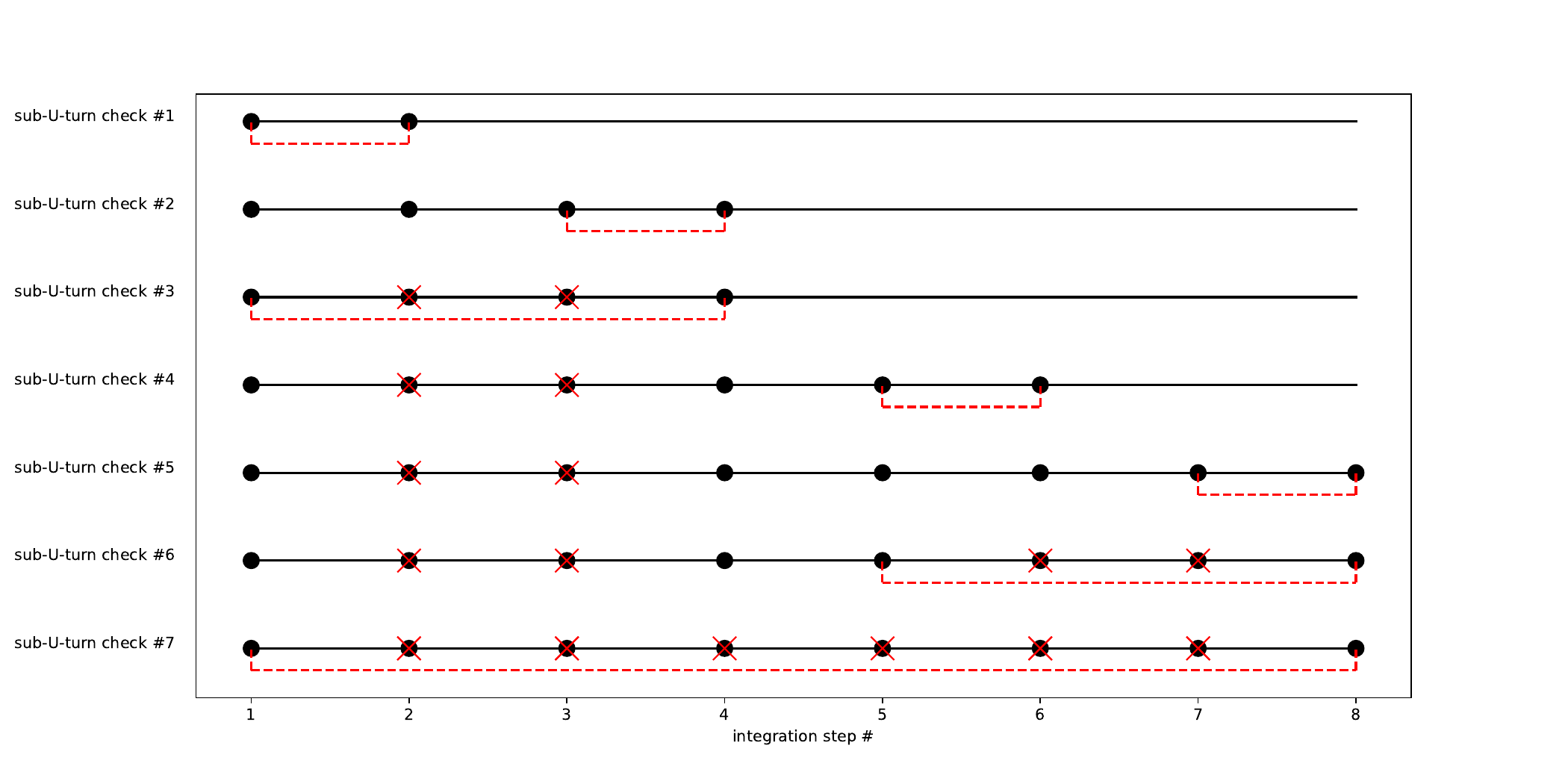}
    \caption{Sequence of sub-U-turn checks during forward orbit expansion with 8 integration steps. Black dots indicate computed integration steps; red Xs mark steps that have been discarded. Red underbraces denote the pairs of states involved in each U-turn check. A state can be safely deleted once it is covered by a completed U-turn check that spans beyond it on both sides.
}
    \label{fig:U-turn-seq}
\end{figure}

\begin{comment}
\begin{mylisting}[t]
    \begin{flushleft}
    $\texttt{\bfseries sub-U-turn}(\calO, M)$
    \vspace*{2pt}
    \hrule
    \vspace*{2pt}
    \textrm{Inputs:}
    \begin{tabular}[t]{ll}
    % $a,b \in \mathbb{Z}$ & leapfrog orbit endpoints
    % \\[2pt]  
    % $(\theta^{(i)}, \rho^{(i)}) \in \mathbb{R}^{2d}$ & $i$-th leapfrog iterate where $i \in [a:b]$ \\[2pt]
   $\calO = \calO^{\leftmost} \odot \calO^{\rightmost}$    
    & orbit, with $|\calO^\leftmost| = |\calO^\rightmost|$ a power of 2
    \end{tabular} 
    \vspace*{4pt}
    \hrule
    \vspace*{8pt}
    %{\footnotesize (INITIALIZE)} \\[2pt]
    %$\theta = \theta$
    %\\[12pt]
    if $\text{length}(\calO) < 2$: \ return \texttt{False}   \\[4pt]
    $A = \texttt{sub-U-turn}(\calO^{\text{left}},M)$ \\
    $B = \texttt{sub-U-turn}(\calO^{\text{right}},M)$ \\
    $C = \texttt{U-turn}(\calO, M)$\\
    return $ A ~~\text{or}~~ B ~~ \text{or}~~ C$ \hfill 
    \vspace*{4pt}
    \hrule
    \caption{\it More detailed, but otherwise equivalent variant of Listing \ref{algo:sub-U-turn}. [It would be most helpful to substibute this one for Listing \ref{algo:sub-U-turn}]}
    \label{algo:sub-U-turn-2}
    \end{flushleft}
\end{mylisting}
\end{comment}

As also described in \cite{HoGe2014}, a more memory-efficient implementation, requiring only $O(d \cdot m_{\max})$ memory, is possible by checking the sub-U-turn condition and performing sampling from $\cat\left( \calO^{\ext}, \calW^{\ext} \right)$ concurrently during the execution of \texttt{extend-orbit-forward} or \texttt{extend-orbit-backward}. Since this approach is thoroughly explained in \cite{HoGe2014}, we provide only a brief summary here.

The categorical sampling step is conceptually straightforward. Suppose we wish to sample a random index $I \in \{1, \dots, N\}$ according to a sequence of (possibly unnormalized) weights $w_1, \dots, w_N$. This can be done in an online manner using the following recursive procedure:
\begin{itemize}
    \item Initialize $I = 1$ and $S_1 = w_1$.
    \item For $i = 2, \dots, N$:
    \begin{itemize}
        \item Set $S_i = S_{i-1} + w_i$,
        \item Draw $U_i \sim \mathrm{Unif}(0,1)$,
        \item With probability $w_i / S_i$, set $I = i$; otherwise, leave $I$ unchanged.
    \end{itemize}
\end{itemize}
At the end of this procedure, the variable $I$ is distributed according to the categorical distribution with probabilities proportional to $w_1, \dots, w_N$.

Sub-U-turn checks can also be performed concurrently with the  \texttt{extend-orbit-forward} and \texttt{extend-orbit-backward} routines. In fact, the algorithm in Listing~\ref{algo:sub-U-turn} is designed to operate sequentially on the states of the orbit as they are computed. Once a U-turn check has been performed between two states, say $i$ and $j$, the intermediate states $i+1, \dots, j-1$ are no longer needed and can be discarded to reduce memory usage.

Figure~\ref{fig:U-turn-seq} illustrates this concurrent procedure for \texttt{extend-orbit-forward} with 8 integration steps. New integration steps are computed in iterations 1, 2, 4, and 5 of the algorithm. At no point is the full sub-orbit stored in memory.

Concurrent U-turn checks can also lead to computational savings: if a U-turn is detected early (e.g., during check \#3 in Figure~\ref{fig:U-turn-seq}), then the remaining integration steps can be skipped, as no proposal will be drawn from that region and the orbit expansion stops.

Similar considerations apply when the weight factor from the reversibility check becomes zero. Specifically, if the reversibility check yields zero weights on both sides of the current state $(\theta_0, \rho_0)$, the orbit-building process can be terminated early. However, if only one side yields a zero weight, orbit construction must continue in the other direction.

\section{Exploring Tuning Strategies}\label{sec:tuning}

 This section focuses on tuning the macro step size $h$, which determines the temporal spacing between candidate states along the final orbit, and the energy error threshold $\delta$, which controls local step size refinement within each macro step. WALNUTS is designed to be robust to user input, but performance can be significantly improved by selecting $h$ and $\delta$ carefully in a warmup phase; that is, an initial portion of the Markov chain devoted to adapting algorithm parameters prior to collecting samples from the posterior. In this part, we explore approaches to tuning both parameters with the goal of balancing efficiency and robustness.

\subsection{Energy error threshold \texorpdfstring{$\delta$}{delta}}

To appropriately set the energy error threshold $\delta$ in the micro step selection function
\begin{equation} \label{eq:micro_app}
\micro(\theta,\rho, \mu, M, h, \delta) := \min\left\{ \ell \in 2^{\mathbb{N}} \mid H^+_\ell - H^-_\ell \leq \delta \right\},
\end{equation}
we aim to control the local energy error introduced by each integration step while ensuring robust global behavior of the sampler.

When choosing the energy error threshold $\delta > 0$, the primary objective is to prevent numerical divergences by bounding the energy error introduced in each macro integration step. An additional, equally important objective is to control the total change in energy along an orbit $\calO$. Doing so helps ensure that the distribution of the integration time index $i$, which determines the next step of the chain, is broadly spread across the orbit.

To see this, consider the idealized setting in which the leapfrog integrator is replaced by the exact Hamiltonian flow, and assume for simplicity that the orbit length is fixed. In this case, there is no energy error, and the normalized integration time index $i$ follows an exact symmetric triangular distribution centered at $0.5$ (see Theorem~\ref{thm:law_of_iprime}). When the orbit length is tuned to match the local scale of the target, this integration time distribution enables effective exploration along that scale.

In practice, the leapfrog integrator induces an energy error, which can distort this ideal integration time distribution and lead to biased or inefficient proposals --- particularly if the energy error varies substantially along the orbit. Controlling the energy error is therefore essential: it allows the sampler to approximate the idealized behavior locally, so that movement occurs effectively along the relevant scale of the target.

While it is possible to combine WALNUTS' local step size adaptation with global orbit-level adaptation, as proposed in \cite{BouRabeeCarpenterKleppeMarsden2024}, doing so may compromise the fine-grained adaptivity that WALNUTS is designed to achieve; namely, reducing the step size only in regions where it is truly necessary. Instead, we advocate a strategy that tunes the local energy error threshold $\delta$ so that the \emph{global} energy change along the orbit satisfies
\begin{equation}
H_{\calO}^+ - H_{\calO}^- < \Delta,
\label{eq:global-energy-error-thresh}
\end{equation}
with high probability, where $H_{\calO}^+$ and $H_{\calO}^-$ denote the maximum and minimum Hamiltonians observed along the orbit, and $\Delta > 0$ is a global energy error threshold. As shown in \cite{BouRabeeCarpenterKleppeMarsden2024}, the quantity $\exp(-(H_{\calO}^+ - H_{\calO}^-))$ is strongly correlated with the probability $P(i \ne 0)$ of selecting a nontrivial proposal. Enforcing the bound \eqref{eq:global-energy-error-thresh} with a prescribed probability level $p_a$  (e.g., $p_a=0.95$) therefore promotes stable and reliable sampling.

WALNUTS employs a \emph{variable step size leapfrog integrator}, which adaptively adjusts the step size within each macro integration interval to satisfy a local energy error threshold $\delta$. Specifically, the step size is recursively halved until the estimated energy error in a single step satisfies
\[
|H^+_{\ell} - H^-_{\ell}| \leq \delta.
\]
If an orbit consists of $n$ macro steps, each satisfying the above bound, then in the worst case, where all local energy errors accumulate constructively, then the total energy error satisfies
\[
H_{\calO}^+ - H_{\calO}^- \leq n \delta.
\]
While this is a conservative upper bound, it highlights the need for orbit-level control. In practice, the signs of energy errors often fluctuate, especially in well-conditioned regions of the target, leading to partial cancellation. Nonetheless, the \emph{envelope} of the energy error (i.e., the difference between the largest and smallest Hamiltonians along the orbit) may grow with orbit length. To capture this possibility, we model the accumulated energy variation as
\[
H_{\calO}^+ - H_{\calO}^- \approx \mathcal{K} \delta,
\]
where $\mathcal{K}$ is a \emph{local-to-orbit inflation factor}, a random variable whose distribution depends on the number of integration steps, the macro step size $h$, and the local geometry of the target distribution. In the ideal case of uncorrelated stepwise errors, $\mathcal{K}$ remains $\mathcal{O}(1)$ even for long orbits, due to frequent sign changes.

To ensure that \eqref{eq:global-energy-error-thresh} holds with probability at least $p_a$, we propose adapting $\delta$ during warmup by recording the observed inflation factor
\begin{equation}
\mathcal{K} = \frac{H_{\calO}^+ - H_{\calO}^-}{\delta}
\label{eq:inflation-fac-obs}
\end{equation}
for each constructed orbit. These values are collected across warmup iterations, and $\delta$ is then updated according to
\begin{equation}
\delta = \frac{\Delta}{q^\mathcal{K}_{p_a}},
\end{equation}
where $q^\mathcal{K}_{p_a}$ denotes the empirical $p_a$-quantile of the recorded inflation factors. Since each orbit contributes a single value of $\mathcal{K}$, this adaptation procedure is both memory-efficient and computationally inexpensive.

This strategy enables automatic and interpretable tuning of $\delta$, aligning local step size control with global energy error constraints, and thereby supporting the stability and robustness of the WALNUTS sampler.

\subsection{Macro step size \texorpdfstring{$h$}{h} (given \texorpdfstring{$\delta$}{delta})}

Recall that the function $\micro(\theta, \rho, \mu, M, h, \delta)$ returns the number of step size halvings needed to reduce the energy error below the threshold $\delta$, as defined in \eqref{eq:micro_app}.   In particular,  if no micro step size halving is needed, then $\micro(\theta, \rho, \mu, M, h, \delta) = 1$.  

To tune the macro step size $h$, the primary goal is efficiency. If $h$ is too small, WALNUTS will take unnecessarily many leapfrog steps, even in regions where larger steps would be sufficient. On the other hand, if $h$ is too large, the \texttt{micro} routine will frequently trigger step size halving to satisfy the energy error threshold $\delta$, which wastes computation. Moreover, overly large values of $h$ can cause the integration paths to become too coarse to detect U-turns or other important geometric features, potentially leading to poor proposal quality.

To navigate this trade-off, we propose choosing $h$ such that the probability of needing no step size halving is equal to a target value $\Gamma \in (0,1)$:
\begin{equation}
P\left(\texttt{micro}(\theta,\rho,\mu,M,h_0,\delta)=1 \right)= \Gamma \label{eq:h-adapt-target}
\end{equation}
This criterion means that, in a typical region of phase space, the initial macro step size $h$ is accepted without refinement in a fraction $\Gamma$ of cases. A value such as $\Gamma = 0.8$ typically works well in practice: most steps proceed without modification, while \texttt{micro} still adjusts the step size in regions where smaller steps are needed.

In short, this objective helps ensure that $h$ is large enough to be efficient across most of the target distribution, without compromising robustness or geometric resolution in more challenging areas.

\end{document}